\documentclass[journal]{IEEEtran}

\usepackage{booktabs}
\usepackage{amssymb}
\usepackage{amsmath}
\usepackage{bm}
\usepackage{nccmath}
\usepackage{enumitem}
\usepackage{algorithm}
\usepackage{algpseudocode}
\usepackage{amssymb}
\usepackage{graphicx}
\usepackage{caption}
\usepackage{xcolor}

\newtheorem{definition}{Definition}[section]
\newtheorem{lemma}{Lemma}[section]
\newtheorem{assumption}{Assumption}[section]
\newenvironment{proof}{\begin{IEEEproof}}{\end{IEEEproof}}

\newtheorem{theorem}{Theorem}[section]
\newtheorem{remark}{Remark}[section]

\ifCLASSINFOpdf

\else

\fi

\begin{document}

\title{Deception Equilibrium Analysis for Three-Party Stackelberg Game with Insider}

\author{Xiaoyu~Xin,
        Gehui~Xu,
        Yiguang~Hong
}



\author{Xiaoyu~Xin, Gehui~Xu, and Yiguang~Hong%
\thanks{This work was supported by the National Key
Research and Development Program of China under Grant 2022YFA1004700,
the National Natural Science Foundation of China under Grant 62573319, and Shanghai Municipal Science
and Technology Major Project under Grant 2021SHZDZX0100.}%
\thanks{X. Xin and Y. Hong are with the Shanghai
Research Institute for Intelligent Autonomous Systems, Tongji University, Shanghai, 201210, China, and Yiguang Hong is also
with the Department of Control Science and Engineering, Tongji University, Shanghai, 201804, China (e-mail: 2151471@tongji.edu.cn; yghong@tongji.edu.cn).}%
\thanks{G. Xu is with the Department of Electrical and Electronic Engineering, Imperial College London, London SW7 2AZ, UK (e-mail: g.xu@imperial.ac.uk).}%
}

\maketitle

\begin{abstract}
    This paper investigates strategic interactions within a three-party deception security game involving a defender, an insider, and external attackers. We propose a robust deception mechanism where the leader manipulates game parameters perceived by followers to enhance defense performance when followers
operate under misperceived and uncertain observation. Specifically, we propose a unified three-party leader–follower game framework and introduce the concepts of Deception Stackelberg equilibria (DSE) and Hyper Nash equilibria (HNE), which generalize classical two-player Stackelberg and deception games.
We develop necessary and sufficient conditions for the consistency between DSE and HNE, ensuring that the defender's utility remains invariant when the hierarchical structure degenerates into a simultaneous-move scenario. Moreover,  we propose a scalable hypergradient-based algorithm with established convergence guarantees for seeking DSE, efficiently addressing the computational challenges posed by non-smooth and set-valued best-response mappings. Finally, we apply theoretical analysis to practical scenarios in secure wireless communication and defense against insider-assisted false data injection attacks. 
\end{abstract}

\begin{IEEEkeywords}
    Three-party game; Deception Stackelberg equilibrium; Hyper Nash equilibrium; Hypergradient-based algorithm
\end{IEEEkeywords}

\IEEEpeerreviewmaketitle

\section{Introduction}

\IEEEPARstart{S}ECURITY games describe scenarios in which a protected system defends against malicious attacks, and have been widely applied in cybersecurity problems, such as wireless communication security \cite{8454822,7801047}, defense against insider-
assisted false data injection (IA-FDI) attacks in microgrids \cite{9863873,gonen2020false}, and adversarial machine learning \cite{10273619}.
Beyond conventional two-party attacker–defender models, the threat posed by insiders as third parties has emerged as a critical yet often overlooked aspect of cybersecurity.
According to a recent global report \cite{PonemonDTEX2025}, the average annual cost associated with insider-related incidents increased by nearly 50\% between 2019 and 2025.
An insider, with privileged access to system resources, defense strategies, and sensitive information, may deliberately or inadvertently expose critical internal information to external attackers~\cite{cappelli2012cert}.
As a result, three-party security games involving a defender, an insider, and one or multiple attackers have become an emerging research topic.
The corresponding classical decision-making paradigm for such interactions is the leader–follower model, in which the defender as the leader dominates the decision process by anticipating the follower’s reaction, while the follower selects its best-response (BR) strategy after observing the leader’s action.
The corresponding well-known equilibrium is the Stackelberg equilibrium (SE).  

Typically, Stackelberg games in the literature assume that each player’s perceived and observed information accurately reflects the underlying environment~\cite{10520318}.
However, misinformation from deception involves the active manipulation of followers’ observations through actions such as belief manipulation, information hiding, or camouflage, and is prevalent in many scenarios~\cite{10.1145/3214305}.
For example, in secure wireless communications~\cite{FANG2017153}, a source node may forge channel state information to influence an eavesdropper’s jamming strategy, thereby improving the secure transmission rate.
Similarly, in microgrids~\cite{liu2021defense}, a defender may deliberately disclose signals indicating stricter monitoring and scrutiny to induce insiders to cooperate with internal security mechanisms, rather than underestimating the risk of betrayal and leaking sensitive information to false data injection attackers for personal gain.
To model strategic interactions in deceptive environments, the Deception Stackelberg equilibrium (DSE) has been introduced \cite{Cheng_2022, nguyen2019imitative}, in which the leader manipulates followers’ perceptions while optimizing its own utility.
When followers’ BR mappings are set-valued, two classical tie-breaking assumptions arise, leading to the Weak and Strong Deception Stackelberg equilibria (WDSE and SDSE) \cite{loridan1996weak,guo2018inducibilitystackelbergequilibriumsecurity},
characterizing the lower and upper bounds of the leader’s achievable utility, respectively.

Although a leader may exploit inherent information asymmetry to mislead followers, in practice, followers may lack the ability or incentive to adopt BR strategies due to limited observation capabilities~\cite{chen2019interdependent}, environmental disturbances~\cite{nguyen2009security}, or intentional information concealment~\cite{8485952}.
For example, in wireless interference scenarios, an interferer may suffer from observation errors caused by uncertainty in time-varying channel states \cite{7076591}.
In cyber-physical power systems, strict confidentiality of security configurations and operational compartmentalization may prevent an insider from observing the defender’s specific strategy \cite{9863873}. Consequently, the leader cannot ascertain whether followers will adhere to the leader–follower paradigm and thus cannot guarantee the preservation of its dominance under deception.
Hypergame theory provides a framework for analyzing strategic interactions under misinformation and heterogeneous player cognitions in non-dominant settings.
Its central idea is to decompose a complex interaction into multiple subjective games, each reflecting a player’s own perception of the strategic environment \cite{kovach2015hypergame}.
The corresponding solution concept is the Hyper Nash equilibrium (HNE) \cite{ne,Sasaki_2008}, in which each player adopts a BR strategy within its own subjective game. By shaping followers’ perceptions through deceptive signaling, the leader can align the DSE with the HNE, thereby preserving its utility despite the loss of hierarchical dominance.

Beyond the robustness of deception strategies, the computation of DSE also needs investigation, due to the non-smooth and potentially set-valued nature of followers’ BR mappings.
While several recent works have studied hierarchical game problems, existing three-level game models largely lack effective algorithms and related convergence for computing optimal deception strategies~\cite{NEURIPS2021_3de568f8}.
Moreover, most available hierarchical optimization methods are tailored to traditional bilevel games and rely on single-valued BR assumptions, thereby overlooking the tie-breaking issues that naturally arise in practical deception scenarios~\cite{Dempe04032019}.

Therefore, the motivation of this paper is to design optimal and utility-robust deception strategies for the leader in a three-party security game under followers' perception bias and observation uncertainty. 


\subsection{Contributions}
\begin{enumerate}
  \item We formulate a deception three-party Stackelberg game that incorporates active deception into hierarchical decision-making. The proposed model provides a unified formulation that includes the original three-player setting without misinformation and the two-level leader–follower misinformation game as instances. We establish the existence conditions for an SDSE and a WDSE.


\item We establish a necessary and sufficient condition under which each WDSE coincides with an HNE. An analogous condition holds for SDSE. This result guarantees robustness of the leader’s utility under follower behavioral uncertainty that may eliminate the leader’s dominant position.

\item 

We propose a scalable hypergradient-based algorithm and establish its convergence to the WDSE and SDSE. Moreover, when the BR cannot be exactly obtained, or WDSE may not exist, the proposed algorithm is guaranteed to converge to an $\epsilon$-WDSE. Numerical case studies in secure wireless communication and insider-assisted false data injection defense verify the theoretical findings and demonstrate the effectiveness of the algorithm.

\end{enumerate}

\subsection{Related work}
Hierarchical decision-making models have been extensively studied to characterize interactions in complex systems, ranging from cybersecurity to network management. While early works focused on two-player interactions, recent research has shifted towards three-party hierarchical structures to capture cascading strategic effects. For instance, in secure wireless networks, \cite{8281119} modeled a macro base station (MBS) managing interfering small base stations (SBSs) to thwart eavesdroppers, creating a tri-level resource allocation chain. Similarly, \cite{8927954} analyzed a tiered pricing game where a top-layer source prices energy for a mid-level interferer based on bottom-layer constraints. These studies establish the foundation of single-leader-multi-follower frameworks. However, most existing works assume perfect information flow between layers, neglecting the strategic implications of observation failures or manipulated signaling in adversarial contexts. 

Misinformed games are not limited to passive observation errors but also encompass strategic deception, when players deliberately manipulate others’ beliefs or perceptions.
Such interactions are commonly studied using Bayesian game models or the Hypergame framework.
Bayesian games rely on the common prior assumption,   differing only in their private information \cite{SHI202249}.
However, in strategic deception scenarios, the deceiver’s objective is often to induce a fundamental misconception of the game itself, such as the opponent’s perceived strategy sets or utility functions \cite{Cheng_2022,ZHAO2024121619}.
These forms of cognitive manipulation violate the common prior assumption underlying Bayesian games.
The Hypergame framework explicitly allows players to hold subjective and potentially inconsistent representations of the game, thereby providing a more natural and direct modeling tool for strategic deception driven by cognitive misalignment.


The choice of equilibrium is always crucial in strategic decision-making.
NE and SE are two classical solution concepts for simultaneous and sequential decision-making schemes, respectively.
The relationship between NE and SE has been extensively studied in differential game settings \cite{10520318,10.5555,sengupta2019general}.
However, under strategic deception or perception bias, players may optimize against misperceived objectives or strategy sets, fundamentally altering the equilibrium structure.
This challenge is further compounded in multi-agent settings with an unidentified third-party insider, leading to hierarchical interactions beyond the traditional two-player paradigm.
In such three-party hierarchical hypergames, the relationship between DSE and HNE remains largely unexplored.



Solving three-party game problems under strategic deception is computationally challenging, largely due to the non-smooth and potentially set-valued nature of best-response (BR) mappings.
In terms of equilibrium computation, \cite{grontas2024bighypebestintervention} develops nonsmooth analysis–based algorithms for bilevel games and establishes convergence guarantees.
Alternatively, relaxation-based approaches \cite{doi:10.1137/S1052623499361233} reformulate equilibrium constraints into standard nonlinear programs (NLPs) by progressively driving a relaxation parameter to zero.
While effective for low-dimensional or smooth problem instances, these methods often suffer from scalability limitations and numerical instability in high-dimensional settings involving discontinuous or ambiguous deception mechanisms.
To overcome these challenges, recent advances in hypergradient estimation and implicit differentiation provide a promising direction for scalable equilibrium computation \cite{grontas2024bighypebestintervention}.
However, existing studies are largely restricted to two-player formulations and single-valued BR assumptions.
Extending hypergradient-based methods to three-party games with set-valued BR mappings deserves further investigation.

\section{Three-party Deception Game Model}

In this section, we first present the notations and preliminaries in Section II.A. We then develop a unified deception game model in Section II.B.

\subsection{Notation and Preliminaries}
  \noindent\textbf{Notation}: Let   $\mathbb{N}$ denote the set of non-negative integers, $\mathbb{R}^n$ denote the $n$-dimensional Euclidean space equipped with the standard Euclidean norm $\|\cdot\|$, $\|A\|$ denote the operator norm of the matrix $A$,  $ \operatorname{col}\{x_{1},\dots,x_{n}\} \!  = \!(x^\top_{1}, \dots,x^\top_{n})^\top $, $x_i\in\mathbb{R}^m$, $i=1,\dots,n$.

For a differentiable scalar-valued function $f:\mathbb{R}^m\to \mathbb{R}$, $\nabla f$ denotes its gradient.
For a differentiable vector-valued mapping $F:\mathbb{R}^n \to \mathbb{R}^m$, we denote by {$\mathrm{J}F(x)\in\mathbb{R}^{m\times n}$} the Jacobian of $F$ at $x$.
More generally, for $F:\mathbb{R}^n \times \mathbb{R}^p \to \mathbb{R}^m$, $\mathrm{J}_1F(x,y)\in\mathbb{R}^{m\times n}$ and $\mathrm{J}_2F(x,y)\in\mathbb{R}^{m\times p}$ denote the partial Jacobians of $F$ with respect to its first and second arguments, respectively.
When $m=1$, these reduce to the partial gradients $\nabla_1 F$ and $\nabla_2 F$.

For a point $x\in\mathbb{R}$, let $\delta(x)$ denote a neighborhood of $x$, $\mathring{\delta}(x)$ its punctured neighborhood, and $\delta_{-}(x)$ and $\delta_{+}(x)$ its left and right neighborhoods, respectively.

 \noindent\textbf{Convex Analysis and Operator Theory}:  Let $K \subset \mathbb{R}^n$ be a non-empty closed convex set and $F: K \to \mathbb{R}^n$ be a continuous mapping. The variational inequality problem, denoted as $\text{VI}(K, F)$, is to find $x^* \in K$ such that $\langle F(x^*), x - x^* \rangle \geq 0$ for all $x \in K$. The mapping $F$ is $\mu$-strongly monotone if there exists $\mu > 0$ such that
    $ \langle F(x) - F(y), x - y \rangle \geq \mu\|x - y\|^2, \ \text{for all} \ x, y \in \mathbb{R}^n$. The operator $\mathbb{P}_K(\cdot)$ denotes the orthogonal projection onto the convex and closed set $K$ in Euclidean space, i.e., 
$\mathbb{P}_K(x) = \underset{y\in K}{\mathrm{argmin}} \|x - y\|.$ \\

\noindent\textbf{Nonsmooth Analysis}: The mapping $F$ is $L$-Lipschitz continuous on $K$ if there exists $L > 0$ such that
    $ \|F(x) - F(y)\| \leq L\|x - y\|, \text{for all}\  x, y \in K$. For a locally Lipschitz function $f: \mathbb{R}^n \to \mathbb{R}^m$, the Clarke Jacobian at $x$, denoted by $\partial f(x)$, is the convex hull of the limits of Jacobians at nearby differentiable points: $\partial f(x) = \text{conv} \{ \lim \mathrm{J}f(x_i): x_i \to x, x_i \in \Omega_f \}$, where $\Omega_f$ is the set of points where $f$ is differentiable.  Let $F:\mathbb{R}^n\to\mathbb{R}^m$ be a locally Lipschitz function. A set-valued mapping $\mathcal{J}_F:\mathbb{R}^n \rightrightarrows \mathbb{R}^{m\times n}$ is called a conservative Jacobian of $F$ if it has a closed graph, is locally bounded, and satisfies
$
\frac{d}{dt}F(\rho(t)) \in \{A\dot{\rho}(t):A\in \mathcal{J}_F(\rho(t))\}
\quad \text{for a.e. } t,
$
for every absolutely continuous curve $\rho:\mathbb{R}\to\mathbb{R}^n$.
A locally Lipschitz function is called path differentiable if it admits a conservative Jacobian.
If $F:\mathbb{R}^n\to\mathbb{R}$ is path differentiable and $\mathcal{J}_F$ is a conservative gradient of $F$, then
$
\mathcal{J}_F(x)=\{\nabla F(x)\}
\quad \text{for a.e. } x\in\mathbb{R}^n
$
\cite[Theorem 1]{bolte2021conservative}.

\subsection{Deception Game Model}\label{sec:problem_formulation}

Consider a three-party leader–follower security game, where the defender protects the system against external attacks in the presence of an unidentified insider, who may either cooperate with the defender to support system operation or collude with the attacker for private gain.

The defender makes the first decision and acts as the top-level leader, denoted by $X$. The insider then responds to the defender’s action and subsequently leads the attacker as the middle-level follower, denoted by $Y$. Finally, the attackers make their decisions based on the actions of both the defender and the insider. These attackers constitute the bottom-level followers, denoted by $\boldsymbol{Z}=\{Z_1,Z_2,\ldots,Z_N\}$, where $N$ is the number of bottom-level players and $Z_i$ represents the $i$-th bottom-level follower.

The strategy sets for the players are defined as follows. The top-level player $X$ chooses a strategy $x$ from its set $\Omega_x = [x_{\min}, x_{\max}]$. Similarly, the middle-level player $Y$ chooses $y$ from $\Omega_y = [y_{\min}, y_{\max}]$. Each bottom-level player $Z_i$, for $i=1, \ldots, N$, selects a strategy $z_i$ from the strategy set $\Omega_{z,i} = [z_{i,\min}, z_{i,\max}]$. Define $\boldsymbol{z}=\text{col}\{z_1,z_2,\ldots,z_N\} \in {\Omega}_{\boldsymbol{z}}$ as the collective strategy vector of the bottom-level players, where $z_i \in \Omega_{z,i}, {\Omega}_{\boldsymbol{z}}=\prod_{i=1}^N\Omega_{z,i}\subset \mathbb{R}^N$. Then define  $\boldsymbol{z}_{-i}=\text{col}\{z_1,z_2,\ldots,z_{i-1},z_{i+1},\ldots,z_N\}$ as strategy profile of all bottom-level players except for $Z_i$.

  Define $U_X:\Omega_x\times \Omega_y\times {\Omega}_{\boldsymbol{z}}\to \mathbb{R}$, $U_Y:\Omega_x\times \Omega_y\times {\Omega}_{\boldsymbol{z}}\to \mathbb{R}$, and $U_{z_i}:\Omega_x\times \Omega_y\times {\Omega}_{\boldsymbol{z}}\to \mathbb{R}$ as the utility functions of players $X$, $Y$, and $Z_i$. Let $U_{\boldsymbol{z}}=\{U_{z_1},U_{z_2},\ldots,U_{z_N}\}$. Each player aims to maximize its utility.

A key feature of this game is the introduction of strategic deception, where a leader possesses private knowledge and selectively discloses a manipulated parameter to induce favorable behaviors from followers. {Such scenarios are prevalent in security games involving incomplete information, exemplified by honeypot deception or network topology masking~\cite{pawlick2019game,10.1145/3214305}, where the defender deliberately reveals falsified system states to mislead attackers.}  Here, the top-level player $X$ can manipulate the game environment perceived by the followers. Let $\theta_0\in \mathbb{R}^m$ be the true parameter of the game. Player $X$ can select a deception parameter $\theta$ from a deception set $\Theta \subseteq \mathbb{R}^m$ to alter the followers' perception of the game, with the goal of maximizing its own utility. The followers, $Y$ and $\boldsymbol{Z}$, are unaware of this manipulation.

The leader X strategically selects not only an action $x$ but also a deception parameter $\theta$ from a set $\Theta$ to influence the followers' decisions. The followers, unaware of the deception, perceive $\theta$ as the true parameter of the game. The leader's goal is to choose a pair $(x, \theta)$ that maximizes its own utility $U(x,y,\boldsymbol{z},\theta_0)$,  which gives rise to the game formulation:
\begin{equation}
\noindent{\mathcal{G}(\theta_0,\Theta)=\{\{X,Y,\boldsymbol{Z}\},\Omega\times \Theta, \{U_x,U_y,U_{\boldsymbol{z}}\},\theta_0\}}
\end{equation}
where $\Omega=\Omega_x\times\Omega_y\times\Omega_{\boldsymbol{z}}$.
If the leader cannot choose  the  deception parameter, then  $\Theta=\{\theta\}$ is fixed, that is, 
\begin{equation}
{\mathcal{G}(\theta_0,\theta) = \{\{X, Y, \boldsymbol{Z}\}, \Omega, \{U_x, U_y, U_{\boldsymbol{z}}\}, \{\theta_0,\theta\}\}}
\end{equation}
The classical leader-follower game can be viewed as a special case of our proposed model, which arises when $\Theta=\{\theta_0\}$.

It follows from many security and socio-economic scenarios~\cite{10520318,7357413,9448511} that the game model  can be precisely formulated as follows:
\begin{align}\label{utility}
&\max_{x \in \Omega_x \hphantom{,i}} U_X(x, y, \boldsymbol{z}, \theta_0) = B(x, \theta_0) + f_1(y, \boldsymbol{z}) x\notag\\
&\max_{y \in \Omega_y \hphantom{,i}} U_Y(x, y, \theta) = f_2(x, \theta) y + f_3(x)\\
&\max_{z_i \in \Omega_{z,i}} U_{Z_i}(x, y, z_i, \boldsymbol{z}_{-i}, \theta) = f_{4_i}(x, y, z_i, \boldsymbol{z}_{-i}, \theta)\notag
\end{align}

In this formulation, we explicitly distinguish the information sets, while the leader optimizes based on the true parameter $\theta_0$, the followers $Y$ and $Z_i$ are unaware of the deception and make their decisions based on the manipulated parameter $\theta$.
The specific formulation of the leader's  objective $U_X$ decomposes the leader's total utility into two parts: 
(1)  utility $B(x, \theta_0):\Omega_x\to \mathbb{R}$, representing the  utility from its own 
decision $x$ ; (2) an interaction term $f_1(y, z) x$, 
capturing gains or costs from interactions with followers $Y, \boldsymbol{Z}$, scaled by the leader's 
own effort $x$. 
The utility function \( U_Y \) is linear in \( y \), with its slope and intercept represented solely by functions of \( x \), namely $f_2: \Omega_x \times \Theta \to \mathbb{R}$ and $f_3: \Omega_x \to \mathbb{R}$. The term $f_2(x, \theta) y$ represents variable revenue, depending 
on its own decision $y$ and a price $f_2(x, \theta)$ set by the upper level. 
The term $f_3(x)$ is a fixed cost or benefit independent of $y$. The utility $U_{Z_i}$ is $f_{4_i}(x, y,z_i,\boldsymbol{z}_{-i},\theta)$ with
different mathematical forms in different practical cases. 
Many problems can be formulated by the developed three-party deception game $\mathcal{G}(\theta_0,\Theta)$. For example, in secure wireless communication~\cite{8454822}, the source node, relay, and eavesdroppers act as the leader, middle follower, and bottom followers, respectively; similarly, in defense against IA-FDI attacks~\cite{9863873,liu2021defense}, the defender, insider, and attackers serve as the leader, middle follower, and bottom followers.


Next, we introduce the decision-making scheme of the game model and its equilibrium solutions.

For any given upper-level decisions $(x, y)$ and the leader's manipulated parameter $\theta$, the $N$ bottom-level followers $Z_i$ engage in a simultaneous game. 
Each bottom player $i$ attempts to maximize its utility function $U_{Z_i}(x, y, z_i, z_{-i}, \theta)$. The outcome of this game is an NE \cite{doi:10.1137/1.9781611971132}, denoted as $\boldsymbol{z}^*$, which is  a strategy profile such that no player $i \in \{1, \dots, N\}$ can unilaterally improve their utilities by deviating, formally satisfying
\begin{equation}
  U_{Z_i}(x, y, z_i^*, z_{-i}^*, \theta) \geq U_{Z_i}(x, y, z_i, z_{-i}^*, \theta),\, \forall z_i \in \Omega_{z_i}. 
\end{equation}
These  equilibria constitute the bottom-level BR mapping
\begin{multline}
\mathrm{BR}_{\boldsymbol Z}(x,y,\theta)
=\{\,\boldsymbol z^*| \ U_{Z_i}(x,y,z_i^*,\boldsymbol z_{-i}^*,\theta)\\
\quad \ge U_{Z_i}(x,y,z_i,\boldsymbol z_{-i}^*,\theta),\
\forall i\in\{1,\ldots,N\}, z_i\in\Omega_{z_i}\,\}.
\end{multline}
Given $(x,\theta)$, the middle-level follower $Y$ solves the following optimization problem to maximize $U_Y$,
\[
\max_{y \in \Omega_y} U_Y(x, y, {\theta}).
\]
From the utility function of $Y$, its decision rule is as follows:
\begin{enumerate}[label=(\alph*)]
    \item when $f_2(x, \theta) > 0$, $y = y_{\max}$,
    \item when $f_2(x, \theta) < 0$, $y = y_{\min}$,
    \item when $f_2(x, \theta) = 0$, $y \in [y_{\min}, y_{\max}]$.
\end{enumerate}
We can construct the middle-level BR mapping:
\begin{equation}
  \text{BR}_Y(x, \theta) =
  \begin{cases}
    y_{\max} &f_2(x, \theta) > 0\\
    y_{\min} &f_2(x, \theta) < 0\\
    [y_{\min}, y_{\max}] &f_2(x, \theta) = 0
  \end{cases}
\end{equation}

The leader $X$, positioned at the top of the decision hierarchy, can perfectly anticipate the reactions of $Y$ and $\boldsymbol{Z}$. Consequently, its objective is to select an optimal pair $(x, \theta)$ that maximizes its own utility under the true parameter $\theta_0$,
\begin{equation}
\begin{aligned}
& \max_{x \in \Omega_x,\ \theta \in \Theta} U_X(x, y, \boldsymbol{z}, \theta_0)  \\
& \text{subject to } y \in \text{BR}_Y(x, \theta)  \\
& \phantom{\text{subject to }} \boldsymbol{z} \in \text{BR}_{\boldsymbol{Z}}(x, y, \theta) 
\end{aligned}
\end{equation}
The leader's optimization proceeds in two steps.  First, for a given manipulated parameter $\theta$, it selects an optimal decision $x$. 
Then, among all admissible manipulated parameters, it chooses the one that maximizes its utility.  In such a decision-making framework, the equilibrium solution is referred to as the DSE.

\begin{definition}
    {For a deception parameter set $\Theta$, the tuple $(x^*, y^*, z^*, \theta^*)$ constitutes a {DSE} if
    \begin{equation}
      \begin{aligned}
        &(x^*, \theta^*) \in \underset{x \in \Omega_x, \theta \in \Theta}{\arg\max}   U_X(x, y, z, \theta_0) \\
        & \text{subject to } y \in \text{BR}_Y(x, \theta)  \\
        & \phantom{\text{subject to }} \boldsymbol{z} \in \text{BR}_{\boldsymbol{Z}}(x, y, \theta) 
      \end{aligned}
    \end{equation}
    where $y^* \in \mathrm{BR}_Y(x^*, \theta^*)$ and $z^* \in \mathrm{BR}_{\boldsymbol{Z}}(x^*, y^*, \theta^*)$ are the corresponding equilibrium strategies of the followers.}
\end{definition}

Based on the leader’s assumptions about the follower’s decision-making preferences, we introduce two key subclasses of DSE: SDSE and WDSE.
WDSE represents the leader adopting a pessimistic strategy, assuming the follower will choose the action within their BR set that minimizes the leader's utility. Conversely, SDSE represents the leader adopting an optimistic strategy,
assuming the follower will choose the action within their BR set that maximizes the leader’s utility.
\begin{definition}\label{DSE}
  For a deception parameter set $\Theta$, the tuple $(x^*, y^*, \boldsymbol{z}^*, \theta^*)$ constitutes a {WDSE} if
    \begin{equation} \label{eq:wdse}
      \begin{aligned}
        &(x^*, \theta^*) \in \underset{x \in \Omega_x, \theta \in \Theta}{\arg\max}  \min_y \min_{\boldsymbol{z}}  U_X(x, y, \boldsymbol{z}, \theta_0) \\
        & \text{subject to } y \in \text{BR}_Y(x, \theta)  \\
        & \phantom{\text{subject to }} \boldsymbol{z} \in \text{BR}_{\boldsymbol{Z}}(x, y, \theta) 
      \end{aligned}
    \end{equation}
 and $(y^*, \boldsymbol{z}^*)$ are the specific responses that attain the minimum in \eqref{eq:wdse}.
\end{definition}

Similar to Definition~\ref{DSE},  the tuple $(x^*, y^*, \boldsymbol{z}^*, \theta^*)$ corresponds to an SDSE by replacing the \(\min\) operators in \eqref{eq:wdse} with \(\max\).
In general, the SDSE exists, whereas the WDSE may not. The existence proof for the SDSE and an example demonstrating the nonexistence of the WDSE are provided in Section III. Since WDSE may not exist,  we introduce the concept of $\epsilon$-WDSE \cite{bai2021sampleefficientlearningstackelbergequilibria}.
\begin{definition}\label{epsilon-WDSE}
  {
   A strategy profile $(x^*, y^*, \boldsymbol{z}^*, \theta^*)$  is said to be an $\epsilon$-WDSE  if   for all \(\theta \in \Theta\) and \(x \in \Omega_x\),
  \begin{equation}
    \begin{split}
    &\min_{y\in \text{BR}_Y(x^*,\theta^*)} \min_{\boldsymbol{z}\in \text{BR}_{\boldsymbol{Z}}(x^*,y^*,\theta^*)} U_X(x^*,y,\boldsymbol{z},\theta_0)\geq \\ &\min_{y\in \text{BR}_Y(x,\theta)} \min_{\boldsymbol{z}\in \text{BR}_{\boldsymbol{Z}}(x,y,\theta)} U_X(x,y,\boldsymbol{z},\theta_0)-\epsilon
    \end{split}
  \end{equation}
  where the constant $\epsilon>0$.}
\end{definition}

When \(\Theta = \{\theta\}\), i.e., the leader cannot alter the deception parameter \(\theta\), the equilibrium to this game is a misperception Stackelberg Equilibrium \cite{Cheng_2022}. When there is no deception parameter in the entire game, i.e., \(\Theta = \emptyset\), the equilibrium of the game reduces to a standard Stackelberg equilibrium \cite{10520318}. 

On the other hand, due to deception, the problem can also be formulated as a hypergame  with HNE \cite{Sasaki_2008},
where each player selects the optimal strategy according to their subjective perception of the game structure and the opponents' actions. The information asymmetry here is primarily characterized by the followers optimizing their strategies with respect to the manipulated parameter $\theta$, while the followers assume that all other participants, including the leader, are also optimizing their strategies with respect to $\theta$. The leader, however, possesses knowledge of the true parameter $\theta_0$ and is fully cognizant of the followers' optimization conducted under the manipulated parameter $ \theta$.

\begin{definition}
For the three-level leader-follower game $\mathcal{G}$ with deception parameter $\theta$, 
a strategy profile $({x}^{\diamond}, y^{\diamond}, {\boldsymbol{z}}^{\diamond})$ is said to be an HNE if
\begin{align*}
x^{\diamond} &\in \arg\max_{x \in \Omega_x} U_X(x, y^{\diamond}, {\boldsymbol{z}}^{\diamond},\theta_0),\\
y^{\diamond} &\in \arg\max_{y \in \Omega_y} U_Y(x^{\diamond}, y, {\boldsymbol{z}}^{\diamond},\theta),\\
{z}_{i}^{\diamond} &\in \arg\max_{z_i \in \Omega_{z,i}} U_{Z_i}(x^{\diamond}, y^{\diamond}, z_i, {\boldsymbol{z}}_{-i}^{\diamond},\theta), \quad \forall i \in \{1, \dots, N\}.
\end{align*}
\end{definition}

 We make the following basic assumptions~\cite{10520318,7357413}.
\begin{assumption}
    \label{a 2.1}
 {The utility functions $U_X(x,y,z,\theta_0)$, $U_Y(x,y,z,\theta)$, and $U_{Z_i}(x,y,z_i,z_{-i},\theta)$ are concave in their respective decision variables $x$, $y$, and $z_i$, and are continuously differentiable. The set of deception parameters $\Theta$ is finite and $|\Theta|=q$. }
\end{assumption}

Under Assumption~\ref{a 2.1}, finding an NE strategy profile $\boldsymbol{z}^* \in \Omega_Z$ is equivalent to solving the parametrized variational inequality $\mathrm{VI}(F(x,y,\cdot,\theta), \Omega_{Z})$ \cite{facchinei2003finite}, where the pseudogradient (PG) mapping $F$ is the stacked gradient vector of the followers' utility functions
\begin{equation}
  {
  F(x,y,\boldsymbol{z},\theta) \triangleq \text{col}\{-\nabla_{z_i} f_{4_i}(x,y,\boldsymbol{z},\theta) \}_{i \in N}.}
\end{equation}

The following assumption guarantees the uniqueness of the lower-level equilibrium, i.e., $\mathrm{BR}_{\boldsymbol{Z}}(x,y,\theta)$ is single-valued.
\begin{assumption}
\label{a 2.2}
  {For fixed $x,y$ and $\theta$, PG mapping $F(x,y,\cdot,\theta)$ is $\mu$-strongly monotone and $\kappa$-Lipschitz continuous. The mapping $F(x,y,\boldsymbol{z},\theta)$ is continuously differentiable for any fixed $\theta\in \Theta$.}
\end{assumption}


To guarantee the existence and computability of DSE, the following assumptions are introduced, which were widely used in \cite{grontas2024bighypebestintervention,10520318}.
\begin{assumption}
  \label{a 2.3}
  \begin{enumerate}
    \item $\nabla_x B(x,\theta_0), \nabla_z f_1(y,z)$ is Lipschitz continuous.
  \item $f_2(x,\theta)$ has finite zero points.
  \item 
The PG mapping $F$ is definable\footnote{Definable functions form a broad class that includes most functions used in optimization and machine learning, such as semialgebraic functions, as well as functions involving exponentials and logarithms. Definable functions are closed under standard operations (e.g., addition, multiplication, composition) and possess desirable properties such as path differentiability \cite{bolte2021nonsmooth,van1996geometric}.}  and there exists a constant $L_{F}$ that satisfies
  \begin{align*}
    \|\mathrm{J}_1 F(x,y,\boldsymbol{z})-\mathrm{J}_1 F(x,y,\hat{\boldsymbol{z}})\|\leq L_{F}\|\boldsymbol{z}-\hat{\boldsymbol{z}}\|\\  
    \|\mathrm{J}_3 F(x,y,\boldsymbol{z})-\mathrm{J}_3 F(x,y,\hat{\boldsymbol{z}})\|\leq L_{F}\|\boldsymbol{z}-\hat{\boldsymbol{z}}\|
  \end{align*}
  for any $\boldsymbol{z},\hat{\boldsymbol{z}} \in \Omega_{\boldsymbol{z}}$.

  \end{enumerate}
\end{assumption}

For a given $\theta$, $f_2(\cdot,\theta)$ has a finite number of zeros on $\Omega_x = [x_{\min}, x_{\max}]$, denoted by $\mathcal{Z}_{f_2}(\theta) = \{x_1, x_2, \ldots, x_{q_\theta}\}$ with $x_1 < x_2 < \dots < x_{q_\theta}$. With $x_0 \triangleq x_{\min}$ and $x_{q_\theta+1} \triangleq x_{\max}$, we can partition $\Omega_x$ into $q_\theta + 1$ closed sub-intervals $\Omega_{x,i} \triangleq [x_{i-1}, x_i]$, such that $\Omega_x = \bigcup\limits_{i=1}^{q_\theta+1} \Omega_{x,i}$. By construction, $f_2(x,\theta)$ does not change sign on any sub-interval $\Omega_{x,i}$. It is either less than or equal to 0 or greater than or equal to 0. Thus, $\text{BR}_Y(x,\theta)\equiv y_{\max}$ or $\text{BR}_Y(x,\theta)\equiv y_{\min}$.
On each interval $\Omega_{x,i}$, our problem can be reformulated as
\begin{equation}
  \max_{x \in \Omega_{x,i}}\hat{U}_X(x)=\max_{x \in \Omega_{x,i}}\ U_X(x, BR_Y(x,\theta) BR_{\boldsymbol{Z}}(x,y,\theta), \theta_0)
\end{equation}
Noting that the function $\hat{U}_X$ is piecewise continuous on $\Omega_x$, we will, therefore,  work over the interval \(\Omega_{x,i}\) when discussing the DSE.

In our model, the top-level leader can influence the decision environment of the lower-level followers by announcing a strategically chosen deception parameter $\theta$.
The DSE corresponds to the leader adopting an optimal deception strategy under a hierarchical leader–follower structure.
However, a standard DSE relies on the followers’ strict adherence to this hierarchy, an assumption that may be fragile in practice.
A more robust and desirable equilibrium arises when the leader’s optimal strategy under the hierarchical model also coincides with its optimal choice in a simultaneous-move game.

Therefore, this paper addresses the following two problems:
\begin{enumerate}
    \item Providing the conditions under which a WDSE or an SDSE is consistent with an HNE.
    \item Developing efficient algorithms to compute a WDSE (including $\epsilon$-WDSE) and an SDSE.
\end{enumerate}

\section{Existence and Consistency of Equilibria}

In this section, we first establish the existence of SDSE, WDSE, and HNE, and then explore the consistency between these equilibria.
\subsection{Equilibrium existence}
\begin{lemma}
  \label{lemma exsit}
  {
   Under Assumptions~\ref{a 2.1}-\ref{a 2.3},
   the SDSE set is nonempty. Thus, the DSE set is nonempty.}
\end{lemma}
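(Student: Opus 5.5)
Since $\Theta$ is finite (Assumption~\ref{a 2.1}), it suffices to show that for each fixed $\theta\in\Theta$ the optimistic value
\[
V_S(x,\theta)\triangleq\max_{y\in \mathrm{BR}_Y(x,\theta)} U_X\bigl(x,y,\mathrm{BR}_{\boldsymbol Z}(x,y,\theta),\theta_0\bigr)
\]
attains its maximum over $x\in\Omega_x$. Then take $\theta^*$ maximizing the finitely many optimal values $\max_x V_S(x,\theta)$. The resulting tuple $(x^*,y^*,\boldsymbol z^*,\theta^*)$ is an SDSE, and, being in particular a DSE, it also shows that the DSE set is nonempty.

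First I treat the bottom level. By Assumption~\ref{a 2.2}, for each $(x,y,\theta)$ the variational inequality $\mathrm{VI}(F(x,y,\cdot,\theta),\Omega_{\boldsymbol z})$ has a unique solution on the compact box $\Omega_{\boldsymbol z}$, so $\mathrm{BR}_{\boldsymbol Z}(x,y,\theta)$ is single-valued; call it $\boldsymbol z^*(x,y,\theta)$. I will show that $\boldsymbol z^*$ is continuous (indeed Lipschitz) in $(x,y)$ by the standard strong-monotonicity estimate. Write the two VI inequalities at $(x,y)$ and $(x',y')$, add them, and use $\mu$-strong monotonicity together with the continuity of $F$ in $(x,y)$ (Assumption~\ref{a 2.2}, $F$ continuously differentiable on a compact set). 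This gives
\[
\mu\|\boldsymbol z^*-\boldsymbol z'^*\|\le \|F(x,y,\boldsymbol z'^*,\theta)-F(x',y',\boldsymbol z'^*,\theta)\|\to 0 .
\]
Hence $g(x,y)\triangleq U_X(x,y,\boldsymbol z^*(x,y,\theta),\theta_0)$ is continuous on the compact set $\Omega_x\times\Omega_y$.

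Next I treat the middle level. $\mathrm{BR}_Y(x,\theta)$ is the solution set of maximizing the linear function $f_2(x,\theta)y$ over $\Omega_y$. Since $f_2$ is continuous, Berge's maximum theorem gives that $\mathrm{BR}_Y(\cdot,\theta)$ is nonempty, compact-valued and upper semicontinuous; concretely, at a zero of $f_2$ the set jumps up to all of $[y_{\min},y_{\max}]$. Consequently $V_S(x,\theta)=\max_{y\in\mathrm{BR}_Y(x,\theta)}g(x,y)$ is \emph{upper} semicontinuous in $x$: the max of a continuous function over a u.s.c.\ compact-valued correspondence is u.s.c. This is exactly the point where the optimistic tie-breaking matters. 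The pessimistic value would only be lower semicontinuous, which is consistent with the paper's remark that the WDSE may fail to exist. An u.s.c.\ function on the compact interval $\Omega_x$ attains its maximum, giving $x^*(\theta)$, with a maximizing $y^*\in\mathrm{BR}_Y(x^*,\theta)$ (compactness) and $\boldsymbol z^*=\boldsymbol z^*(x^*,y^*,\theta)$.

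Alternatively, and more in the spirit of the paper, I could use the partition $\Omega_x=\bigcup_i\Omega_{x,i}$ from Assumption~\ref{a 2.3}(2). On each closed piece, $y$ can be fixed to $y_{\max}$ or $y_{\min}$, so the value is continuous there, and at the finitely many zeros the optimistic value dominates both one-sided limits. I expect the main obstacle to be this semicontinuity step at the zeros of $f_2$, namely checking carefully that the optimistic selection is u.s.c.\ where $\mathrm{BR}_Y$ is set-valued. I would handle it by the sequential argument: for $x_k\to\bar x$ and maximizers $y_k$, pass to a convergent subsequence $y_k\to\bar y$ and use closedness of the graph of $\mathrm{BR}_Y$ to obtain $\bar y\in\mathrm{BR}_Y(\bar x,\theta)$.
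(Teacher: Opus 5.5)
Your proposal is correct and follows essentially the same route as the paper. Fix $\theta$ and show that the optimistic value $x\mapsto\max_{y\in\mathrm{BR}_Y(x,\theta)}U_X(x,y,\mathrm{BR}_{\boldsymbol Z}(x,y,\theta),\theta_0)$ is upper semicontinuous by extracting a convergent subsequence of maximizers and using the closed graph of $\mathrm{BR}_Y$ together with continuity of $U_X$ and $\mathrm{BR}_{\boldsymbol Z}$; then maximize over the compact $\Omega_x$ and pick the best $\theta$ from the finite $\Theta$. The only difference is that you justify continuity of $\mathrm{BR}_{\boldsymbol Z}$ explicitly through the strong-monotonicity estimate, which the paper simply asserts.
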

\begin{proof}
  For a fixed parameter $\theta$, $\text{BR}_Y(x,\theta)$ is upper semicontinuous. 
  We now proceed to prove that 
  \begin{equation}
      f_{\theta}(x)=\max_{y\in \text{BR}_Y(x,\theta)}U_X(x,\text{BR}_Y(x),\text{BR}_{\boldsymbol{Z}}(x,\text{BR}_Y(x),\theta))
  \end{equation}
 is upper semicontinuous. 
 
 For any sequence $\{x_n\}$ and $x_n\to x_0$, let $f_{\theta}(x_0)=U_X(x_0,y_0,\text{BR}_{\boldsymbol{Z}}(x_0,y_0,\theta))$.
  Let  $y_n \in \text{BR}_Y(x_n,\theta)$ and $f_{\theta}(x_n)=U_X(x_n,y_n,\text{BR}_Z(x_n,y_n,\theta))$. Consider a convergent subsequence \( \{y_{n_k}\} \) of \( \{y_n\} \), whose limit is \( y^* \). Since $\text{BR}_Y(x,\theta)$ is upper semicontinuous, $y^*\in \text{BR}_Y(x_0,\theta)$. From the definition of $y_n$, we have $f_{\theta}(x_{n_k})= U_x(x_{n_k},y_{n_k},\text{BR}_{\boldsymbol{Z}}(x_{n_k},y_{n_k},\theta))$. Then
 \begin{equation}
  \limsup f_{\theta}(x_n)= \limsup U_X(x_n,y_n,\text{BR}_{\boldsymbol{Z}}(x_n,y_n,\theta))
 \end{equation}
 Because $U_X$ and $\text{BR}_{\boldsymbol{Z}}$ are continuous, 
 \begin{equation}
  \begin{aligned}
  \limsup f_{\theta}(x_n)&= U_X(x_0,y^*,\text{BR}_{\boldsymbol{Z}}(x_0,y^*,\theta))\\
  f_{\theta}(x_0)&\geq U_X(x_0,y^*,\text{BR}_{\boldsymbol{Z}}(x_0,y^*,\theta))
  \end{aligned}
 \end{equation}

 Therefore, $f_{\theta}(x)$ is upper semicontinuous. Moreover, since $\Omega_x$  is a compact set, there exists a point $x_{\theta}$ such that $f_{\theta}(x)$ attains its maximum at $x_{\theta}$. 
 Among all pairs $(x_{\theta}, \theta)$, let $(x^*,\theta^*)$ be one that maximizes the leader's utility. Then
 \begin{equation}
 y^* \in \underset{y \in \text{BR}_Y(x^*, \theta^*)}{\arg\max} U_X(x^*, y, \text{BR}_{\boldsymbol{Z}}(x^*, y, \theta^*)).
 \end{equation} 
 \begin{equation}
  \boldsymbol{z}^*=\text{BR}_{\boldsymbol{Z}}(x^*, y^*, \theta^*)
 \end{equation}
 This pair $(x^*,y^*, \boldsymbol{z}^*,\theta^*)$ constitutes the SDSE.
\end{proof}

The nonexistence of the WDSE from the discontinuity of the set-valued mapping \( \text{BR}_Y(x) \). We provide an example as follows.
Let
\begin{equation*}
  \bar{U}_X(x,\theta)=\min_{y\in \text{BR}_Y(x,\theta)} \min_{\boldsymbol{z}\in \text{BR}_{\boldsymbol{Z}}(x,y,\theta)} U_X(x,y,\boldsymbol{z})
\end{equation*}
Then the tuple $(x^*,y^*,\boldsymbol{z}^*,\theta^*)$ is a WDSE such that $x^*\in\arg\max \bar{U}_X(x,\theta)$. It should be noted that, in many cases, $\bar{U}_X(x,\theta)$ does not attain a maximum value. Let
\begin{align*}
  &U_X(x,y,\boldsymbol{z})=-2x^2+2x-yx\\
  &U_Y(x,y,\theta)=(x-1)(x-2)y\\
  &x\in [0,3], y\in[-1,1],\Theta=\emptyset
\end{align*} 
Then \begin{equation}
  \bar{U}_X(x,\theta)=\begin{cases}
  -2x^2+x, \quad &x\leq 1\\
  -2x^2+3x, \quad &x\in(1,2)\\
  -2x^2+x, \quad &x\geq 2
  \end{cases}
\end{equation}
It is clear that $\bar{U}_X(x,\theta)$ has no maximum value,
 as illustrated in Fig.~\ref{figexample 1}. Fortunately, for any $\epsilon > 0$, the $\epsilon$-WDSE is guaranteed to exist. 
 {We establish the existence of the $\epsilon$-WDSE in the following and discuss the conditions under which the WDSE exists.}
\begin{figure}[t]

\centering
\includegraphics[width=0.9\columnwidth]{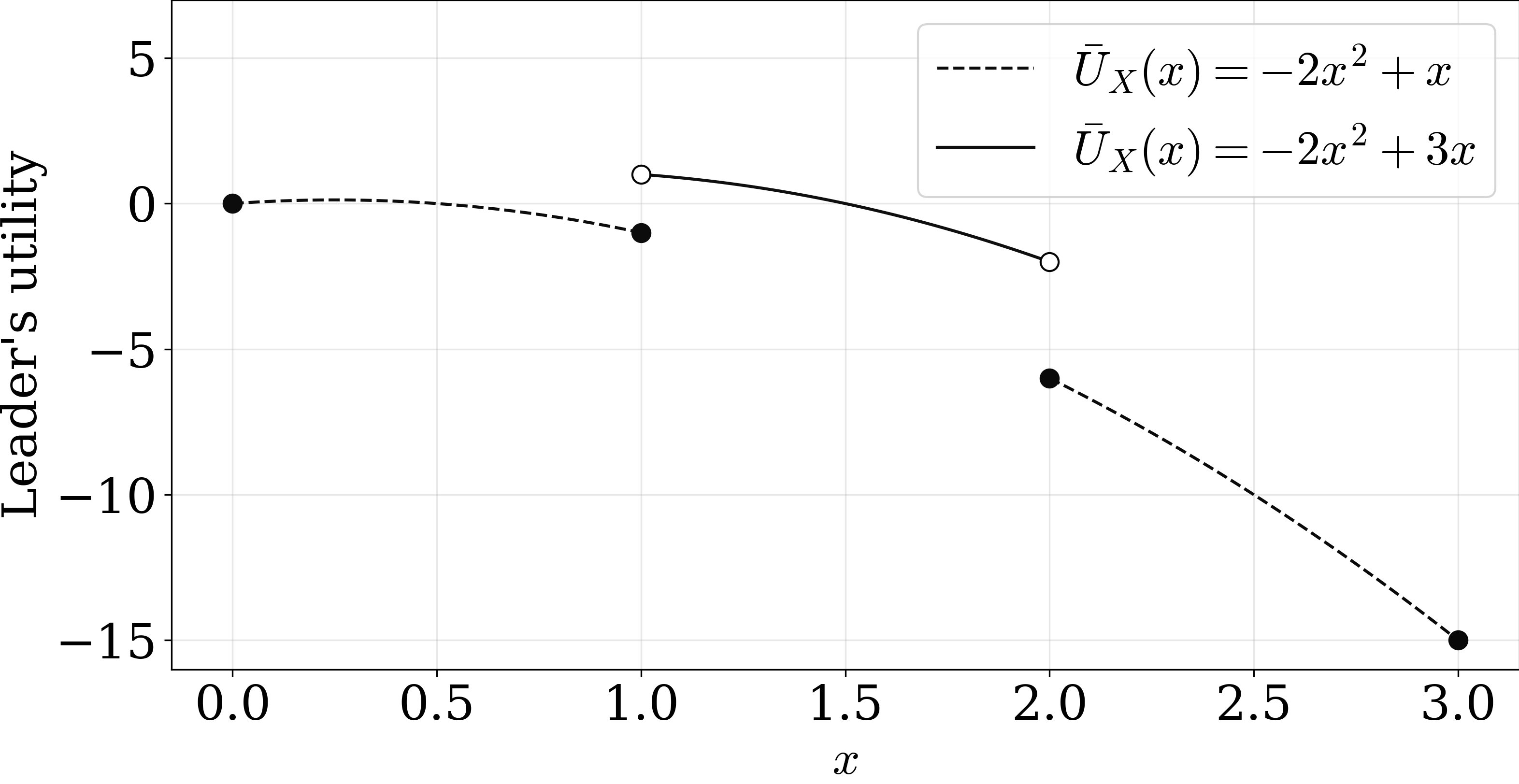}
\captionsetup{justification=centering}
\caption{The leader's utility}
 \label{figexample 1}
\end{figure}
\begin{lemma} 
    \label{lemmaa 2}
   Under Assumptions~\ref{a 2.1}-\ref{a 2.3}, for any $\epsilon>0$ there exists an $\epsilon$-WDSE in game $\mathcal{G}(\theta_0,\Theta)$.
   Let $\{\epsilon_n\}$ be a sequence of positive scalars converging to $0$, and $\{(x_n, \theta_n)\}$ be a corresponding sequence of strategies such that each $(x_n, \theta_n)$ is an $\epsilon_n$-WDSE. Then every limit point $(x^*, \theta^*)$ of $\{(x_n, \theta_n)\}$ is an exact WDSE if and only if satisfying the condition $f_2(x^*, \theta^*) \neq 0$ or $\bar{U}_X(x,\theta^*)$ is upper semicontinuous at $x^*$.
\end{lemma}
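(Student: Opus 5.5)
The argument has two parts: existence of $\epsilon$-WDSE via boundedness of $\bar U_X$ and the supremum, and a characterization of limit points via upper semicontinuity of $\bar U_X(\cdot,\theta^*)$.

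\textbf{Existence of $\epsilon$-WDSE.} First, $\bar U_X(x,\theta)$ is bounded on the compact set $\Omega_x\times\Theta$. Indeed, $U_X$ is continuous on the compact set $\Omega$, $\mathrm{BR}_{\boldsymbol Z}$ is single-valued by Assumption~\ref{a 2.2}, and $\mathrm{BR}_Y(x,\theta)\subseteq[y_{\min},y_{\max}]$. Hence
\[
V^*=\sup_{\theta\in\Theta}\sup_{x\in\Omega_x}\bar U_X(x,\theta)
\]
is finite. Since $\Theta$ is finite by Assumption~\ref{a 2.1}, the outer supremum is a maximum over $q$ values, so some $\theta^*$ attains it. For that $\theta^*$, the definition of the supremum gives a point $x^*$ with $\bar U_X(x^*,\theta^*)\ge V^*-\epsilon$. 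The minima defining $\bar U_X$ are attained, because $\mathrm{BR}_Y(x^*,\theta^*)$ is either a singleton or a compact interval and $U_X(x^*,\cdot,\mathrm{BR}_{\boldsymbol Z}(x^*,\cdot,\theta^*))$ is continuous. Taking $(y^*,\boldsymbol z^*)$ to be the minimizers, the tuple $(x^*,y^*,\boldsymbol z^*,\theta^*)$ satisfies Definition~\ref{epsilon-WDSE}.

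\textbf{Reduction of the limit-point claim.} Let $(x_n,\theta_n)\to(x^*,\theta^*)$ along a subsequence. Because $\Theta$ is finite, $\theta_n=\theta^*$ eventually, so we may fix $\theta^*$. Then
\[
\bar U_X(x_n,\theta^*)\ge V^*-\epsilon_n \to V^*.
\]
Consequently $(x^*,\theta^*)$ is a WDSE if and only if $\bar U_X(x^*,\theta^*)\ge\limsup_n \bar U_X(x_n,\theta^*)=V^*$. This is an upper-semicontinuity requirement along the approaching sequence.

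\textbf{Case $f_2(x^*,\theta^*)\neq0$ (sufficiency).} By continuity of $f_2$, the sign of $f_2(\cdot,\theta^*)$ is constant on a neighborhood $\delta(x^*)$. So $\mathrm{BR}_Y(\cdot,\theta^*)$ is the constant singleton $y_{\max}$ or $y_{\min}$ there. On $\delta(x^*)$ we then have
\[
\bar U_X(x,\theta^*)=U_X\bigl(x,\bar y,\mathrm{BR}_{\boldsymbol Z}(x,\bar y,\theta^*)\bigr),
\]
where $\bar y$ denotes that constant value. This function is continuous, using continuity of the strongly monotone VI solution map $\mathrm{BR}_{\boldsymbol Z}$ in its parameters (Assumption~\ref{a 2.2}). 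Passing to the limit gives $\bar U_X(x^*,\theta^*)=V^*$.

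\textbf{Case $\bar U_X(\cdot,\theta^*)$ upper semicontinuous at $x^*$ (sufficiency).} Here the inequality $\bar U_X(x^*,\theta^*)\ge\limsup\bar U_X(x_n,\theta^*)=V^*$ follows directly from the definition.

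\textbf{Necessity.} Suppose both conditions fail: $f_2(x^*,\theta^*)=0$ and $\bar U_X(\cdot,\theta^*)$ is not upper semicontinuous at $x^*$. The plan is to build an approximating sequence whose limit is not a WDSE, in the spirit of the example in Fig.~\ref{figexample 1}. Failure of upper semicontinuity gives $x_n'\to x^*$ with $\limsup\bar U_X(x_n',\theta^*)>\bar U_X(x^*,\theta^*)$. Since $f_2(\cdot,\theta^*)$ has finitely many zeros (Assumption~\ref{a 2.3}), the $x_n'$ can be taken on one side of $x^*$, where $\mathrm{BR}_Y$ is a constant singleton and $\bar U_X$ extends continuously to $x^*$. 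At $x^*$ itself, the minimum over the full interval $[y_{\min},y_{\max}]$ can only be lower than that one-sided limit.

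The main obstacle is making this rigorous, because the "only if" direction cannot hold for every conceivable configuration. It can only be read as: the equivalence holds over all admissible choices of $\epsilon_n$-WDSE sequences and of the data. I would therefore interpret necessity as follows: if both conditions fail and the one-sided limit equals $V^*$, then the sequence $x_n'$ yields $\epsilon_n$-WDSE points (with $\epsilon_n=V^*-\bar U_X(x_n',\theta^*)\to0$) whose limit $x^*$ has $\bar U_X(x^*,\theta^*)<V^*$, and so is not a WDSE. The residual work is to handle separately the case where the one-sided limit is below $V^*$, which cannot arise for a limit of $\epsilon_n$-WDSE.
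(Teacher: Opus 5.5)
Your existence argument and both sufficiency cases are sound and follow the paper's route. The paper only says existence ``follows directly from the definition''; your use of the finiteness of $\Theta$, the boundedness of $\bar U_X$, and the observation that $\theta_n=\theta^*$ eventually fill in details the paper skips.

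The gap is in the necessity direction, which you misread. What has to be shown is: if the given limit point $(x^*,\theta^*)$ is a WDSE, then $f_2(x^*,\theta^*)\neq 0$ or $\bar U_X(\cdot,\theta^*)$ is upper semicontinuous at $x^*$. Nothing needs to be constructed, since the approximating sequence is part of the hypothesis. A WDSE satisfies $\bar U_X(x^*,\theta^*)\ge \bar U_X(x,\theta^*)$ for all $x\in\Omega_x$. Hence $\bar U_X(x^*,\theta^*)\ge\limsup_{x\to x^*}\bar U_X(x,\theta^*)$, which is upper semicontinuity at $x^*$. That one line is the paper's proof of necessity. In contrapositive form, your points $x_n'\to x^*$ with $\limsup_n \bar U_X(x_n',\theta^*)>\bar U_X(x^*,\theta^*)$ already exhibit some $x_n'$ with strictly larger worst-case utility. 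So $x^*$ is not a maximizer and $(x^*,\theta^*)$ is not a WDSE, and the argument should end there.

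Instead you claim that the ``only if'' direction ``cannot hold for every conceivable configuration.'' You then replace it with a different statement (that one can manufacture an $\epsilon_n$-WDSE sequence converging to a non-WDSE point), and leave an unresolved case where the one-sided limit lies below $V^*$. That claim is false, the substitute is not what the lemma asserts, and as written your proof of the ``only if'' part is incomplete.

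Your own reduction step already settles both directions at once. Since $V^*-\epsilon_n\le\bar U_X(x_n,\theta^*)\le V^*$ with $x_n\to x^*$, upper semicontinuity at $x^*$ forces $\bar U_X(x^*,\theta^*)\ge V^*$. Conversely, $\bar U_X(x^*,\theta^*)\ge V^*$ makes $x^*$ a global maximizer of $\bar U_X(\cdot,\theta^*)$ and hence a point of upper semicontinuity. So for such a limit point, ``WDSE'' and ``upper semicontinuous at $x^*$'' are equivalent, and $f_2(x^*,\theta^*)\neq0$ is merely a sufficient condition for the latter.
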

\begin{proof}
  From the Definition~\ref{epsilon-WDSE}, the existence of an $\epsilon$-WDSE follows directly.
  We now proceed to prove the necessary and sufficient condition under which a limit point of a sequence of \(\varepsilon\)-WDSEs, as \(\varepsilon \to 0\), is itself a WDSE. 

  When the tuple $(x^*,y^*,\boldsymbol{z}^*,\theta^*)$ is a WDSE strategy, if $f_2(x^*,\theta^*)\neq 0$, then the proof is complete. If $f_2(x^*,\theta^*)=0$, then $\text{BR}_Y(x,\theta)=\Omega_y$, thus, 
  \begin{equation*}
      \inf_{y\in \Omega_y} U_X(x^*,y,\text{BR}_{\boldsymbol{Z}}(x^*,y,\theta^*))\geq \underset{x\to x^*} {\lim\sup}\  \bar{U}_X(x,\theta^*).
  \end{equation*}
Then $\bar{U}_X(x,\theta^*)$ is upper semicontinuous at $x^*$.

  Consider a tuple $(x_0,y_0,\boldsymbol{z}_0,\theta_0)$ and $(x_0,\theta_0)$ be a limit point of a $\epsilon$-WDSE sequence as $\epsilon\to 0$. When $f_2(x_0,\theta_0)\neq 0$, we obtain $\bar{U}_X(x,\theta_0)$ is continuous at $x_0$ Then \begin{equation*}
    x_0=\arg \max \bar{U}_X(x,\theta_0).
  \end{equation*}
  When $f_2(x_0,\theta_0)=0$, $\bar{U}_X(x,\theta_0)$ is upper semicontinuous at $x_0$. Note that
  \begin{equation*}
      \lim_{n\to \infty} \bar{U}_X(x_n,\theta_n)=\sup_{x\in \Omega_x, \theta\in \Theta} \bar{U}_X(x,\theta)
  \end{equation*}
  Then
  \begin{equation*}
      \bar{U}_X(x_0,\theta_0)\geq \lim_{n\to \infty} \bar{U}_X(x_n,\theta_n)=\sup_{x\in \Omega_x, \theta\in \Theta} \bar{U}_X(x,\theta)
  \end{equation*}
  Thus, $(x_0,y_0,\boldsymbol{z}_0,\theta_0)$ is a WDSE.
\end{proof}

Regarding the existence of the HNE, we rely on results from \cite{facchinei2003finite}.
\begin{lemma}
  \label{lemmaa HNE}
   Under Assumptions~\ref{a 2.1}-\ref{a 2.3}, there exists an HNE in game $\mathcal{G}(\theta_0,\theta)$ for any $\theta \in \Theta$.
\end{lemma}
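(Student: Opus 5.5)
Fix $\theta\in\Theta$; since $\theta$ is frozen, $\mathcal{G}(\theta_0,\theta)$ is an ordinary simultaneous-move $(N+2)$-player game. Player $X$ uses $U_X(\cdot,\cdot,\cdot,\theta_0)$, player $Y$ uses $U_Y(\cdot,\cdot,\theta)$, and each $Z_i$ uses $U_{Z_i}(\cdot,\theta)$. An HNE is exactly a Nash equilibrium of this game, so the plan is to verify the hypotheses of the standard existence theorem for Nash equilibria with compact convex strategy sets and continuous, own-concave payoffs \cite{facchinei2003finite}. Equivalently, we show that the variational inequality $\mathrm{VI}(\Omega,\Phi_\theta)$ has a solution, where
\begin{equation*}
\Phi_\theta(x,y,\boldsymbol z)=\operatorname{col}\{-\nabla_x U_X(x,y,\boldsymbol z,\theta_0),\,-\nabla_y U_Y(x,y,\theta),\,F(x,y,\boldsymbol z,\theta)\}.
\end{equation*}

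\textbf{Step 1: the strategy set.} $\Omega=\Omega_x\times\Omega_y\times\prod_i\Omega_{z,i}$ is a product of compact intervals, hence nonempty, compact and convex.

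\textbf{Step 2: the mapping $\Phi_\theta$.} By Assumption~\ref{a 2.1}, every utility is continuously differentiable, so $\Phi_\theta$ is continuous on $\Omega$. Also by Assumption~\ref{a 2.1}, each utility is concave in its owner's variable. Therefore the first-order condition $\langle \Phi_\theta(w^\diamond),w-w^\diamond\rangle\ge 0$ for all $w\in\Omega$, taken componentwise, is equivalent to each player's variable maximizing its own utility given the others.

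The one concern is $U_Y=f_2(x,\theta)y+f_3(x)$, which is linear in $y$. Linear is still concave, so there is no issue: its partial gradient is $f_2(x,\theta)$, continuous since $U_Y$ is $C^1$. The fixed-$\theta$ structure means nothing further is needed.

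\textbf{Step 3: existence.} Apply the existence result for variational inequalities on compact convex sets with continuous maps \cite[Cor.~2.2.5]{facchinei2003finite}, which rests on Brouwer's fixed point theorem. This gives $w^\diamond=(x^\diamond,y^\diamond,\boldsymbol z^\diamond)\in\Omega$ solving $\mathrm{VI}(\Omega,\Phi_\theta)$. Restricting the test point $w$ to vary in one block at a time, with the other blocks held at $w^\diamond$, gives the block conditions. By concavity, each block condition is sufficient for the corresponding $\arg\max$ in the HNE definition. Hence $w^\diamond$ is an HNE.

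An alternative route is Kakutani's theorem applied to the product best-response correspondence. That correspondence is nonempty (continuity and compactness), convex-valued (concavity), and has closed graph (Berge's maximum theorem).

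\textbf{Main obstacle.} The only delicate point is the conceptual one that the leader's utility is evaluated at $\theta_0$ while the followers' utilities are evaluated at $\theta$. This mismatch does not break the argument, because the Nash existence theorem needs no common parameter, only per-player continuity and concavity. Strong monotonicity from Assumption~\ref{a 2.2} is not needed for existence; it would only give uniqueness of the $\boldsymbol z$-block.

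Finally, since $\theta\in\Theta$ was arbitrary and $\Theta$ is finite, the conclusion holds for every $\theta\in\Theta$.
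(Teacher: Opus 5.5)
Your proposal is correct and takes essentially the paper's approach. The paper gives no argument beyond invoking the existence theory of \cite{facchinei2003finite}. You make that citation explicit: you pose the fixed-$\theta$ game as $\mathrm{VI}(\Omega,\Phi_\theta)$ with a continuous map on a compact convex set, then use concavity in each player's own variable to turn the block VI conditions into the HNE best-response conditions, with $\theta_0$ for $X$ and $\theta$ for the followers.
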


\subsection{Consistency between DSE and HNE}

The motivation for aligning a DSE with an HNE stems from the fact that the leader’s action \( x \) may be difficult to observe accurately. This can lead followers to act without observing the leader’s move, leaving the leader uncertain whether the followers are playing according to a DSE or an HNE strategy, resulting in a reduction of the leader’s utility.

    

\begin{remark}
    Although the deception parameter \( \theta \) is also a decision variable under the leader’s control, its observability differs from that of the leader’s action \( x \). In many practical applications, $\theta$ functions as a public signal designed for dissemination, whereas $x$ represents an internal operational variable that is often opaque or costly to monitor.
\end{remark}

For a WDSE or an SDSE strategy $(x^*,y^*,\boldsymbol{z}^*,\theta^*)$, define
\begin{equation}\label{zero_pay}
\begin{aligned}
T_1(x)&=\frac{\partial U_X}{\partial x}(x,y^*,\boldsymbol{z}^*),\\
g(x,\theta)&=
\begin{cases}
\inf\limits_{y\in \mathrm{BR}_Y(x,\theta)} U_X(x,y,\mathrm{BR}_{\boldsymbol{Z}}(x,y,\theta)), & \text{for WDSE},\\[1mm]
\sup\limits_{y\in \mathrm{BR}_Y(x,\theta)} U_X(x,y,\mathrm{BR}_{\boldsymbol{Z}}(x,y,\theta)), & \text{for SDSE}.
\end{cases}
\end{aligned}
\end{equation}

Define the utility function of the leader under the leader–follower  scheme as
\begin{equation}
  \tilde{U}_X(x,\theta)=
  \begin{cases}
    \hat{U}_X(x,\theta) \quad &x\in(x_{i-1},x_i)\\
    g(x_i,\theta) \quad & x_i\in \{x_1,x_2,\ldots,x_{q_{\theta}}\}
  \end{cases}
\end{equation}
{where} $\hat{U}(x,\theta)=U_X(x,\text{BR}_Y(x,\theta),\text{BR}_{\boldsymbol{Z}}(x,\text{BR}_Y(x,\theta),\theta))$. 
Since \( \hat{U}_X(x,\theta) \) is piecewise smooth \cite{grontas2024bighypebestintervention}, \( \tilde{U}_X(x,\theta) \) is also piecewise smooth. Thus, when $\tilde{U}_X$ is differentiable at $x$, we define
\begin{equation}
  T_2(x,\theta)= \frac{\partial \tilde{U}_X}{\partial x}(x,\theta)
\end{equation}


We now present a necessary and sufficient condition under which  WDSE or SDSE coincides with an HNE. The proof is given in Appendix~\ref{consistencesection}.


\begin{theorem}\label{consistence}
Under Assumptions~\ref{a 2.1}--\ref{a 2.3}, when for any $x\in\Omega_x$, there exist $\delta_-(x)$ and $\delta_+(x)$ such that $\tilde{U}_X(x,\theta)$ is monotone on $\delta_-(x)$ and $\delta_+(x)$, any WDSE or SDSE $(x^*,y^*,\boldsymbol{z}^*,\theta^*)$ is an HNE if and only if at least one of the following conditions holds:
\begin{enumerate}
    \item $T_1(x^*)=0$;
    \item $\tilde{U}_X(x^*,\theta^*)$ is not differentiable and there exists a $\mathring{\delta}(x^*)$ such that
    $T_1(x)\cdot T_2(x,\theta^*)>0$ for all $x\in\mathring{\delta}(x^*)$;
    \item $\tilde{U}_X(x^*,\theta^*)$ is differentiable and there exists a $\delta(x^*)$ such that
    $T_1(x)\cdot T_2(x,\theta^*)>0$ for all $x\in\delta(x^*)$.
\end{enumerate}
\end{theorem}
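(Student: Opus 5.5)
The plan is to reduce the HNE requirement to a single condition on the leader. Fix a WDSE or SDSE $(x^*,y^*,\boldsymbol{z}^*,\theta^*)$. By construction $y^*\in\mathrm{BR}_Y(x^*,\theta^*)$ and $\boldsymbol{z}^*=\mathrm{BR}_{\boldsymbol{Z}}(x^*,y^*,\theta^*)$, which is single-valued by Assumption~\ref{a 2.2}. Hence the second and third HNE conditions hold automatically, since $U_Y$ does not depend on $\boldsymbol{z}$. The profile is therefore an HNE if and only if
\[
x^*\in\arg\max_{x\in\Omega_x}U_X(x,y^*,\boldsymbol{z}^*,\theta_0).
\]
By Assumption~\ref{a 2.1}, $U_X(\cdot,y^*,\boldsymbol{z}^*)$ is concave and $C^1$ on the interval $\Omega_x$. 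So this is equivalent to the first-order condition: either $T_1(x^*)=0$, or $x^*$ is an endpoint of $\Omega_x$ with $T_1(x^*)$ pointing outward, i.e.\ $T_1(x^*)\le 0$ at $x_{\min}$ and $T_1(x^*)\ge 0$ at $x_{\max}$. The theorem thus reduces to showing that, when $T_1(x^*)\neq 0$, this endpoint/sign condition is equivalent to conditions 2) or 3).

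Next I would bring in the information that $x^*$ maximizes $\tilde U_X(\cdot,\theta^*)$, which follows from the WDSE/SDSE definition because $\tilde U_X$ coincides with $g$ at the kinks. By the local monotonicity hypothesis, $\tilde U_X$ is monotone on $\delta_-(x^*)$ and $\delta_+(x^*)$. Maximality then forces it to be nondecreasing on $\delta_-(x^*)$ and nonincreasing on $\delta_+(x^*)$, where these neighborhoods exist in $\Omega_x$. Hence $T_2\ge 0$ to the left of $x^*$ and $T_2\le 0$ to the right, wherever $T_2$ is defined.

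For sufficiency, suppose $T_1(x^*)\neq0$ and condition 2) or 3) holds. Then $T_1$ and $T_2$ share a sign on a punctured neighborhood, and $T_1$ is continuous, so $T_1$ has constant sign near $x^*$. If $x^*$ were interior, $T_2$ would have to be $\ge0$ on the left and $\le0$ on the right while also sharing the constant strict sign of $T_1$. That is impossible, so $x^*$ is an endpoint. Say $x^*=x_{\max}$: then only $\delta_-$ exists, $T_2\ge0$ there, so $T_1>0$ and $T_1(x^*)\ge0$, which is the outward condition; $x_{\min}$ is symmetric. For necessity, if the profile is an HNE with $T_1(x^*)\neq0$, then $x^*$ is an endpoint, say $x_{\max}$ with $T_1(x^*)>0$. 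By continuity $T_1>0$ on a one-sided neighborhood, and I must show $T_2>0$ there. Here I would use that on the piece adjacent to $x^*$, $\tilde U_X=\hat U_X$ with $y$ constant at $y_{\max}$ or $y_{\min}$. So $T_2=T_1$ plus a term $f_1$-derivative$\cdot\mathrm{J}\mathrm{BR}_{\boldsymbol Z}\cdot x$, and I would argue near $x^*$ via monotonicity of $\tilde U_X$ together with maximality; differentiability of $\tilde U_X$ at $x^*$ distinguishes 2) from 3).

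The main obstacle is this necessity direction: ensuring $T_2$ has strictly the same sign as $T_1$, rather than merely $\ge0$, and handling the kink case where $y^*$ differs from the limiting one-sided best response. I would first try to exploit the local monotonicity assumption to rule out sign oscillation, and the piecewise structure from Assumption~\ref{a 2.3}(2) to treat each side separately. If strictness fails, I would settle for showing the conditions hold in the punctured neighborhood form with a possibly smaller $\delta$.
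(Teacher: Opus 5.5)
Your reduction is correct: the profile is an HNE iff $x^*$ maximizes the concave function $U_X(\cdot,y^*,\boldsymbol z^*,\theta_0)$, i.e.\ $T_1(x^*)=0$ or $T_1$ points outward at an endpoint. Your sufficiency argument is also correct, and somewhat cleaner than the paper's. Reading $\delta_\pm(x^*)$ as one-sided neighbourhoods that contain $x^*$ (the paper's convention, since punctured ones are written $\mathring{\delta}$), the monotonicity hypothesis plus maximality gives $T_2\ge0$ on the left and $T_2\le0$ on the right, which handles Conditions 2 and 3 at once. The paper instead uses the interior first-order condition for Condition 3 and a mean-value step on $[x^*,x_0]$ for Condition 2. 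That mean-value step tacitly needs $\tilde U_X(\cdot,\theta^*)$ to be continuous at $x^*$, which can fail for an SDSE at a zero of $f_2$. Your explicit use of the monotonicity hypothesis is what makes that step legitimate.

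\textbf{The gap is the necessity direction, which you leave open.} Maximality and local monotonicity only give $T_2\ge0$ on $(x_{\max}-r,x_{\max})$. Monotonicity of $\tilde U_X$ cannot stop $T_2$ from vanishing, and shrinking $\delta$ does not help if $T_2$ vanishes identically. Your formula route fails for the same reason: the term $x\,\nabla_z f_1\,\mathrm{J}\phi_2(x)$ can cancel $T_1$ exactly.

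The paper gets strictness from two ingredients absent from your plan:
\begin{itemize}
\item definability of $\tilde U_X$, so that $T_2$ has a fixed sign ($>0$, $\equiv0$ or $<0$) on a small punctured one-sided neighbourhood;
\item the assertion that the WDSE definition yields a \emph{strict} inequality $\tilde U_X(x,\theta^*)<\tilde U_X(x^*,\theta^*)$ on $\mathring{\delta}(x^*)$, which excludes the last two cases.
\end{itemize}
That strict inequality does not follow from the definition, which gives only $\le$. Without it the ``only if'' claim is false, so no argument can complete your plan as stated.

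Here is a counterexample. Take $N=1$, $\Omega_x=[1,2]$, $\Theta=\{\theta_0\}$, $B\equiv0$, $f_1(y,z)=z$, $f_2\equiv1$, and $U_{Z_1}=-(z-1/x)^2$ on $\Omega_{z,1}=[0,2]$.
\begin{itemize}
\item Assumptions~\ref{a 2.1}--\ref{a 2.3} all hold.
\item $\phi_2(x)=1/x$, so $\tilde U_X\equiv1$ and the monotonicity hypothesis holds. Hence $(2,y_{\max},1/2,\theta_0)$ is a WDSE and an SDSE.
\item $U_X(x,y_{\max},1/2)=x/2$ is maximized at $x=2$, so this profile is an HNE.
\item Yet $T_1\equiv1/2$ and $T_2\equiv0$; the correction term $-1/x$ cancels $f_1=1/x$. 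So none of Conditions 1--3 holds.
\end{itemize}

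To make necessity provable you need an extra strictness hypothesis, e.g.\ $x^*$ a strict local maximizer of $\tilde U_X(\cdot,\theta^*)$ together with continuity at $x^*$ (automatic for a WDSE). You would then use definability to fix the sign of $T_2$. Even then, Condition 3 demands $T_1T_2>0$ at $x^*$ itself, which a punctured-neighbourhood argument does not deliver.
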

{Theorem~\ref{consistence} not only provides a method for verifying whether a DSE (i.e., WDSE, SDSE) is an HNE, but also offers a way to find a DSE consistent with a given HNE. In terms of computational complexity, the theorem involves only local computations related to the DSE, making it computationally efficient and straightforward to implement.}

When multiple DSE exist, the leader can adjust $\theta$ to ensure the robustness of their utility. For example,
let $U_X(x,y,z)=-(x-1)^2-(z-2)^2+25$, $U_Y(x,y)=(|x|+1)y$ and $U_Z(x,y,z,\theta)=-(z-\theta x)^2$. Set $x\in[-2,2]$, $y\in[-2,2]$, $z\in[-2,2]$ and $\Theta=\{0,-\frac{4}{3}\}$. Then for $\theta_1=0$ and $\theta_2=-4/3$, the profiles $(x, y, z,\theta) = (1, 2,0, 0)$ and $(-0.6,2, 0.8, -4/3)$, both constitute DSE. However, the DSE at $\theta=0$ exhibits robustness as the DSE coincides with the HNE. Fig.~\ref{fig:robustness}  illustrates our conclusion.
\begin{figure}[htbp]
    \centering
    \includegraphics[width=0.8\linewidth]{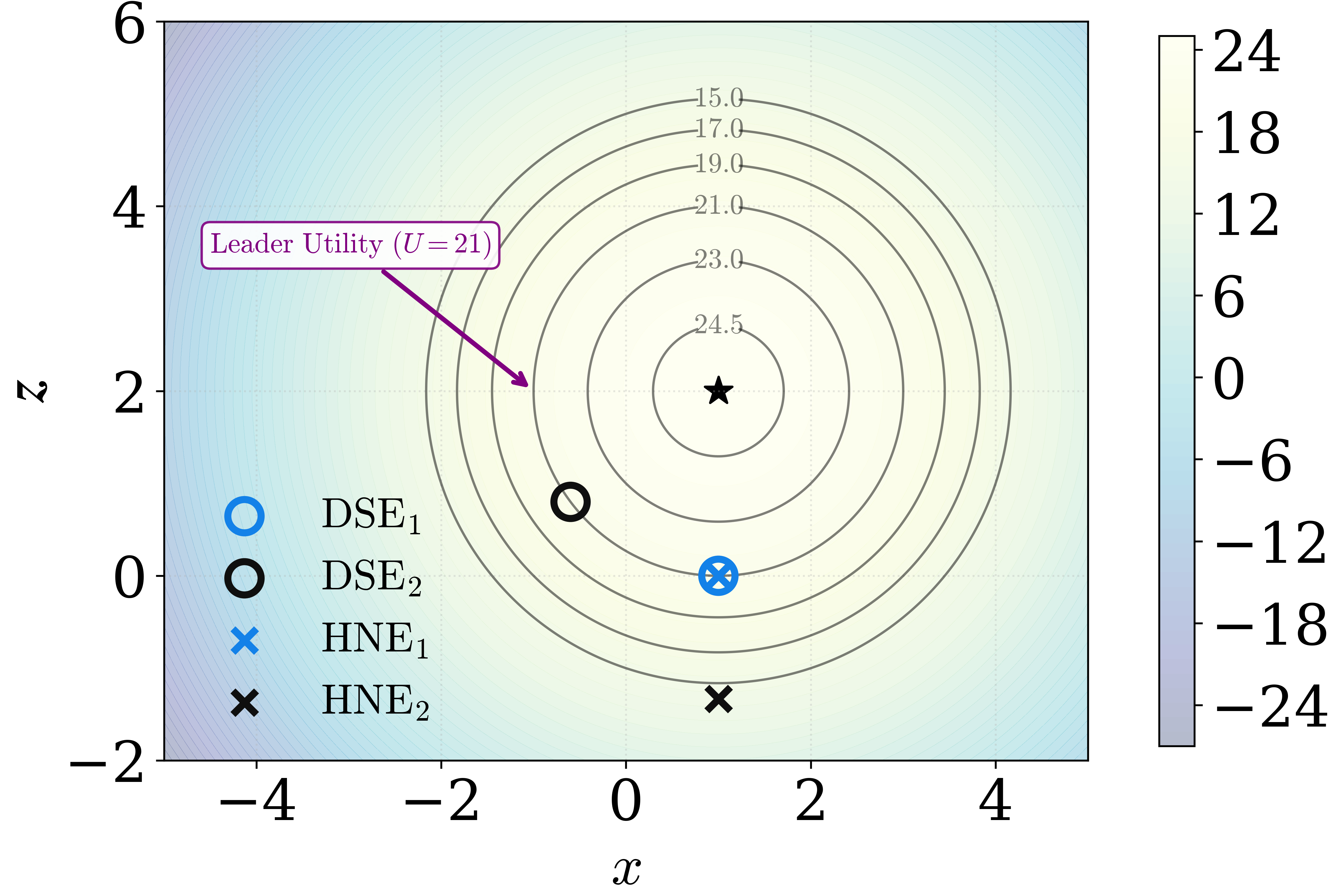}
    \caption{Robustness Assurance}
    \label{fig:robustness}
\end{figure}


\section{Algorithm design for DSE seeking}
Following the previous section, we present the method for computing  WDSE (SDSE). The complete algorithm is summarized in Algorithm 1. For a fixed deception parameter $\theta$, we traverse all intervals $\Omega_{x,i}$ and perform projected gradient ascent using the hypergradient:
\begin{equation}\label{hypergradient}
    \nabla \hat{U}_X^k \!=\! \nabla_1 \hat{U}_X(x^k, {y^{k+1}}\!, {z^{k+1}}\!) \! +\!  ({s^{k+1}}\! )^\top\! \nabla_3 \hat{U}_X(x^k, {y^{k+1}}\!, {z^{k+1}}\!).
\end{equation}
{where the followers' strategies $y^{k+1}$ and $z^{k+1}$ are obtained by Algorithm 2 given fixed $x^k$.}
 To determine the optimal strategy $x^*(\theta)$ for the fixed $\theta$, we compare the utilities at the limiting solutions of these intervals with the leader's utility at the zero points of $f_2$. Specifically, at any zero point $x_{i}$ for $i\in\{1,\dots,x_{q_{\theta}}\}$, the leader's utility is evaluated based on the specific equilibrium concept. Define an infimum under a pessimistic attitude (corresponding to WDSE) or a supremum under an optimistic attitude (corresponding to SDSE)
\begin{equation}
\label{eq:zero_utility}
U_{zero}(x_z) = 
\begin{cases} 
\inf\limits_{y\in \Omega_y} U_X(x_{z},y,BR_{\boldsymbol{z}}(x_{z},y,\theta)), & \text{for WDSE } \\
\sup\limits_{y\in \Omega_y} U_X(x_{z},y,BR_{\boldsymbol{z}}(x_{z},y,\theta)), & \text{for SDSE }
\end{cases}
\end{equation}
The candidate strategy yielding the global maximum utility among all intervals and zero points is then selected as the optimal response for the current $\theta$.

 Finally, by iterating this process over the parameter space $\Theta$, we update the global maximum utility $U^*$ and record the corresponding optimal pair $(x^*,\theta^*)$. 
 
When $\theta$ is fixed, we omit the explicit dependence on $\theta$ and $\theta_0$ for notational brevity, denoting $\text{BR}_y(x,\theta)$ as $\phi_1(x)$ and $\text{BR}_{\boldsymbol{Z}}(x,\phi_1(x),\theta)$ as $\phi_2(x)$.
\begin{algorithm}[t]
\caption{ Hyper-gradient-based algorithm for DSE seeking}
\label{alg:global_wdse}
\begin{algorithmic}[1]
\Require $\Theta$, step sizes $\{\alpha^k\}$, tolerance $\sigma^k$, $U^*\leftarrow \infty$, $x^*\gets 0$

 \;\,\,$U\leftarrow \infty$
\For{$\theta \in \Theta$}
    \State Partition $\Omega_x$ into intervals $\Omega_{x,i}$ and  $\mathcal{Z}_{f_2}$ .
    \For{each $\Omega_{x,i}$ 
    }
        \State Initialize $x^0 \in \Omega_{x,i}, y^0, z^0, s^0, k \leftarrow 0$.
        {  
        \Repeat
            \State $(y^{k+1}, \boldsymbol{z}^{k+1}, s^{k+1}) \leftarrow $
           
           \quad\quad\quad \quad\quad\quad
           \textbf{Inner Loop} $(x^{k},{y^k},\boldsymbol{z}^k,s^k,\sigma^k)$
        \State $x^{k+1} \leftarrow \mathbb{P}_{{\Omega_{x,i}}} [ x^k + \alpha^k {\nabla}\hat{U}_X^k ]$.
       
           
            \State $k \leftarrow k + 1$.
        \Until{convergence} }
         \State $U
\leftarrow
\max\Big\{
\hat{U}_X(x^k,\theta),\;
\max_{i=1,\dots,q_{\theta}} g(x_i,\theta)
\Big\}$ 
         \State $x
\gets
\arg\max\Big(
\{\hat{U}_X(x^k,\theta)\}
\cup
\{g(x_i,\theta)\}_{i=1}^{q_\theta}
\Big)$
    \EndFor
   \If{$U > U^*$}
            \State $U^*\gets U$,  $\theta^*\gets \theta$, $x^*\gets x$
        \EndIf

\EndFor

\State \textbf{Return}
 $U^*,x^*, \theta^*$
\end{algorithmic}
\end{algorithm}

\begin{assumption}
  \label{concave}
  {
For any fixed $y$ and $\theta$, $\hat{U}_X(x,\theta)=U_X(x,y,\text{BR}_{\boldsymbol{Z}}(x,y,\theta),\theta_0)$ is  concave in any interval $\Omega_{x,i}$.}
\end{assumption}

  Assumption~\ref{concave} is intended to prevent the leader's utility function from being non-concave in the leader-follower setting  \cite{doi:10.1137/1.9781611971132}. An intuitive example arises when \( U_{{X}} \) is jointly concave and the utility functions of the bottom-level followers are of linear-quadratic form, in which case the induced function $\hat{U}_X$ admits Assumption~\ref{concave}. Even without this condition, our algorithm can still converge to a composite critical point of the leader's utility function in the leader-follower scheme.

  Under the box constraint $\Omega_{\boldsymbol z}=[\boldsymbol z_{\min},\boldsymbol z_{\max}]$,  the projection admits a piecewise-linear expression 
\begin{equation}
  (\mathbb{P}_{\Omega_{\boldsymbol{z}}}(z))_i=\max(z_{i,\min},\min(z_{i,\max},z_i))
\end{equation}
This solution is piecewise affine. In other words, the space $\mathbb{R}^N$ is partitioned into finitely many hyper-rectangular regions $A_i,\ i\in N_p=\{1,2,\ldots,P\}$, aligned with coordinate axes. Within each region, the projector is a simple affine function and hence continuously differentiable; its differentiability fails only on the boundaries of these regions. More formally, let $\omega(x,y,\boldsymbol{z})=\boldsymbol{z}-\gamma F(x,y,\boldsymbol{z})$ denote the PG step mapping.  Define $\mathcal{P}(x)=\{i\in N_p|\omega(x,\phi_1(x),\phi_2(x))\in A_i\}$ as the set of active region indices at $\omega(x,\phi_1(x),\phi_2(x))$. Let $\delta(x)$ be the radius of the largest ball whose elements are all included in one of the active partitions,
i.e., $\delta(x) := \max\{r \in \mathbb{R}_{\ge 0} \,|\,
{B}(\omega(x, \phi_1(x),\phi_2(x)), r) \subseteq \bigcup_{i \in \mathcal{P}(x)}{A_i}\}$.

\begin{assumption}
\label{delta}
There exists $k' \in \mathbb{N}$ such that $\sigma^k < \delta(x^k)$,
for all $k > k'$.
\end{assumption}

Assumption~\ref{delta} ensures that the estimate \( s \) obtained from Algorithm 2 and the true value \( \mathcal{J}\phi_2(x) \) lie within the same differentiability slice. Moreover, it is readily satisfied because, under box constraints, each slice admits an explicit analytical characterization, allowing us to compute the radius \( r \) of the slice directly.

\subsection{Inner Loop}
The specific steps of the inner loop are outlined in Algorithm 2. Its primary objective is to solve the follower's optimization problem given a fixed leader's variable $x^k$. This process involves three main tasks: (1) determine the sign of $f_2(x,\theta)$ at $\Omega_{x,i}$. (2) approximating the bottom-level follower's BR $\phi_2(x)$, and (3) learning the sensitivity of this response with respect to the leader's variables $\mathrm{J}\phi_2(x)$. The output of this loop provides the necessary information for the leader to perform the informed "projected hypergradient" update \eqref{hypergradient}.

For given $x$ and $y$, the inner loop iteratively computes the follower's optimal strategy $z$. This is achieved through a fixed-point iteration defined by the function $h(x, y, z)$. The update rule is given by
\begin{equation}\label{phi_2}
    \boldsymbol{z}^{\ell+1} = h(x, y, \boldsymbol{z}^\ell),\quad \forall \ell\in \mathbb{N}.
\end{equation}
Define
   $ h(x, y, \boldsymbol{z}) = \mathbb{P}_{\Omega_{\boldsymbol{z}}} \left[\boldsymbol{z} - \gamma F(x, y, \boldsymbol{z})\right]$,
where $\gamma$ is the step size,  $\mathbb{P}_{\Omega_{\boldsymbol{z}}}$ is the projection operator that ensures the updated strategy $\boldsymbol{z}^{l+1}$ remains within the feasible set $\Omega_{\boldsymbol{z}}$. Therefore, $$\boldsymbol{z}=\phi_2(x) \Leftrightarrow \boldsymbol{z}=h(x,\phi_1(x),\boldsymbol{z}).$$
\begin{lemma}
  \label{contraction}
  Under Assumptions~\ref{a 2.1}-\ref{a 2.3}, 
  if $\gamma < \frac{2 \mu}{\kappa^2}$, then $h(x,y,\boldsymbol{z})$ is a contraction mapping with respect to $\boldsymbol{z}$, with the contraction constant $\eta=\sqrt{1-\gamma(2\mu-\gamma \kappa^2)}$.
\end{lemma}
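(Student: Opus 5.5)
The plan is the standard argument that a projected gradient step with a strongly monotone, Lipschitz operator is a contraction. We fix $x$, $y$ and $\theta$, take two points $\boldsymbol{z},\hat{\boldsymbol{z}}\in\Omega_{\boldsymbol{z}}$, and bound $\|h(x,y,\boldsymbol{z})-h(x,y,\hat{\boldsymbol{z}})\|$. Since $\Omega_{\boldsymbol{z}}$ is a nonempty closed convex box, the projection $\mathbb{P}_{\Omega_{\boldsymbol{z}}}$ is nonexpansive. The first step is therefore
\begin{equation*}
\|h(x,y,\boldsymbol{z})-h(x,y,\hat{\boldsymbol{z}})\|\le \|(\boldsymbol{z}-\hat{\boldsymbol{z}})-\gamma\,(F(x,y,\boldsymbol{z})-F(x,y,\hat{\boldsymbol{z}}))\|,
\end{equation*}
which removes the projection entirely.

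Next we expand the square of the right-hand side:
\begin{equation*}
\|\boldsymbol{z}-\hat{\boldsymbol{z}}\|^2-2\gamma\langle F(x,y,\boldsymbol{z})-F(x,y,\hat{\boldsymbol{z}}),\boldsymbol{z}-\hat{\boldsymbol{z}}\rangle+\gamma^2\|F(x,y,\boldsymbol{z})-F(x,y,\hat{\boldsymbol{z}})\|^2 .
\end{equation*}
The cross term is bounded below using $\mu$-strong monotonicity of $F(x,y,\cdot,\theta)$ from Assumption~\ref{a 2.2}. The last term is bounded above using $\kappa$-Lipschitz continuity from the same assumption. Together these give the bound $(1-2\gamma\mu+\gamma^2\kappa^2)\|\boldsymbol{z}-\hat{\boldsymbol{z}}\|^2=\eta^2\|\boldsymbol{z}-\hat{\boldsymbol{z}}\|^2$, and taking square roots yields the claimed estimate.

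The remaining task is to check that $\eta$ is a genuine contraction constant, meaning $0\le\eta<1$. The condition $0<\gamma<2\mu/\kappa^2$ gives $\gamma(2\mu-\gamma\kappa^2)>0$, so $\eta^2<1$. Nonnegativity also needs checking: comparing the two assumptions through Cauchy--Schwarz gives $\mu\|\boldsymbol{z}-\hat{\boldsymbol{z}}\|^2\le\kappa\|\boldsymbol{z}-\hat{\boldsymbol{z}}\|^2$, so $\mu\le\kappa$. Hence $1-2\gamma\mu+\gamma^2\kappa^2\ge(1-\gamma\mu)^2\ge 0$, and the square root is real.

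No step is really an obstacle. The only point needing care is the one just handled: making sure the expression under the square root is nonnegative, so that $\eta$ is well defined, and strictly less than one under the stated step-size bound. Since the constants $\mu,\kappa$ are uniform in $(x,y,\theta)$, the contraction holds uniformly. By Banach's fixed point theorem, this also confirms that $\phi_2(x)$ is the unique fixed point and that the iteration \eqref{phi_2} converges linearly at rate $\eta$.
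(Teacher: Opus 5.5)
Your proposal is correct and follows the paper's proof essentially step for step. The paper likewise uses nonexpansiveness of $\mathbb{P}_{\Omega_{\boldsymbol{z}}}$, expands the square, and applies strong monotonicity and Lipschitz continuity to obtain the factor $1-2\gamma\mu+\gamma^2\kappa^2$. Your additional check that $\mu\le\kappa$, which makes the radicand nonnegative, is a detail the paper leaves implicit.
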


The proof of Lemma \ref{contraction} is provided in Appendix \ref{algorithmsection}.
Therefore, the sequence generated by \eqref{phi_2} converges linearly to $\phi_2(x)$ with rate $\eta$. Since $\phi_2(x)=h(x,\phi_1(x),\phi_2(x))$,we differentiate both sides with respect to $x$ and obtain
\begin{equation}\label{Jac}
  \mathrm{J}\phi_2(x)=\mathrm{J}_1 h + \mathrm{J}_3 h  \mathrm{J} \phi_2(x)
\end{equation}
The absence of $\mathrm{J}_2$ in the above expression is because $\phi_1(x)$ is constant over the interval $\Omega_{x, i}$. The solution of \eqref{Jac} can be obtained via a fixed-point iteration:
\begin{equation}\label{fixed_point}
  \hat{s}^{\ell+1}=\mathrm{J}_1h(x,\phi_1(x),\phi_2(x))+\mathrm{J}_3h(x,\phi_1(x),\phi_2(x)){\hat{s}^\ell}
\end{equation}
{  A direct implementation of \eqref{fixed_point} requires the exact solution $\phi_2(x)$, which in turn demands infinitely many PPG iterations in \eqref{phi_2}, which is an infeasible requirement in practice. To address this, we propose an online approximation scheme that uses the most recent iterate from \eqref{phi_2} as a surrogate for $\phi_2(x)$. }Specifically,
\begin{equation}\label{fixed_point_online}
    \tilde{s}^{\ell+1} = \mathrm{J}_1 h(x, \phi_1(x), \tilde{\boldsymbol{z}}^{\ell+1}) +  \mathrm{J}_3 h(x, \phi_1(x), \tilde{\boldsymbol{z}}^{\ell+1})\tilde{s}^\ell
\end{equation}
This iterative scheme does not require $\phi_2(x)$. It only uses the most recently updated $\tilde{\boldsymbol{z}}^{\ell}$. However, this is not a fixed-point iteration, as the value of $\tilde{\boldsymbol{z}}^\ell$ changes at every iteration step.

For the two iterative schemes in \eqref{fixed_point} and \eqref{fixed_point_online}, we show that the proposed algorithm converges to the true Jacobian. The proof is given in Appendix~\ref{algorithmsection}.

\begin{lemma}
    \label{offline_lemma}
    Under Assumptions~\ref{a 2.1}-\ref{a 2.3}, 
  for a fixed \( x \), if \( \mathcal{P}(x) \) is single-valued, then \( \phi_2(x) \) is differentiable at \( x \), and the sequence generated \eqref{fixed_point} converges to \( \mathrm{J} \phi_2(x) \).
\end{lemma}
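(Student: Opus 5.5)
The plan is to reduce differentiability of $\phi_2$ to an implicit-function argument on a single affine piece of the projection, and then to treat \eqref{fixed_point} as a linear contraction. Fix $x$ in the interior of some $\Omega_{x,i}$. There $\phi_1$ is constant, say $\phi_1\equiv\bar y$, so $\phi_2(x)=\mathrm{BR}_{\boldsymbol Z}(x,\bar y,\theta)$. By Assumption~\ref{a 2.2} and Lemma~\ref{contraction}, this is the unique fixed point of $h(x,\bar y,\cdot)$.

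Since $\mathcal P(x)$ is single-valued, $\omega(x,\bar y,\phi_2(x))$ lies in the interior of a single region $A_j$; this is exactly the case $\delta(x)>0$. On a neighbourhood of that point the projection coincides with an affine map $\boldsymbol z\mapsto D\boldsymbol z+c$, where $D$ is a diagonal $0/1$ matrix. Hence, near $(x,\phi_2(x))$,
\[
h(x',\bar y,\boldsymbol z)=D\bigl(\boldsymbol z-\gamma F(x',\bar y,\boldsymbol z)\bigr)+c ,
\]
and this expression is $C^1$ because $F$ is continuously differentiable. The map $\phi_2$ is continuous, indeed Lipschitz, in $x$. This follows from the standard strong-monotonicity estimate $\|\phi_2(x')-\phi_2(x)\|\le \frac{1}{1-\eta}\|h(x',\bar y,\phi_2(x))-h(x,\bar y,\phi_2(x))\|$. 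Continuity of $\phi_2$ and of $F$ then keep $\omega(x',\bar y,\phi_2(x'))$ inside $A_j$ for all $x'$ close to $x$.

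Next apply the implicit function theorem to $G(x',\boldsymbol z)=\boldsymbol z-h(x',\bar y,\boldsymbol z)$. Its $\boldsymbol z$-Jacobian is $I-\mathrm J_3h$, which is invertible because $\|\mathrm J_3h\|\le\eta<1$. The latter bound holds since $h(x,\bar y,\cdot)$ is $\eta$-Lipschitz and differentiable on the piece, so its Jacobian has operator norm at most $\eta$. The theorem yields a $C^1$ local solution that, by uniqueness of the fixed point, equals $\phi_2$. Hence $\phi_2$ is differentiable at $x$ with $\mathrm J\phi_2(x)=(I-\mathrm J_3h)^{-1}\mathrm J_1h$, which is the unique solution of \eqref{Jac}. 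At an endpoint of $\Omega_{x,i}$ I would use one-sided versions of the same argument, noting that $\phi_1$ is constant on the closed interval.

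For convergence, the iteration \eqref{fixed_point} is the affine map $s\mapsto \mathrm J_1h+\mathrm J_3h\,s$. Its coefficients are evaluated at the fixed point $(x,\phi_1(x),\phi_2(x))$ and so do not vary with $\ell$. Subtracting \eqref{Jac} gives
\[
\hat s^{\ell+1}-\mathrm J\phi_2(x)=\mathrm J_3h\,\bigl(\hat s^{\ell}-\mathrm J\phi_2(x)\bigr),
\]
so $\|\hat s^{\ell}-\mathrm J\phi_2(x)\|\le\eta^{\ell}\|\hat s^0-\mathrm J\phi_2(x)\|\to0$, a linear rate.

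The main obstacle is the localization step. One must justify that single-valuedness of $\mathcal P(x)$ puts $\omega$ strictly inside one region, so that $\mathbb P_{\Omega_{\boldsymbol z}}$ is smooth nearby; one must also obtain the bound $\|\mathrm J_3h\|\le\eta$ rigorously. For the latter I would write $\mathrm J_3 h=D(I-\gamma\mathrm J_3F)$ and use $\|D\|\le1$. The estimate $\|I-\gamma\mathrm J_3F\|\le\eta$ then follows from strong monotonicity and Lipschitz continuity of $F$: this is the same computation as in Lemma~\ref{contraction}, carried out at the level of directional derivatives.
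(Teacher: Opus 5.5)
Your proposal is correct and follows essentially the same route as the paper. Both arguments bound $\|\mathrm{J}_3h\|\le\eta<1$ via the contraction property, apply the implicit function theorem to $\boldsymbol z-h(x,\phi_1(x),\boldsymbol z)=0$ to obtain $\mathrm{J}\phi_2(x)=(I-\mathrm{J}_3h)^{-1}\mathrm{J}_1h$, and read \eqref{fixed_point} as an affine contraction converging linearly to that point. The only difference is that you justify the $C^1$ regularity of $h$ near $(x,\phi_2(x))$ by placing $\omega$ in the interior of a single affine piece of $\mathbb{P}_{\Omega_{\boldsymbol z}}$, whereas the paper simply asserts that $h$ is differentiable there.
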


Next, we extend the convergence results to the online estimation-based iterative scheme \eqref{fixed_point_online}. We first bound the error between the online-estimated sequence \( \tilde{s}^\ell \) and the sequence \( \hat{s}^\ell \) generated by the fixed-point iteration. Then by further bounding the error between \( \hat{s}^\ell \) and \( \mathrm{J}\phi_2(x) \), we obtain the overall error between the online-estimated sequence \( \tilde{s}^\ell \) and \( \mathrm{J}\phi_2(x) \). This error bound converges to zero as $\ell\to\infty$, therefore, $\tilde{s}^\ell$ converges to $\mathrm{J}\phi_2(x)$ as $\ell\to\infty$.

\begin{lemma}
\label{online_lemma}
  Under Assumptions~\ref{a 2.1}-\ref{a 2.3}, fix $x \in \Omega_{x,i}$ and assume that $\mathcal{P}(x)$ is a singleton.
Then the sequence $\{\tilde{s}^{\ell}\}_{\ell \in \mathbb{N}}$
generated by \eqref{fixed_point_online} converges to $\mathrm{J}\phi_2(x)$.
\end{lemma}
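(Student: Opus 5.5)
The plan is to compare the online recursion \eqref{fixed_point_online} with the exact fixed-point recursion \eqref{fixed_point}. Lemma~\ref{offline_lemma} already gives $\hat s^\ell\to \mathrm{J}\phi_2(x)$. Write $A(\boldsymbol z)=\mathrm{J}_1 h(x,\phi_1(x),\boldsymbol z)$ and $M(\boldsymbol z)=\mathrm{J}_3 h(x,\phi_1(x),\boldsymbol z)$, and set $\boldsymbol z^*=\phi_2(x)$. Because $\mathcal{P}(x)$ is a singleton, $\omega(x,\phi_1(x),\boldsymbol z^*)$ lies in the interior of a single region $A_i$. On that region the projection is affine, with a fixed diagonal $0/1$ Jacobian $D$. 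Therefore $A(\boldsymbol z)=-\gamma D\,\mathrm{J}_1F$ and $M(\boldsymbol z)=D(I-\gamma \mathrm{J}_3F)$ near $\boldsymbol z^*$.

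First, by Lemma~\ref{contraction}, $\tilde{\boldsymbol z}^\ell\to\boldsymbol z^*$ linearly with rate $\eta$. By continuity of $\omega$, there is an $\ell_0$ such that for $\ell\ge\ell_0$ the point $\omega(x,\phi_1(x),\tilde{\boldsymbol z}^{\ell})$ stays in the same region $A_i$. From then on the same $D$ is used. Using the Lipschitz bounds on $\mathrm{J}_1F$ and $\mathrm{J}_3F$ from Assumption~\ref{a 2.3}, we get
\[
\|A(\tilde{\boldsymbol z}^{\ell+1})-A(\boldsymbol z^*)\|+\|M(\tilde{\boldsymbol z}^{\ell+1})-M(\boldsymbol z^*)\|\le 2\gamma L_F\|\tilde{\boldsymbol z}^{\ell+1}-\boldsymbol z^*\|\le C\eta^{\ell}.
\]
Next, $\|M(\boldsymbol z)\|\le\eta<1$: the map $I-\gamma \mathrm{J}_3F$ has norm at most $\eta$ by the same strong-monotonicity/Lipschitz computation as in Lemma~\ref{contraction}, and $\|D\|\le1$.

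Define the error $e^\ell=\tilde s^\ell-\hat s^\ell$, with both sequences started from the same initial point. Then
\[
e^{\ell+1}=M(\tilde{\boldsymbol z}^{\ell+1})e^\ell+\big(A(\tilde{\boldsymbol z}^{\ell+1})-A(\boldsymbol z^*)\big)+\big(M(\tilde{\boldsymbol z}^{\ell+1})-M(\boldsymbol z^*)\big)\hat s^\ell .
\]
Since $\hat s^\ell$ converges, it is bounded, say by $S$. Hence for $\ell\ge\ell_0$,
\[
\|e^{\ell+1}\|\le \eta\|e^\ell\|+C(1+S)\eta^\ell .
\]
Unrolling this recursion gives $\|e^\ell\|\le \eta^{\ell-\ell_0}\|e^{\ell_0}\|+C'(\ell-\ell_0)\eta^{\ell}\to0$. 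The finitely many steps before $\ell_0$ only affect the constant, since $\|M\|$ and $\|A\|$ are uniformly bounded on the compact set $\Omega_{\boldsymbol z}$. Combining with Lemma~\ref{offline_lemma} through $\|\tilde s^\ell-\mathrm{J}\phi_2(x)\|\le\|e^\ell\|+\|\hat s^\ell-\mathrm{J}\phi_2(x)\|$ yields the claim.

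The main obstacle is the region-switching issue before $\ell_0$. Off the active region the Jacobian of the projection jumps, so the Lipschitz estimate above fails. This is why I would rely on the singleton assumption: it gives a positive radius $\delta(x)$, and linear convergence of $\tilde{\boldsymbol z}^\ell$ then makes the iterates enter and stay inside that ball after finitely many steps. A secondary point is the uniform bound $\|M\|\le\eta$. If it is awkward to prove in the operator norm, I would instead use that $M(\boldsymbol z^*)$ has spectral radius below one and argue in an induced norm where it is a strict contraction, absorbing the perturbation for large $\ell$.
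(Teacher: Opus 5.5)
Your proposal is correct and follows essentially the same route as the paper. Like the paper, you use the singleton condition to freeze the projection Jacobian after finitely many iterations, bound the perturbations of $\mathrm{J}_1 h$ and $\mathrm{J}_3 h$ by $\gamma L_F\|\tilde{\boldsymbol z}^{\ell}-\phi_2(x)\|$, derive $\|\tilde s^{\ell+1}-\hat s^{\ell+1}\|\le\eta\|\tilde s^{\ell}-\hat s^{\ell}\|+O(\eta^{\ell})$, unroll it to an $O(\ell\eta^{\ell})$ bound, and conclude with Lemma~\ref{offline_lemma} via the triangle inequality. The only differences are cosmetic: the paper writes the perturbation in terms of the successive differences $\|\tilde{\boldsymbol z}^{j+1}-\tilde{\boldsymbol z}^{j}\|$ to match Algorithm 2's stopping test, while you handle the transient before $\ell_0$ and justify $\|\mathrm{J}_3 h\|\le\eta$ more explicitly than the paper does.
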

\begin{algorithm}[t]
\caption{\textbf{Inner Loop}}
\label{alg:inner}
\begin{algorithmic}[1]
\Require step sizes $\gamma$, contraction constant $\eta$.
\State \textbf{Input:} $x, {y}, z, s, \sigma$
\State \textbf{Phase 1: Initialization and Warmstart}
\State $y = \phi_1(x)$
\State $z_{curr} = z$
\Repeat 
    \State $z_{next} = h(x, y, z_{curr})$
    \State $\Delta = \|z_{next} - z_{curr}\|$
    \State $z_{curr} = z_{next}$
\Until{$\Delta \leq \sigma$}


\State \textbf{Phase 2: Main Sensitivity Loop}
\State $\ell \gets 0$
\State $\tilde{z}^0 = z_{curr}$
\State $\tilde{s}^0 = s$
\Repeat
    \State $\tilde{z}^{\ell+1} = h(x, y, \tilde{z}^\ell)$
    \State $\tilde{s}^{\ell+1} = \mathrm{J}_1 h(x, y, \tilde{z}^{\ell+1}) + \mathrm{J}_3 h(x, y, \tilde{z}^{\ell+1})\tilde{s}^\ell$
    \State $\ell \gets \ell + 1$
\Until{$\max\left\{(\eta)^\ell, \sum_{j=0}^\ell \eta^{\ell-j} \|\tilde{z}^{j+1} - \tilde{z}^j\|\right\} \leq \sigma$}
\State \textbf{Output:} $\bar{y} = y, \bar{z} = \tilde{z}^\ell, \bar{s} = \tilde{s}^\ell$
\end{algorithmic}
\end{algorithm}

\subsection{Convergence Analysis of Algorithm 1}
 In this subsection, we prove that Algorithm 1 converges to a DSE (i.e., WDSE, SDSE)
 such that the convergent point satisfying:
 \begin{equation}\label{condition}
  0\in \mathcal{J} \hat{U}_X(x) +\mathcal{N}_X(x)
 \end{equation}
Here, \(\mathcal{J}\hat{U}_X\) denotes the conservative Jacobian of \(\hat{U}_X\) at \(x\), and \(\mathcal{N}_X(x)\) denotes the normal cone to $\Omega_{x,i}$ at \(x\). 


For ease of analysis, we rewrite the projected gradient descent step in Algorithm 1 as
\begin{align*}
 x^{k+1} &= x^k + \beta^k \left( \mathbb{P}_{\Omega_{x,i}} \left[ x^k + \alpha^k {\nabla}\hat{U}_X^k \right] - x^k \right)
 \end{align*}
 Taking $\beta^k=1$, we have
 \begin{equation}
 \begin{aligned}
    x^{k+1}&= \mathbb{P}_{\Omega_{x,i}} [ x^k + \alpha^k {\nabla}\hat{U}_X^k ]
    =\mathbb{P}_{\Omega_{x,i}}[x^k + \alpha^k \zeta^k]\\
    &+(\mathbb{P}_{\Omega_{x,i}} [ x^k + \alpha^k {\nabla}\hat{U}_X^k ]-\mathbb{P}_{\Omega_{x,i}} [x^k + \alpha^k \zeta^k])\\
    &=x^k+\alpha^k(\xi^k+e^k) \quad \ \hspace{4cm} 
 \end{aligned}
\end{equation}
 where $\zeta^k\in \underset{\hat{\zeta}\in\mathcal{J}\hat{U}_X(x^k)}{{\arg\min}}\ \|\mathbb{P}_{\Omega_{x,i}}[x^k + \alpha^k \hat{U}_X^k] - \mathbb{P}_{\Omega_{x,i}}[x^k + \alpha^k \hat{\zeta}] \|$,
 \begin{equation}\label{error}
 \begin{aligned}
 \xi^k :&= -\frac{1}{\alpha^k}(x^k - \mathbb{P}_{\Omega_{x,i}}[x^k + \alpha^k \zeta^k])\\
  e^k :&= \frac{1}{\alpha^k}(\mathbb{P}_{\Omega_{x,i}}[x^k + \alpha^k \hat{U}_X^k] - \mathbb{P}_{\Omega_{x,i}}[x^k + \alpha^k {\zeta}^k]) \quad 
 \end{aligned} 
\end{equation}
Next, we establish the convergence of Algorithm 1 using Theorem 3.2 in \cite{davis2018stochasticsubgradientmethodconverges}. To this end, it remains to verify the following conditions:
\begin{enumerate}
  \item The function $\hat{U}_X(x) $ is definable in $\Omega_{x,i}$.
  \item the sequence $\alpha^k e^k$ is summable.
\end{enumerate}

The following lemmas show that both conditions are satisfied. Their proofs are provided in Appendix~\ref{algorithmsection}
\begin{lemma}
\label{definable}Under Assumptions~\ref{a 2.1}-\ref{a 2.3},
  the function $\hat{U}_X(x) $ is definable in $\Omega_{x,i}$.
\end{lemma}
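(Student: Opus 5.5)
Fix $\theta$ and an interval $\Omega_{x,i}$. On this interval $\phi_1(x)\equiv y_{\max}$ or $\equiv y_{\min}$, so
\[
\hat U_X(x)=B(x,\theta_0)+f_1\bigl(\phi_1,\phi_2(x)\bigr)\,x .
\]
I will show that $\phi_2$ is definable and then invoke closure of definable functions under composition, sums and products. I work in an o-minimal structure containing the data, for instance $\mathbb{R}_{\exp}$ or the structure in which $F$ is assumed definable in Assumption~\ref{a 2.3}. I take $B(\cdot,\theta_0)$ and $f_1$ to be definable in that same structure; the paper does not list this, but it is implicitly needed. The constant $\phi_1$ and the identity map are trivially definable, and so is the interval $\Omega_{x,i}$ itself.

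\textbf{Step 1: the graph of $\phi_2$.} By Assumption~\ref{a 2.2} and Lemma~\ref{contraction}, $\phi_2(x)$ is the unique solution of $\mathrm{VI}(F(x,\phi_1,\cdot,\theta),\Omega_{\boldsymbol z})$. Equivalently, it is the unique fixed point of $h(x,\phi_1,\cdot)$. This gives
\[
\operatorname{graph}\phi_2=\{(x,\boldsymbol z)\in\Omega_{x,i}\times\Omega_{\boldsymbol z}:\ \boldsymbol z=\mathbb{P}_{\Omega_{\boldsymbol z}}[\boldsymbol z-\gamma F(x,\phi_1,\boldsymbol z)]\}.
\]
The projection onto the box is the componentwise $\max/\min$ formula, which is semialgebraic and hence definable. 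Since $F$ is definable, $h$ is definable, and the displayed set is cut out by a definable equation. So the graph of $\phi_2$ is a definable set, and $\phi_2$ is a definable function.

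As an alternative to using the fixed-point equation, I could write the graph with the first-order formula
\[
\forall \boldsymbol w\in\Omega_{\boldsymbol z}:\ \langle F(x,\phi_1,\boldsymbol z),\boldsymbol w-\boldsymbol z\rangle\ge 0 .
\]
Definable sets are closed under quantification, so this also works.

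\textbf{Step 2: composition.} $\hat U_X$ is built from $B$, $f_1$, the constant $\phi_1$, $\phi_2$ and $x$ by composition, multiplication and addition. Each ingredient is definable and these operations preserve definability, so $\hat U_X$ is definable on $\Omega_{x,i}$. Restricting to the closed interval keeps the definability of $\phi_1$ valid at the endpoint zeros of $f_2$, where $\mathrm{BR}_Y$ is set-valued: there I use the interval's one-sided constant selection.

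\textbf{Main obstacle.} The only real subtlety is justifying that all primitives live in one common o-minimal structure. This needs the (implicit) definability of $B$ and $f_1$, and the observation that the box projection is semialgebraic. Once that is settled, Step 1 is the key point: uniqueness of the fixed point from Lemma~\ref{contraction} turns the definable implicit relation into the graph of a function.
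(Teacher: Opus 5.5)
Your proof is correct, but it gets definability of $\phi_2$ by a different and more elementary route than the paper.

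The paper sets $G(x,\boldsymbol z)=\boldsymbol z-h(x,\phi_1(x),\boldsymbol z)$ and notes that $G$ is locally Lipschitz and definable. It then appeals to the Lipschitz definable implicit function theorem of Bolte et al., which needs invertibility of $I-\mathcal{J}_3 h$ along the graph of $\phi_2$. You instead work directly from the definition of a definable map. The graph of $\phi_2$ over $\Omega_{x,i}$ is the definable set cut out by the fixed-point equation, or equivalently by the first-order VI formula. Uniqueness of the fixed point, from strong monotonicity or Lemma~\ref{contraction}, is exactly what makes that set a graph. This uses no Lipschitz regularity and no Jacobian invertibility, so for the lemma as stated your proof is the cleaner one.

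What the paper's heavier tool buys is a by-product outside the lemma. The implicit function theorem it invokes also gives that $\phi_2$ is path differentiable with conservative Jacobian $(I-\mathcal{J}_3 h)^{-1}\mathcal{J}_1 h$. That formula is what the hypergradient analysis in Lemma~\ref{sumable} and Theorem~\ref{convergence} actually relies on. From your route you would get path differentiability only after adding the Lipschitz continuity of $\phi_2$ on $\Omega_{x,i}$ (as shown in Appendix~\ref{epsilonsection}), and you would not get the explicit formula.

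Your remark that $B(\cdot,\theta_0)$ and $f_1$ must be definable in the same o-minimal structure as $F$ is well taken. The paper's final composition step uses this silently, since Assumption~\ref{a 2.3} only asserts that $F$ is definable.
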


The second condition can be satisfied by designing an appropriate stepsize sequence \( \{\alpha^k\} \). Among \( \alpha^k \) and \( e^k \), the stepsize \( \alpha^k \) can be explicitly designed, whereas \( e^k \) is determined in practice by the error \( \sigma^k \).

\begin{lemma}
\label{sumable}
  Suppose that Assumptions~\ref{delta} and \ref{a 2.1}--\ref{a 2.3} hold. Let $\{\alpha^k\}_{k \in \mathbb{N}}$ be a nonnegative, nonsummable, and square-summable sequence and $\{\sigma^k\}_{k \in \mathbb{N}}$ satisfy $\sum_{k=0}^\infty \alpha^k \sigma^k < \infty$.
  If $\mathcal{P}(x^k)$ is a singleton for all but finitely many iterates $x^k$ generated by Algorithm 1,
  then $\{\alpha^k e^k\}_{k \in \mathbb{N}}$, with $e^k$ defined in \eqref{error}, is summable.
\end{lemma}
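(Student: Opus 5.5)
The plan is to decompose $e^k$ into the error of the approximate hypergradient and the mismatch between conservative Jacobian elements. Projection onto the interval $\Omega_{x,i}$ is nonexpansive, so the definition \eqref{error} gives
\begin{equation*}
\|e^k\| \le \frac{1}{\alpha^k}\bigl\|\alpha^k(\nabla\hat U_X^k-\zeta^k)\bigr\| = \|\nabla\hat U_X^k-\zeta^k\|.
\end{equation*}
Since $\zeta^k$ minimizes the projected distance over $\mathcal{J}\hat U_X(x^k)$, I may bound this by the distance to any convenient element. Along the iterates, $\mathcal{P}(x^k)$ is a singleton for all but finitely many $k$. By Lemma~\ref{offline_lemma}, $\phi_2$ is then differentiable at $x^k$. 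I will therefore take the element
\begin{equation*}
\nabla_1 U_X(x^k,\phi_1(x^k),\phi_2(x^k)) + \mathrm{J}\phi_2(x^k)^\top \nabla_3 U_X(x^k,\phi_1(x^k),\phi_2(x^k)),
\end{equation*}
which lies in $\mathcal{J}\hat U_X(x^k)$ by the chain rule for conservative Jacobians. Here $\phi_1$ is constant on $\Omega_{x,i}$, so $y^{k+1}=\phi_1(x^k)$ exactly. The finitely many exceptional indices contribute a finite sum and can be ignored.

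Next, I will compare this element with $\nabla\hat U_X^k$ from \eqref{hypergradient}. Write $z^{k+1}=\bar z$ and $s^{k+1}=\bar s$ for the outputs of Algorithm~\ref{alg:inner}. By Assumption~\ref{a 2.3}(1), $\nabla_x B$ and $\nabla_z f_1$ are Lipschitz, and all quantities are bounded on compact sets. Adding and subtracting terms then gives
\begin{equation*}
\|e^k\|\le C_1\|z^{k+1}-\phi_2(x^k)\| + C_2\|s^{k+1}-\mathrm{J}\phi_2(x^k)\|.
\end{equation*}
Here $\mathrm{J}\phi_2$ is bounded by $\|\mathrm{J}_1h\|/(1-\eta)$, which follows from Lemma~\ref{contraction} and \eqref{Jac}, and $s^{k+1}$ stays bounded for the same reason.

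It then suffices to show that both inner errors are $O(\sigma^k)$. For the first term, Lemma~\ref{contraction} gives
\begin{equation*}
\|\tilde z^\ell-\phi_2(x)\|\le \frac{1}{1-\eta}\|\tilde z^{\ell+1}-\tilde z^\ell\|,
\end{equation*}
and the stopping rules keep successive differences below $\sigma^k$. For the second term, I will quantify Lemma~\ref{online_lemma}. Unrolling \eqref{fixed_point_online} against the exact fixed point of \eqref{Jac} gives
\begin{equation*}
\|\tilde s^\ell-\mathrm{J}\phi_2(x)\|\le \eta^\ell\|\tilde s^0-\mathrm{J}\phi_2(x)\| + L\sum_{j=0}^{\ell}\eta^{\ell-j}\|\tilde z^{j+1}-\phi_2(x)\|.
\end{equation*}
The Lipschitz bounds on $\mathrm{J}_1F$ and $\mathrm{J}_3F$ in Assumption~\ref{a 2.3}(3) give Lipschitz continuity of $\mathrm{J}_1h$ and $\mathrm{J}_3h$ in $z$, but only inside a single smooth piece of the projection. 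Converting $\|\tilde z^{j+1}-\phi_2\|$ into successive differences and using the Phase~2 stopping criterion then bounds the right-hand side by $C_3\sigma^k$. Consequently $\|\alpha^k e^k\|\le C\alpha^k\sigma^k$, which is summable by hypothesis.

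The main obstacle is justifying the use of the piecewise-smooth structure: $\mathrm{J}h$ involves the Jacobian of $\mathbb{P}_{\Omega_{\boldsymbol z}}$, which jumps across region boundaries. This is where Assumption~\ref{delta} enters. For $k>k'$, we have $\sigma^k<\delta(x^k)$, and the inner iterates lie within $O(\sigma^k)$ of $\phi_2(x^k)$. Hence the points $\omega(x^k,y,\tilde z^\ell)$ stay within the ball of radius $\delta(x^k)$ around $\omega(x^k,\phi_1(x^k),\phi_2(x^k))$, possibly after adjusting constants by the Lipschitz constant $1+\gamma\kappa$ of $\omega$. They therefore stay in the same active affine region, so $\mathrm{J}\mathbb{P}$ is constant there. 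I will verify this inclusion carefully first, since it is what legitimizes the Lipschitz estimates above.
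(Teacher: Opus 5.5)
Your proposal is correct and follows the paper's proof. Nonexpansiveness of $\mathbb{P}_{\Omega_{x,i}}$ reduces $\|e^k\|$ to the hypergradient error, which splits into $\|\boldsymbol{z}^{k+1}-\phi_2(x^k)\|$ and $\|s^{k+1}-\mathrm{J}\phi_2(x^k)\|$; each is a constant multiple of $\sigma^k$ by the contraction of $h$ and the Phase~2 stopping rule, so $\sum_k\alpha^k\|e^k\|<\infty$. You unroll the $s$-recursion directly against $\mathrm{J}\phi_2(x^k)$ instead of via the offline iterates $\hat s^\ell$, and you make explicit the uniform bound on the warm start $s^k$ and the role of Assumption~\ref{delta}, both of which the paper leaves implicit; on your caveat about constants, no rescaling of Assumption~\ref{delta} is needed, since only a $k$-independent number of early Phase~2 iterates can leave the active region and their contribution is damped by $\eta^{\ell}\le\sigma^k$.
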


 Let us analyze the convergence of Algorithm 1, whose proof is provided in Appendix~\ref{algorithmsection}


\begin{theorem}
\label{convergence}
  Let Assumptions 2.1--2.3 and 4.1--4.2 hold. Suppose $\{\alpha^k, \sigma^k\}_{k\in\mathbb{N}}$ satisfy the conditions in Lemma 4.5, and $\mathcal{P}(x^k)$ is a singleton for all but finitely many iterates on any interval $\Omega_{x,i}$. Then Algorithm 1 converges to a WDSE or an SDSE strategy.
\end{theorem}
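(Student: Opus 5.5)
The proof splits into two parts. First we establish convergence of the inner gradient loop on each interval $\Omega_{x,i}$ for each fixed $\theta\in\Theta$. Then we show that the finite enumeration in Algorithm~\ref{alg:global_wdse} recovers the WDSE/SDSE value.

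\textbf{Inner loop on a fixed interval.} Fix $\theta$ and an interval $\Omega_{x,i}$. On the interior of $\Omega_{x,i}$, $\phi_1(x)$ is constant (either $y_{\max}$ or $y_{\min}$). Hence $\hat U_X$ is locally Lipschitz there: $\phi_2$ is Lipschitz by the contraction property of Lemma~\ref{contraction} and the Lipschitz bounds in Assumption~\ref{a 2.3}. By Lemma~\ref{definable}, $\hat U_X$ is definable, hence path differentiable, so it admits a conservative Jacobian $\mathcal{J}\hat U_X$. We then cast the update $x^{k+1}=x^k+\alpha^k(\xi^k+e^k)$ in the framework of \cite[Theorem 3.2]{davis2018stochasticsubgradientmethodconverges}. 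Here $\xi^k$ is an exact projected conservative-gradient step and $e^k$ is the perturbation defined in \eqref{error}.

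The required hypotheses are checked as follows.
\begin{enumerate}[label=(\roman*)]
\item Boundedness of the iterates holds because $\Omega_{x,i}$ is compact.
\item The step sizes are nonsummable and square-summable by the choice of $\{\alpha^k\}$.
\item Summability of $\alpha^k e^k$ is given by Lemma~\ref{sumable}. That lemma uses Assumption~\ref{delta} and the singleton condition on $\mathcal{P}(x^k)$, which ensure via Lemma~\ref{online_lemma} that the estimate $s^{k+1}$ lies in the correct differentiability slice and approximates $\mathrm{J}\phi_2(x^k)$ up to $O(\sigma^k)$.
\item The set of critical values is nowhere dense (a weak Sard property), which follows from definability.
\end{enumerate}
The cited theorem then gives that every limit point of $\{x^k\}$ satisfies \eqref{condition}, and that $\hat U_X(x^k)$ converges.

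\textbf{Upgrading stationarity to optimality.} By Assumption~\ref{concave}, $\hat U_X$ is concave on $\Omega_{x,i}$. The conservative Jacobian coincides with the gradient almost everywhere, and for concave definable functions the Clarke superdifferential is itself conservative. The main obstacle is to justify that a point satisfying \eqref{condition} for the particular conservative field produced by the algorithm (the implicit-differentiation field $\mathrm{J}_1 h+\mathrm{J}_3 h\,s$) is also Clarke-stationary. I would handle this in two steps.
\begin{enumerate}[label=(\alph*)]
\item The singleton assumption on $\mathcal{P}$ implies that, eventually, the algorithm's field coincides with the true derivative of $\phi_2$ along the iterates.
\item In one dimension, a conservative field of a concave function must contain the one-sided derivatives at kinks.
\end{enumerate}
Together these give that a stationary point is a global maximizer of $\hat U_X(\cdot,\theta)$ over $\Omega_{x,i}$. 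Since the followers' responses on the interior of the interval are the unique $(\phi_1,\phi_2)$, the weak and strong tie-breaking rules coincide there.

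\textbf{Outer enumeration.} The only points where WDSE and SDSE differ are the finitely many zeros $x_j\in\mathcal{Z}_{f_2}(\theta)$ (Assumption~\ref{a 2.3}). At these points the algorithm evaluates $g(x_j,\theta)$ exactly as the $\inf$ or $\sup$ in \eqref{eq:zero_utility}. The value $\tilde U_X(\cdot,\theta)$ over $\Omega_x$ is therefore the maximum of the interval maxima and the zero-point values. The interval maxima are attained at the endpoints in the closure of each interval, with the endpoint values reassigned to $g$ where needed, which is consistent with the definition of $\tilde U_X$. Taking the maximum over the finite set $\Theta$ then yields a pair $(x^*,\theta^*)$ satisfying Definition~\ref{DSE}, or its SDSE analogue. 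This is consistent with Lemma~\ref{lemma exsit} for SDSE and with Lemma~\ref{lemmaa 2} for WDSE whenever the WDSE exists.
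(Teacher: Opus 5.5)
Your route is the paper's: Theorem~3.2 of \cite{davis2018stochasticsubgradientmethodconverges} on each $\Omega_{x,i}$, with the noise controlled by Lemma~\ref{sumable}, definability supplying the Lyapunov/Sard condition, Assumption~\ref{concave} turning stationarity into maximality, and then enumeration over intervals, zeros of $f_2$, and $\Theta$.

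\textbf{Your step (b).} Your worry about spurious critical points of a conservative field is legitimate; the paper silently treats $\mathcal{J}\hat U_X$ as the convex-valued Clarke object. However, step (b) points the wrong way. Knowing that the field contains $f'_{\pm}$ at a kink does not exclude a spurious $0$. For example, take $f(x)=x$ on $[0,2]$ with $D(x)=\{1\}$ for $x\neq 1$ and $D(1)=[0,1]$. This $D$ is conservative and makes the interior point $x=1$ critical. What you need is that the limiting set lies \emph{inside} the Clarke superdifferential $[f'_+(\bar x),f'_-(\bar x)]$. Step (a) gives exactly this. Eventually $\zeta^k=\hat U_X'(x^k)$ at points of strict differentiability, so you can apply the cited theorem with the Clarke subdifferential itself. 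Its outer semicontinuity also discharges the averaging condition (Condition 5 in the paper's list), which your checklist omits.

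\textbf{The genuine gap.} It lies in the outer enumeration for WDSE, and the paper's proof shares it. Algorithm~\ref{alg:global_wdse} does not ``reassign endpoint values to $g$''. On $\Omega_{x,i}$ it maximizes the interior branch over the closed interval. If that maximum sits at a zero $x_j$ where $g(x_j,\theta)$ is strictly smaller, the reported value is only $\sup_x \tilde U_X(x,\theta)$, and the returned $x$ is not a WDSE.

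The example preceding Fig.~\ref{figexample 1} shows this concretely: $U_X=-2x^2+2x-yx$ and $U_Y=(x-1)(x-2)y$, with $\Theta=\{\theta_0\}$ and a dummy bottom player. It satisfies every hypothesis of the theorem. Yet the run on $[1,2]$ converges to $x=1$ and reports utility $1$, while $g(1,\theta)=-1$, and no WDSE exists.

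So for WDSE you must make the condition of Lemma~\ref{lemmaa 2} an explicit hypothesis: $f_2(x^*,\theta^*)\neq 0$, or upper semicontinuity of $\bar U_X(\cdot,\theta^*)$ at $x^*$. You also need ties broken toward candidates whose value is actually attained. Hedging with ``whenever the WDSE exists'' is not enough; without these additions you only obtain a limit of $\epsilon$-WDSE. The SDSE case is fine, because there $g(x_j,\theta)=\sup_y(\cdot)$ dominates every branch value at $x_j$.
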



In practice, explicitly obtaining all the zeros of the function \( f_2(x, \theta) \) is very hard. Nevertheless, we can approximate the zeros of \( f_2 \) together with an associated error bound. Consequently, it is necessary to quantify the error in the computed DSE resulting from the estimation of the zeros of \( f_2 \). For WDSE,  it is impossible for its utility to exceed that of every other strategy by a uniform positive constant; in other words, there always exist strategies whose utilities are arbitrarily close to the utility achieved under the WDSE. 
 Thus, we focus only on the relationship between the estimation error and the WDSE.  For a fixed parameter $\theta$,  let the zeros of $f_2(x,\theta)$ be  $\{x_1,x_2,\ldots,x_{q_\theta}\}$. In the absence of analytical solutions for these zeros, we can instead determine closed intervals $I_{j,\theta}$ such that $x_j\in I_{j,\theta}$. 
 The following theorem establishes a quantitative relation between the estimation accuracy of these zeros and the approximation error of the computed WDSE and shows that Algorithm 1 converges to an $\epsilon$-WDSE.


\begin{figure}[t]

\centering
\includegraphics[width=1\columnwidth]{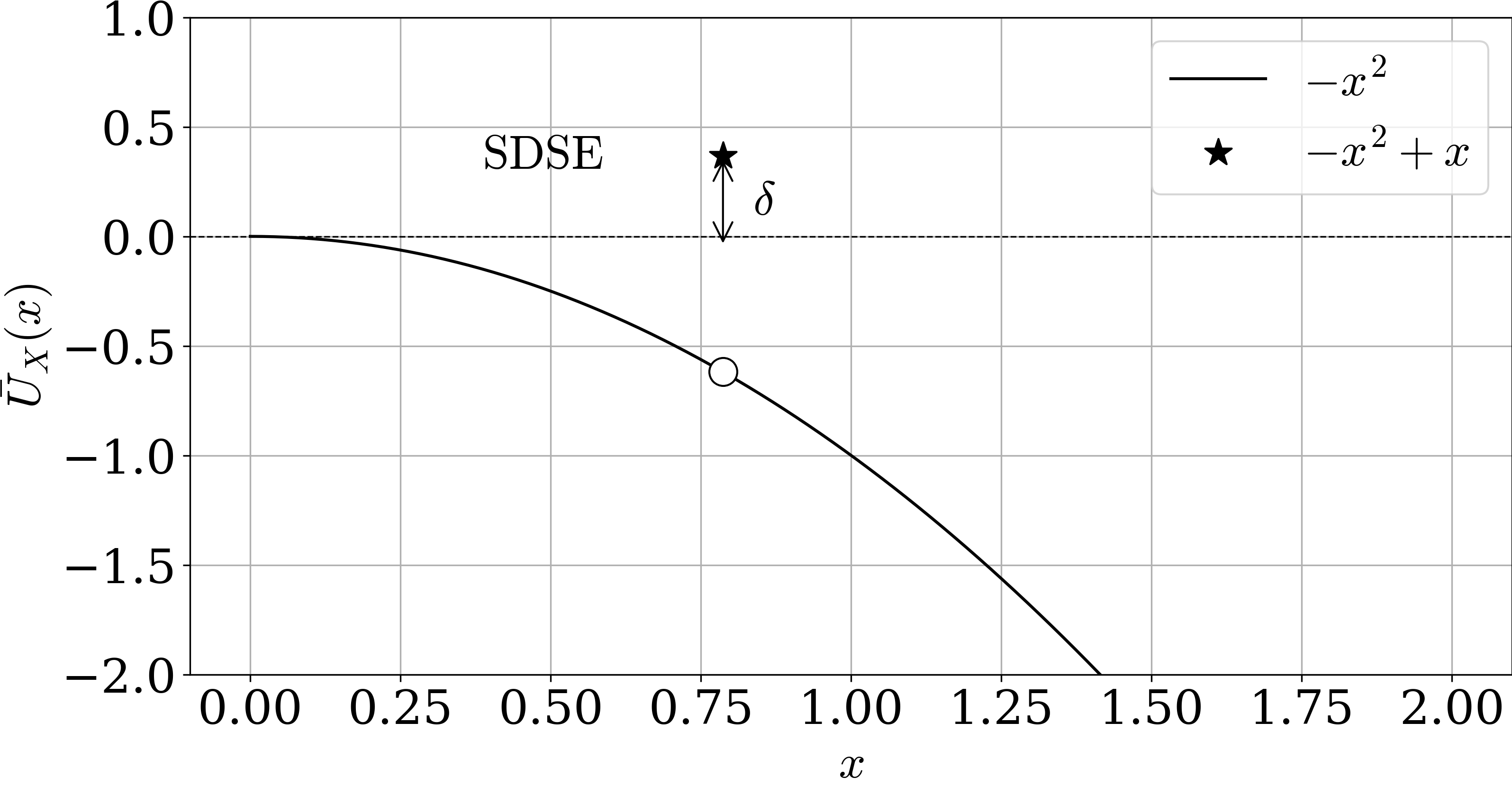}
\captionsetup{justification=centering}
\caption{The leader's utility.}
 \label{figexample 2}
\end{figure}

\begin{theorem}\label{theorem4.2}
  {
  Suppose Assumptions~\ref{concave}-\ref{delta} and \ref{a 2.1}-\ref{a 2.3} hold, where $\lambda(\cdot)$ denotes the Lebesgue measure on $\mathbb{R}$. Then there exists a constant $L$ such that  if $\lambda(I_{j,\theta}) \leq \frac{\epsilon}{L}$ for all $j$ and $\theta$, then Algorithm 1 converges to an $\epsilon$-WDSE strategy.
  }
\end{theorem}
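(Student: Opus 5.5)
The idea is to compare the utility Algorithm 1 actually computes, using the approximate zero-intervals $I_{j,\theta}$, with the exact pessimistic value $\sup_{x,\theta}\bar{U}_X(x,\theta)$. The loss must be controlled by a Lipschitz constant times $\max_j\lambda(I_{j,\theta})$. Since $\Theta$ is finite (Assumption~\ref{a 2.1}), it suffices to fix $\theta$ and show that the algorithm's output for that $\theta$ is within $\epsilon$ of $\sup_x\bar{U}_X(x,\theta)$. The maximum over the $q$ values of $\theta$ then inherits this bound.

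\textbf{Step 1 (Lipschitz constant).} Fix $\theta$ and $y\in\{y_{\min},y_{\max}\}$, and consider the smooth branch $x\mapsto U_X(x,y,\mathrm{BR}_{\boldsymbol Z}(x,y,\theta),\theta_0)$.
\begin{itemize}
\item By Assumption~\ref{a 2.2} and the implicit-function argument behind \eqref{Jac} and Lemma~\ref{offline_lemma}, $\mathrm{BR}_{\boldsymbol Z}$ is Lipschitz in $x$. Its constant is bounded by $\|\mathrm{J}_1F\|/\mu$, or equivalently $\sup\|\mathrm{J}_1h\|/(1-\eta)$ using Lemma~\ref{contraction}.
\item Together with Assumption~\ref{a 2.3}(1) and compactness of $\Omega$, each branch is Lipschitz on $\Omega_x$ with a constant $L_{y,\theta}$.
\end{itemize}
Set $L$ to be the maximum of $L_{y,\theta}$ over both values of $y$ and all $\theta\in\Theta$, possibly multiplied by a factor of 2 to absorb endpoint effects. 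The same bound holds for $y\mapsto U_X(x,y,\mathrm{BR}_{\boldsymbol Z}(x,y,\theta))$ and hence for the inf over $y$, which controls $g(x,\theta)$ near zeros.

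\textbf{Step 2 (convergence on certified intervals).} With the intervals $I_{j,\theta}$ removed, $\Omega_x$ splits into closed subintervals on which $f_2$ has constant strict sign. So $\mathrm{BR}_Y$ is constant there and $\hat{U}_X$ is the smooth branch. On each such subinterval, Theorem~\ref{convergence} together with Assumption~\ref{concave} gives convergence to a stationary point satisfying \eqref{condition}; by concavity this point is the maximizer of $\hat U_X$ on the subinterval.

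On each $I_{j,\theta}$, the algorithm takes the pessimistic value: the minimum over both branches and over the zero value $U_{zero}$, evaluated at a representative point. For any $x\in I_{j,\theta}$, $\bar U_X(x,\theta)$ is at least this pessimistic infimum over $y\in\Omega_y$ at $x$. By Step 1, that infimum differs from its value at the representative point by at most $L\lambda(I_{j,\theta})\le\epsilon$.

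\textbf{Step 3 (comparison with the supremum).} Take any $x$ with $\bar U_X(x,\theta)$ close to the supremum. There are two cases.
\begin{itemize}
\item If $x$ lies in a certified subinterval, Step 2's maximizer already dominates it.
\item If $x\in I_{j,\theta}$, then $\bar U_X(x,\theta)\le$ the branch value at $x$. The nearest endpoint of $I_{j,\theta}$ is certified and lies within distance $\lambda(I_{j,\theta})$, so by Lipschitzness its value is at least the branch value minus $\epsilon$.
\end{itemize}
In both cases the algorithm's selected value is at least $\bar U_X(x,\theta)-\epsilon$. Hence the output satisfies Definition~\ref{epsilon-WDSE}.

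\textbf{Main obstacle.} The delicate step is the second case of Step 3. Near $x_j$, $\bar U_X$ may jump, as in the example of Fig.~\ref{figexample 1}, so the near-supremal value may only be approached from one side and on one branch. I must check that the endpoint of $I_{j,\theta}$ on that side lies in a region where the same branch is active; this holds because $f_2$ has only finitely many zeros, so its sign is locally constant on each side. If continuity of the branch as $x$ crosses $x_j$ fails, I would instead use the one-sided limit together with upper semicontinuity of $\bar U_X$ from Lemma~\ref{lemmaa 2}.
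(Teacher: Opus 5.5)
Your proposal is correct and follows essentially the paper's route. It bounds $\mathrm{BR}_{\boldsymbol Z}$ as Lipschitz in $x$ (the paper gets $\gamma L_x/(1-\eta)$ from the contraction of $h$), hence each branch and the pessimistic value $g$, and then compares the true supremum with the algorithm's values on the certified subintervals and on the $I_{j,\theta}$ of length at most $\epsilon/L$. Your one-sided domination argument, plus your check that the value reported on $I_{j,\theta}$ lower-bounds $\bar U_X$ at the output point, is a leaner version of the paper's two-sided estimate $|U^*-U^\dagger|\le L\delta$.

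There are two small fixes. First, in Step 3 compare $x$ with the endpoint of $I_{j,\theta}$ on the same side of $x_j$ as $x$, not the nearest one, as your last paragraph already recognizes. Second, drop the semicontinuity fallback: it is unnecessary because each branch $x\mapsto U_X(x,y_{\max},\mathrm{BR}_{\boldsymbol Z}(x,y_{\max},\theta))$ (likewise for $y_{\min}$) is Lipschitz on all of $\Omega_x$, and it is unavailable anyway because $\bar U_X$ is in general only lower semicontinuous, as the paper's counterexample to WDSE existence shows.
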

The proof of Theorem~\ref{theorem4.2} is provided in Appendix~\ref{epsilonsection}

Unlike the WDSE case, for the SDSE, when the corresponding zero points do not admit a closed-form, the optimal value may be attained at analytically intractable zeros, in which case the associated equilibrium strategy cannot be explicitly recovered. Thus, we omit the analysis of SDSE here.
The following example illustrates this issue, where the deception parameters and the players $\boldsymbol{Z}$ are omitted for simplicity. Consider the following two-player setting,
\begin{equation}
\begin{aligned}
    &U_X=-x^2+(1-y)x\\
    &U_Y=(\sin x-\frac{x}{2})^2y+h(x)\\
    &x\in[0,1], y\in[-1,1], \Theta=\emptyset
\end{aligned}
\end{equation}
Under this setting, we obtain
\begin{equation} \label{error SDSE}
    \begin{aligned}
    \tilde{U}_X(x,\theta)&=\max_{y\in \text{BR}_Y(x,\theta)} \max_{\boldsymbol{z}\in \text{BR}_{\boldsymbol{Z}}(x,y,\theta)} U_X(x,y,\boldsymbol{z})\\
    &=\begin{cases}
        -x^2, \quad &2\sin x\neq x\\
        -x^2+x, \quad & 2 \sin x =x
    \end{cases}
    \end{aligned}
\end{equation}
As shown in  Fig.~\ref{figexample 2}, the leader’s optimal utility is attained at the solution of \( 2\sin x = x \), but this solution lacks a closed-form expression, and any deviation from the corresponding strategy yields a utility at least $\delta$ lower.

\section{Application scenarios}
\label{simulation}

In this section, we demonstrate our theoretical results 
with the two application scenarios, namely, secure wireless communication and the defense against IA-FDI attacks in microgrids.
\subsection{Secure Wireless Communication}

\begin{figure}[htbp]
    \centering
    \includegraphics[width=1\linewidth]{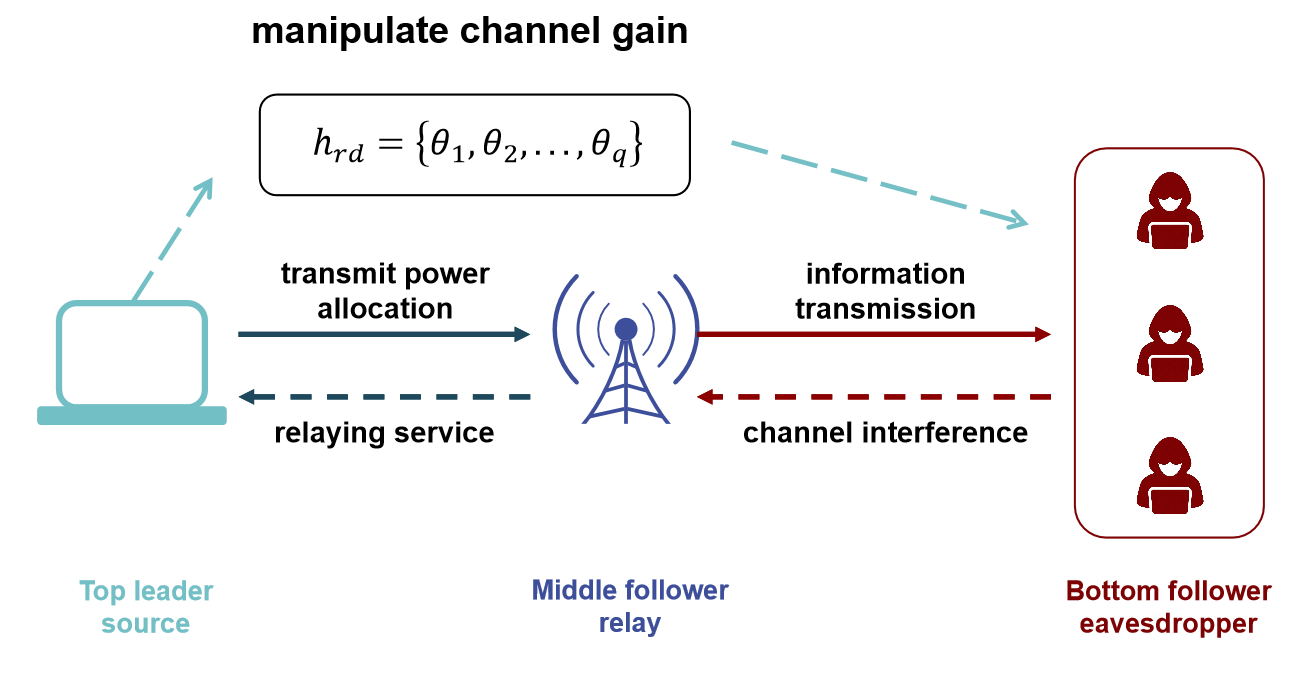}
    \caption{Wireless communication scenario}
    \label{fig:ppt1}
\end{figure}

In wireless communication, as
shown in Fig.~\ref{fig:ppt1}, the top-level leader $X$ is a source node aiming to maximize its secure transmission rate to a legitimate destination. To achieve this, the source purchases transmit power $x$ from a relay (middle-level follower $Y$), which sets a unit price $y$ to maximize its profit. Simultaneously, a set of malicious eavesdroppers (bottom-level followers $\boldsymbol{Z}$) choose an interference power $\boldsymbol{z}$ to disrupt the communication link.  The source utilizes its private knowledge of the true channel state information to broadcast a strategically manipulated signal. {Specifically, the source employs a deception strategy by misrepresenting the channel quality, effectively distorting the eavesdroppers' perception of the propagation environment},  thereby misleading the eavesdroppers into adopting suboptimal jamming strategies.
The true Signal-to-Interference-plus-Noise Ratio (SINR) at the destination is defined as $\mathrm{SINR}_{d_0}(x) = \frac{|h_{{rd}_0}|^2 x}{\eta+ |h_{ed}|^2(\sum\limits_{i=1}^N z_i) }$, while the SINR perceived by the eavesdroppers under the deceptive signal $h_{rd}$ is $\mathrm{SINR}_d(x) =\frac{|h_{{rd}}|^2 x}{\eta+ |h_{ed}|^2\sum\limits_{i=1}^N z_i}$.
Then the game for the three parties is modeled by the following optimization problems \cite{8454822,8017519}
\begin{equation}
\begin{aligned}
&\max_{x \in \Omega_x} U_X(x, y, \boldsymbol{z}, \theta_0) = d_1 \mathrm{SINR}_{d_0}(x) - d_4 x y \\
&\max_{y \in \Omega_y} U_Y(x, y, \boldsymbol{z},\theta) = x y - d_3 x \\
&\max_{z_i \in \Omega_{z_i}} U_{Z_i}(x, y, z_i, \boldsymbol{z}_{-i},\theta) = -\log_2 \big(1 + \mathrm{SINR}_d(x)\big) - d_{2i} z_i
\end{aligned}
\end{equation}
where $d_1$ represents the gain coefficient, and $d_{2i}, d_3, d_4$ are cost coefficients.

The Blind setting shown in purple in  Fig.\ref{fig: impact of deception} represents the worst-case utility for the leader when it is unable to determine whether the follower possesses the capability to observe its actions. The deception parameter $\theta\in[0.5,1]$ is defined as the manipulated channel quality $h_{rd}$, and the optimal deception parameter is attained at $\theta=0.5$. As depicted in Fig.~\ref{fig: impact of deception}, the deception parameter $\theta$ serves as a control knob: in particular, choosing $\theta=0.5$ not only yields the optimal leader utility in the leader--follower scheme, but also ensures robustness against uncertainty in the follower's decision scheme.

\begin{figure}[htbp]
    \centering
    \includegraphics[width=0.9\linewidth]{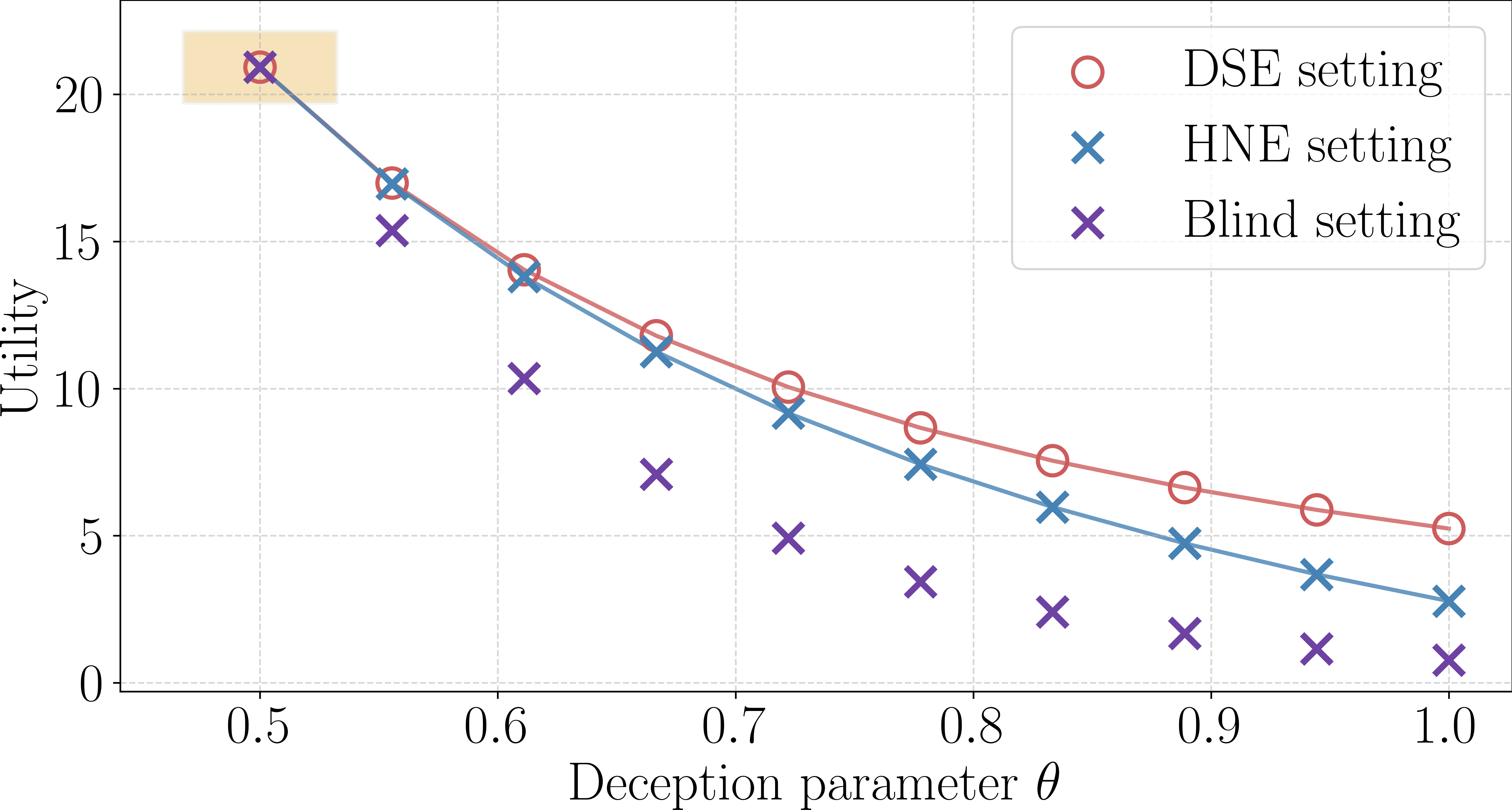}
    \caption{The Impact of Deception}
    \label{fig: impact of deception}
\end{figure}

In Fig.~\ref{fig:convergence_comparison}, we evaluate the convergence performance of  Algorithm 1 in seeking the WDSE. 
The system parameters are set as follows.  Set $\Theta=\{0.8,1,1.2\}$. The number of eavesdroppers is $N=3$. The channel power gains are set to $\theta_0=h_{rd_0} = 1.0$ and $h_{ed} = 0.8$. The background noise is $\eta = 0.5$. Regarding the utility coefficients, we set the gain parameter $d_1 = 10$, and the cost parameters $d_3=2$, $d_4 = 7$ and $\mathbf{d}_2 = [0.5, 0.6, 0.7]^\top$. The strategy space for the insider is bounded by $y \in [0, 2]$, and step size $\alpha^k=\frac{1}{k^{0.6}}$ and tolerance $\sigma^k=\frac{1}{k^{1.1}+1}$. In this setting, \( \mathrm{BR}_y(x, \theta) \) and \( \mathrm{BR}_{\boldsymbol{z}}(x, y, \theta) \) are single-valued and thus, the WDSE and SDSE are equivalent.


As shown in Fig.~\ref{fig:convergence_comparison}, the solid lines represent the iterative updates generated by Alg.~1, while the dashed lines denote the WDSE.  
{In this setting, the optimal deception parameter is $\theta^*=0.8$}.
The results demonstrate that both the strategy sequence and the corresponding utility values converge to the  WDSE. 
\begin{figure}[htbp]           
    \centering                 
    \includegraphics[width=0.51\textwidth]{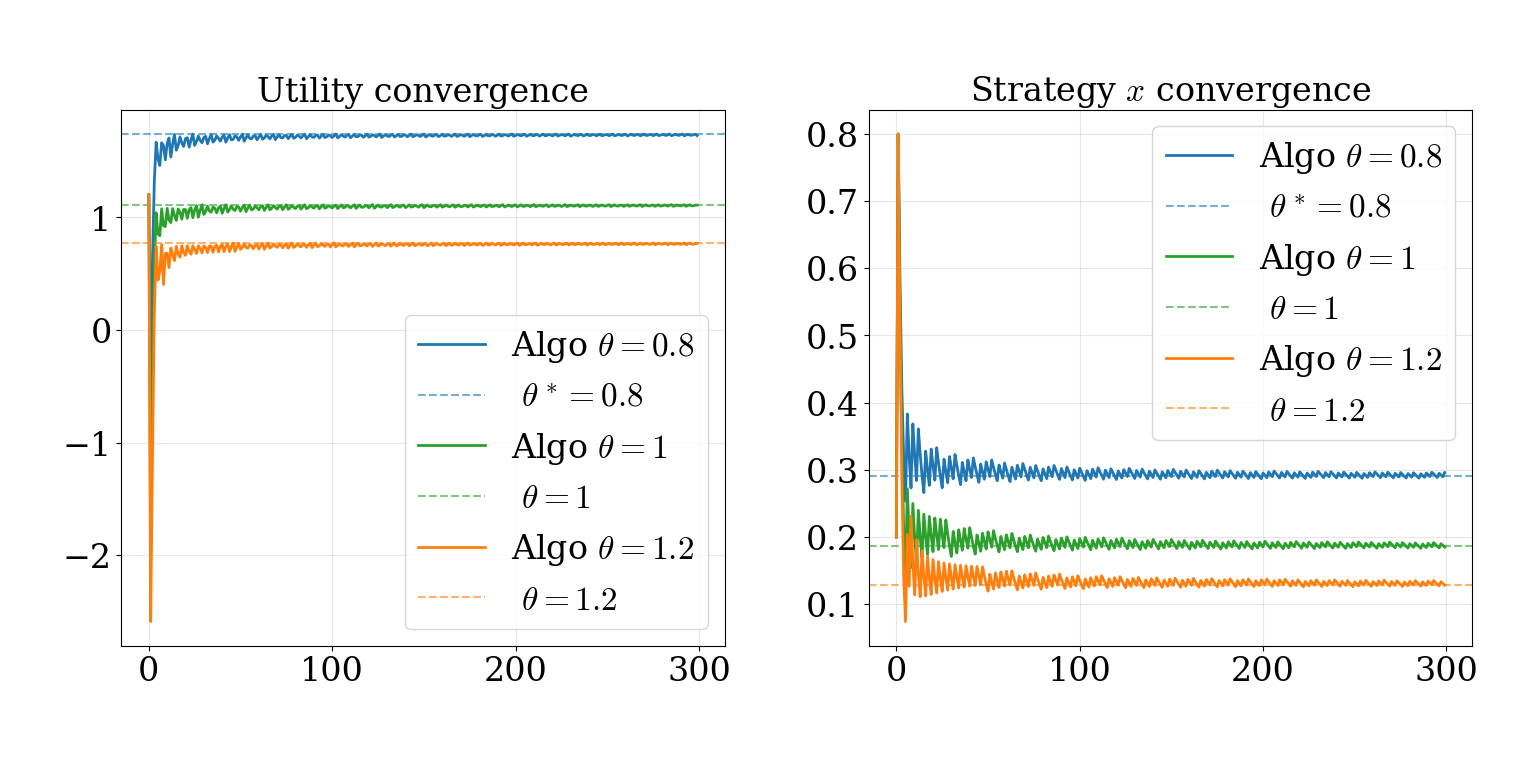}
    \caption{Convergence performance of Alg.~1}   
    \label{fig:convergence_comparison}        
\end{figure}

Next, we demonstrate the consistency between WDSE and HNE.  The parameters
$d_1,d_3,d_4$, serving as gain and cost coefficients, significantly influence secure wireless communication. Therefore, we focus on $d_1,d_3$, and $d_4$ and examine how their variation affects consistency.
The simulation results are presented in Fig.~\ref{consistencefig}.

\begin{figure}[htbp]           
    \centering                 
 \includegraphics[width=0.5\textwidth]{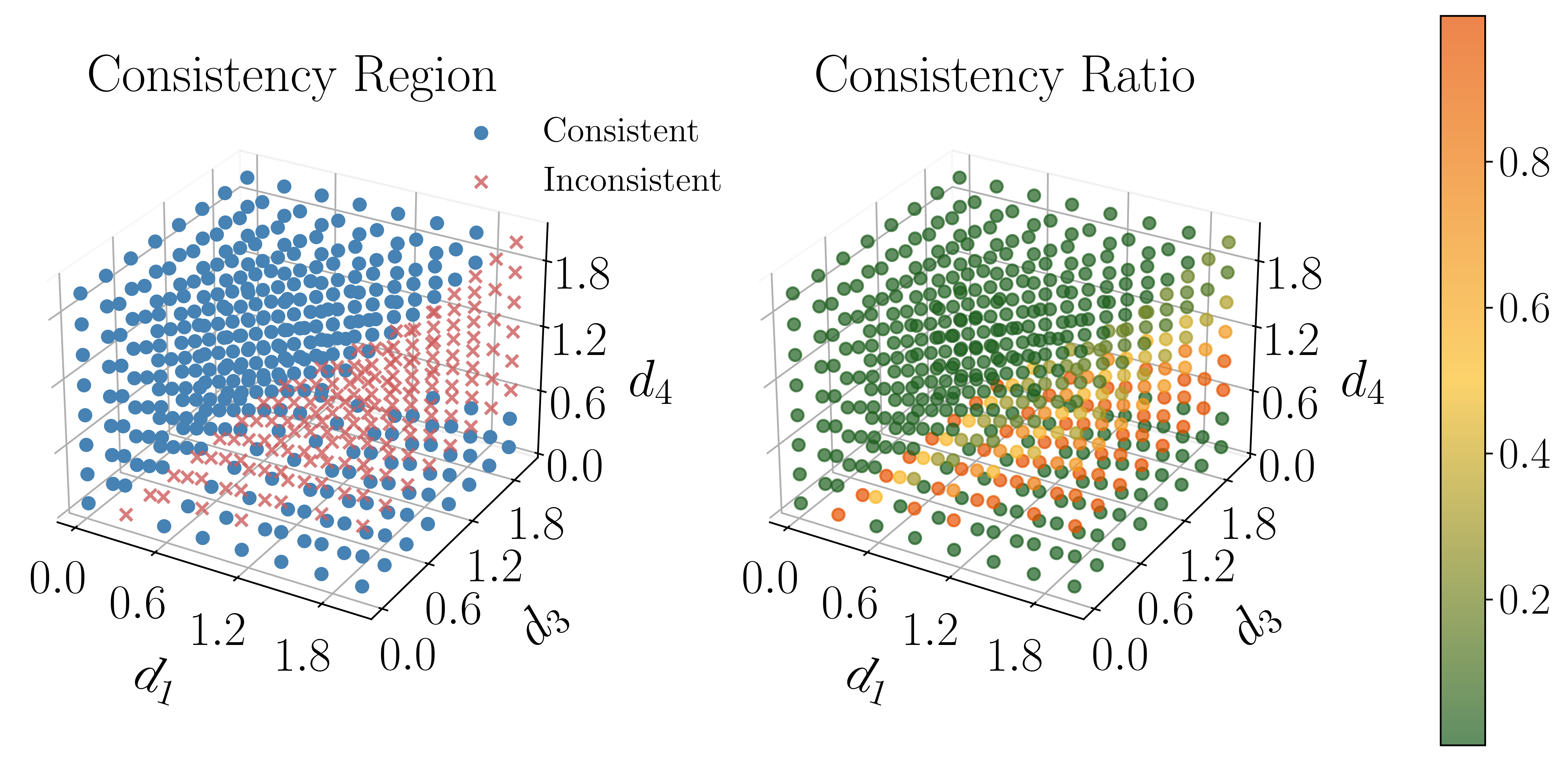} 
    \caption{The relationship between WDSE and HNE under different environment settings.}   
    \label{consistencefig}        
\end{figure}

 As observed in Fig.~\ref{consistencefig}, when the ratio $\frac{d_4}{d_1}$ is relatively small or large, the WDSE is consistent with the HNE. Consequently, the source can confidently adopt the WDSE strategy. The right panel of Fig.~\ref{consistencefig} presents the error heatmap illustrating the deviation between WDSE and HNE under various parameter settings. As observed, the blue regions denote near-perfect consistency, indicating that WDSE aligns closely with HNE. In contrast, the cyan-green regions signify a significant deviation between the two equilibria. The results indicate that, as long as the consistency condition holds, the channel environment exhibits robustness, relieving the sender of any strategic selection dilemma.

\subsection{Defense against IA-FDI attack in microgrid}

\begin{figure}[htbp]
    \centering
    \includegraphics[width=1\linewidth]{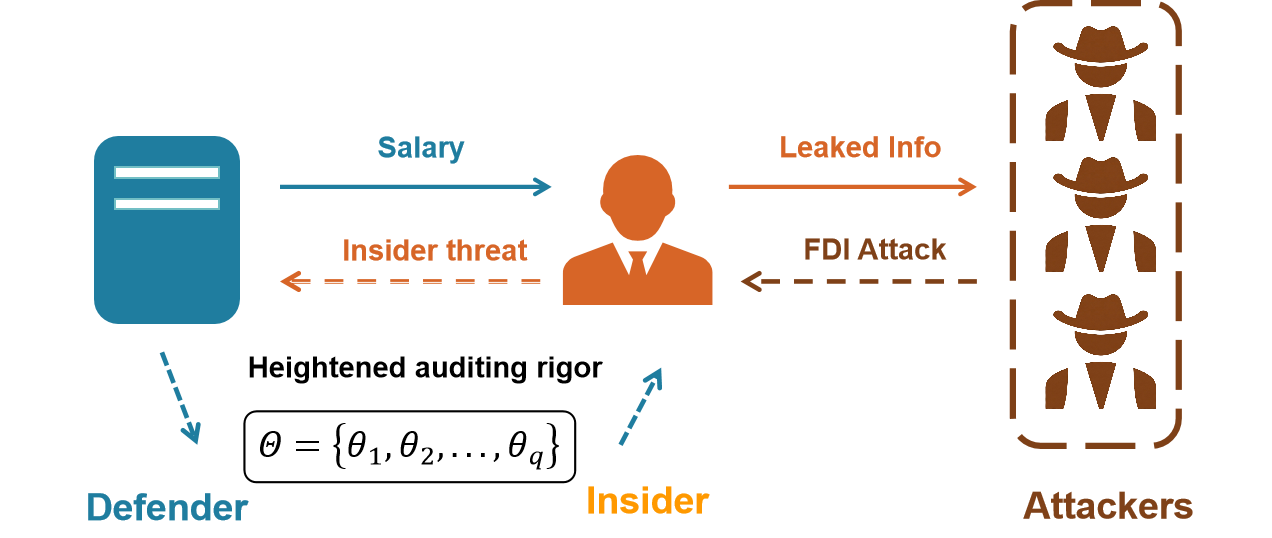}
    \caption{Defense against IA-FDI attack in microgrid}
    \label{fig:ppt2}
\end{figure}

Consider a system defender aiming to safeguard the power grid against false data injection attacks while minimizing operational and incentive costs shown in Fig.~\ref{fig:ppt2}.  The defender  $X$ determines a salary $x$ as an incentive for the insider to protect the system. {Simultaneously, the defender strategically leverages private information regarding the penalty mechanism to signal a heightened level of auditing rigor, thereby inflating the insider's perceived cost of betrayal.}
The insider $Y$ observes this monetary signal and weighs the risk of betrayal against potential external bribes, thereby determining the probability $y$ of leaking internal information, such as critical topological data of the target power system.
Consequently, the external attackers $\boldsymbol{Z}$ adjust their false data injection intensity $\boldsymbol{z}$ in response to the insider’s leaked information, injecting false voltage or current data into the monitoring system to maximize the damage to the power grid. 
The deception refers to the defender's private information about the penalty parameter $\beta$.

The expected damage to the power grid is defined as $D(y, \boldsymbol{z}) = \lambda (1 + \alpha y) \sum_{i=1}^N z_i$, where the information leakage $y$ from the insider amplifies the impact of the attack vectors $\boldsymbol{z}$ injected by attackers.
The optimization problems for the three parties are modeled as
\begin{equation}
\begin{aligned}
&\max_{x \in \Omega_x} U_X(x, y, \boldsymbol{z}) =L - x - \lambda b(y) \sum_{i=1}^N z_i \\
&\max_{y \in \Omega_y} U_Y(x, y) = (V_{\text{base}} - \beta x) y + x \\
&{\max_{z_i \in \Omega_{z_i}} U_{Z_i}(x, y, z_i, \boldsymbol{z}_{-i}) = b(y) z_i - \frac{1}{2} c(x) z_i^2 + \sum_{j\neq i}g_{ij}z_iz_j}
\end{aligned}
\end{equation}
Here, $L$ denotes the initial system utility or the intrinsic value of the microgrid assets under normal operation, $\lambda$ denotes the system loss coefficient per unit of attack intensity, and $\alpha$ captures the amplification effect of insider information leakage. $V_{\text{base}}$ denotes the insider’s baseline bribery utility, while $\beta$ quantifies the penalty imposed by the defender. For attackers, $b(y)=1+\alpha y$ and $c(x)=1+\delta x$ represent the marginal benefit and cost coefficients, respectively, and $g_{ij}$ characterizes the coupling strength between attackers.

The default parameters in this setting are configured as follows. The system consists of $N=3$ attackers. The loss coefficient and amplification factor are set to $\lambda = 1.5$ and $\alpha = 0.5$, respectively. For the insider, the baseline utility is $V_{\text{base}} = 3.0$ with a penalty factor $\beta = 1.0$. The cost scaling parameter for attackers is $\delta = 0.8$. The interaction matrix $G$ is set to be uniform with coupling strength $g_{ij} = 0.2$ for all $i \neq j$ and $0$ on the diagonal. The strategy spaces are bounded by $x \in [0, 5]$, $y \in [0, 1]$, and $z_i \in [0, 5]$. Set $\Theta=\{\beta\}$, implying a fixed deception environment.

The absence of insider modeling leads to a marked degradation in leader utility. As demonstrated in Fig.~\ref{fig:insider},  the left $y$-axis reports the leader's utility, while the right $y$-axis shows the utility gap between the insider-aware and insider-unaware cases. When the defender adopts a fixed salary policy after implementing a defense strategy without accounting for insider behavior, the defender's utility drops significantly compared to the insider-aware model.
\begin{figure}[htbp]
    \centering
    \includegraphics[width=1\linewidth]{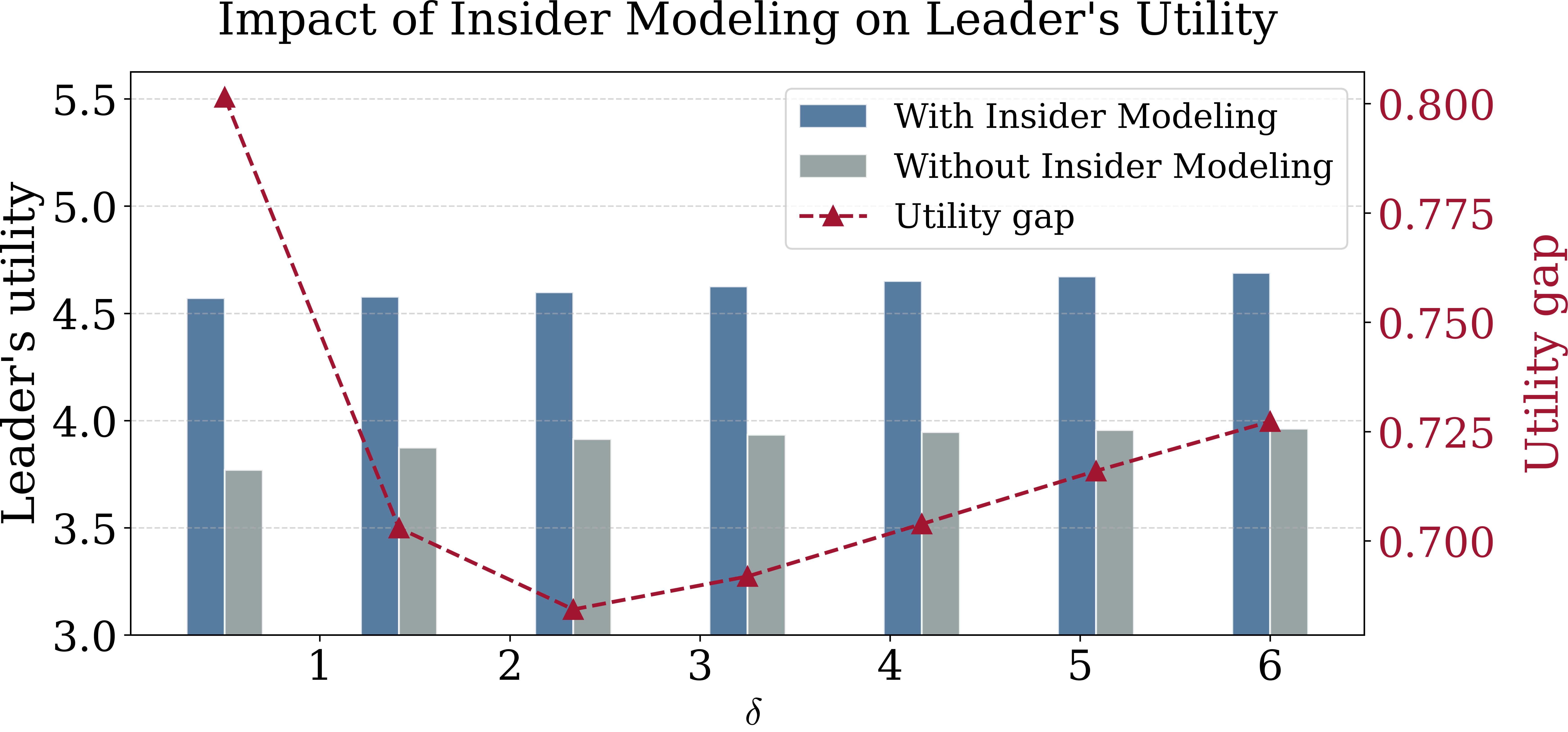}
    \caption{The  role of insider}
    \label{fig:insider}
\end{figure}

We next show the convergence to an $\epsilon$-WDSE even when the zeros of the follower's reaction function $f_2(\cdot, \theta)$ cannot be solved explicitly. 
In this experiment, we partition the leader's strategy space $\Omega_x=[0,5]$ into sub-intervals. Let $\tilde{\Omega}_{x,1}=[0,2.8]$, $\tilde{\Omega}_{x,2}=[3.1,5]$, and the gap interval $I_1=[2.8,3.1]$.

As shown in Fig.~\ref{epsilon-WDSEfig}, even when the zero point can only be localized within the interval $I_1$, an $\varepsilon$-WDSE is still attainable. The leader may then adopt this solution as an approximation to the optimal strategy, ensuring that its utility deviates from the supremum of achievable utilities by at most $\epsilon$.  Fig.~\ref{interval bound} shows that as the length of the gap interval $I_1$ containing the zero decreases, the gap between the utility achieved by our algorithm and the optimal utility gradually diminishes.

\begin{figure}[htbp]
    \centering       \hspace{-0.5cm}   \includegraphics[width=0.517\textwidth]{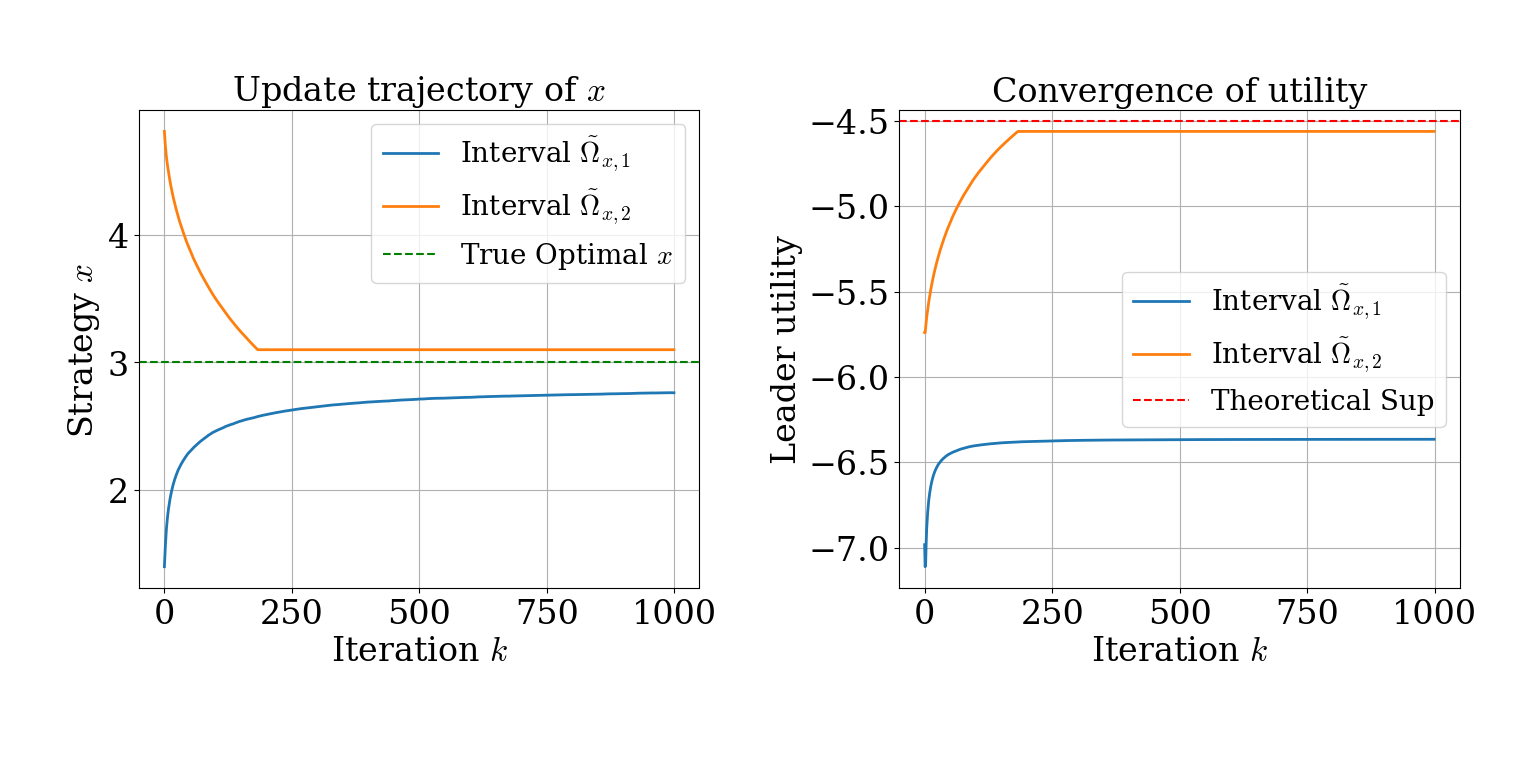}
    \caption{ $\epsilon$-WDSE convergence}   
    \label{epsilon-WDSEfig}  
\end{figure}

\begin{figure}[htbp]
    \centering
    \includegraphics[width=0.9\linewidth]{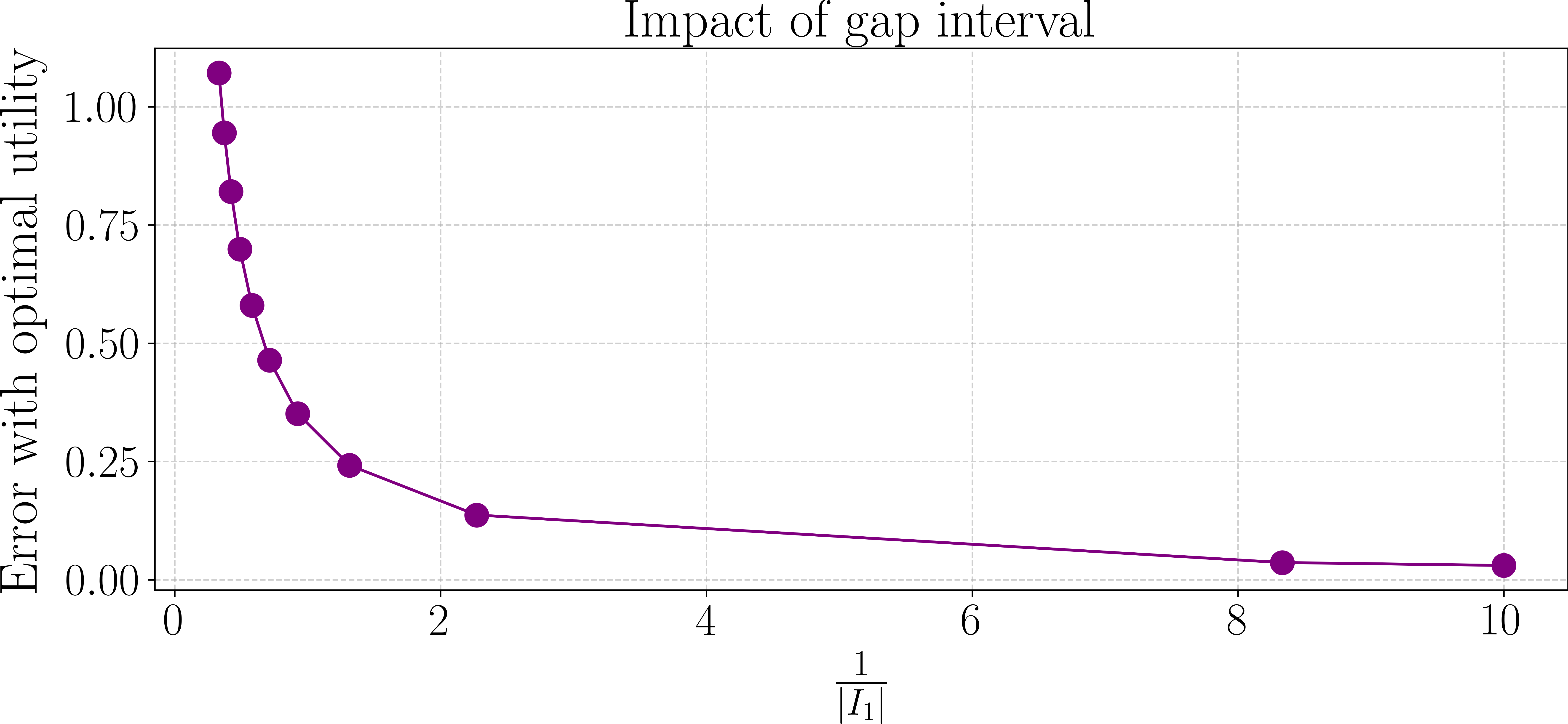}
    \caption{The relation between $\epsilon$ and gap interval $I_1$}
    \label{interval bound}
\end{figure}

Setting $\Theta=\{0.8,1,1.2\}$, we obtain $\theta^*=0.8$. 
Fig.~\ref{fig:consistency_utility} illustrates the Leader's utility under  $(x^*, y^*, z^*,\theta^*)$, $(x^*, y^\diamond, z^\diamond,\theta^*)$, and $(x^\diamond, y^\diamond, z^\diamond,\theta^*)$. These evaluations are conducted across a set of varying parameter values $\delta \in \{0.8, 1.4, 2.0, 2.6, 3.0\}$.
As shown in Fig.~\ref{fig:consistency_utility}, when $\delta\in [0.8,1.4]$, the leader’s utility does not decrease regardless of whether insiders and attackers adopt the WDSE or HNE strategy. Hence, under this parameter regime, the defender can guarantee robustness of its utility by employing the DSE strategy.
\begin{figure}[htbp]
    \centering
    \includegraphics[width=0.9\linewidth]{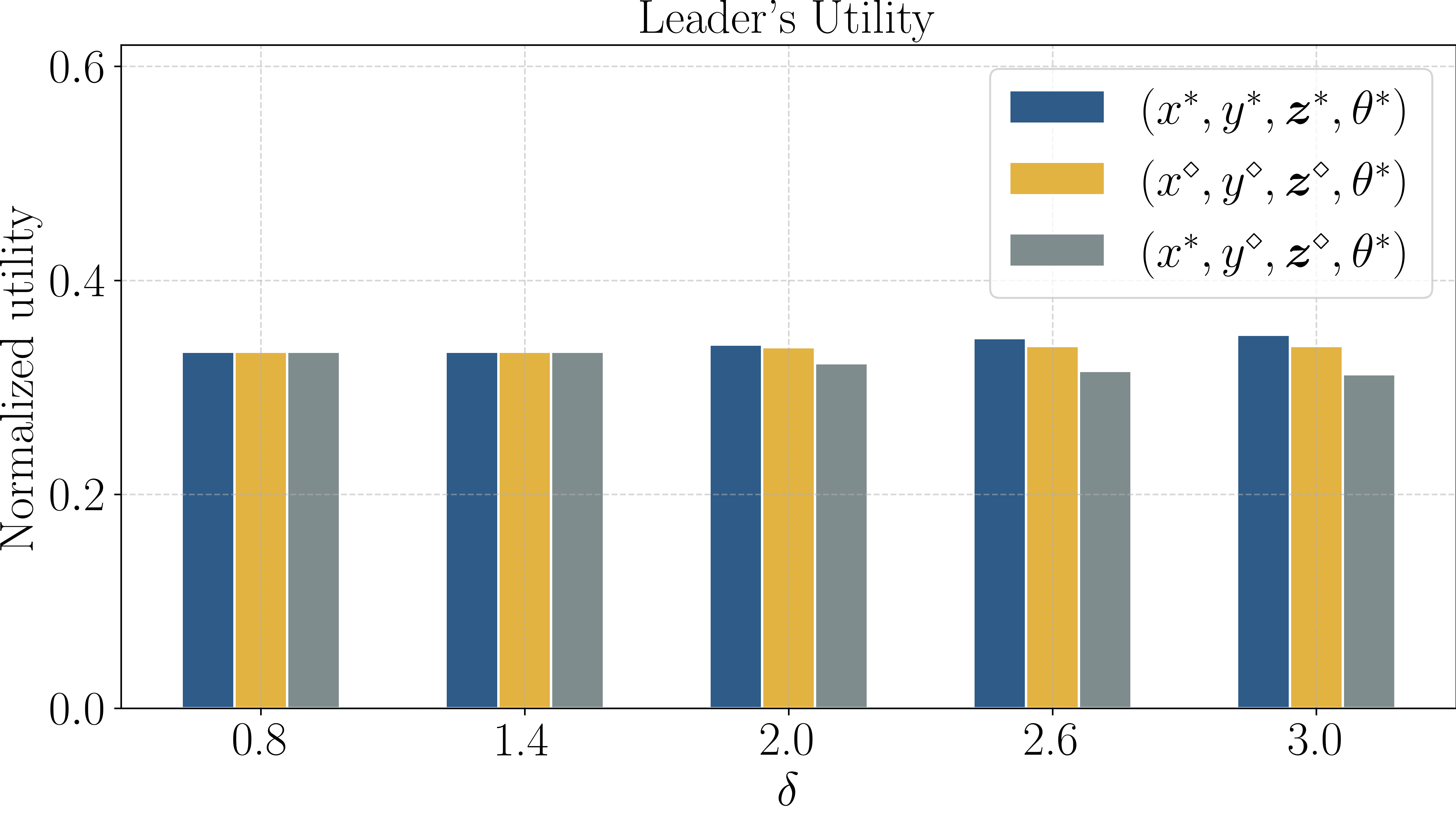}
    \caption{Insider takes DSE strategies v.s. HNE strategies.}
    \label{fig:consistency_utility}
\end{figure}

As the number of followers in our model increases, the convergence time of our algorithm does not grow exponentially. The coupling matrix $G=[g_{ij}]_{i,j=1}^n$ is randomly generated to characterize the mutual interference among attackers. Tab. 1 shows the convergence time of our algorithm under varying numbers of agents.

\begin{table}[htbp]
\centering
\caption{Scalability and Computation Time Analysis}
\begin{tabular}{@{}lccccc@{}}
\toprule
\textbf{Number of followers ($N$)} & 50 & 100 & 150 & 200 & 250 \\ \midrule
\textbf{Execution Time (sec)}      & 0.1418 & 0.1804 & 0.3414 & 1.3611 & 1.8228 \\ \bottomrule
\end{tabular}
\end{table}

\section{Conclusion}
This paper investigated a three-party game involving an insider, where the leader maximizes its utility through active deception. We established a unified framework to analyze the DSE and derived necessary and sufficient conditions for its consistency with the HNE. This analysis provides a theoretical basis for designing robust deception signals. To address the computational challenges of non-smooth and set-valued BR mappings, we proposed a scalable hyper-gradient-based algorithm. This method guarantees convergence to a WDSE or SDSE, and relaxes to an $\epsilon$-WDSE
when exact BR mappings are unattainable or a WDSE does not exist. Furthermore, we validated our framework in practical scenarios, including secure wireless communication and defense against insider-assisted false data injection attacks. 

Future research will focus on extending both the theoretical analysis and the algorithmic framework to broader settings, particularly considering cases where players face polyhedral strategy constraints, and the deception parameter set $\Theta$ is a compact convex set.
\appendices
\section{}\label{consistencesection}
\noindent\textbf{Proof of Theorem~\ref{consistence}}

\textbf{Sufficiency:}
Let $(x^*,y^*,\boldsymbol{z}^*,\theta^*)$ be a WDSE. If $T_1(x^*)=0$, then due to the concavity of $U_X$ in $x$, $x^*$ is a global maximizer of $U_X(x, y^*, \boldsymbol{z}^*, \theta_0)$. Thus, $x^*=x^{\diamond}$, and  consistency holds.
In the following, we consider \( T_1( x^* ) \neq 0 \).

Consider the case where Condition 3 holds, namely that $\tilde{U}_X(x,\theta^*)$ is differentiable at $x^*$. Since $\tilde{U}_X$ is piecewise differentiable, there exists a neighborhood $\delta(x^*)$  in which the derivative exists.
Assume $T_2(x^*,\theta^*)>0$. Since the condition requires $T_1(x)T_2(x,\theta^*)>0$, we must have $T_1(x^*)>0$.
If \( x^* \) is an interior point of \( \Omega_x \), the first-order necessary condition for WDSE would imply $T_2(x^*,\theta^*)=0$, contradicting the assumption that $T_2 > 0$.
If $x^*=x_{\min}$, given $T_2(x^*,\theta^*)>0$, there would exist a point $x > x^*$ such that $\tilde{U}_X(x,\theta^*) > \tilde{U}_X(x^*,\theta^*)$. This contradicts the definition of $x^*$ as the WDSE strategy.
Therefore, we must have $x^*=x_{\max}$. With $x^*=x_{\max}$ and $T_1(x^*)>0$, combined with the concavity of $U_X$, it follows that $U_X$ is increasing on $\Omega_x$ and attains its maximum at the boundary. Thus,
\begin{equation}
    x^* \in \underset{x \in \Omega_x}{\arg \max} \ U_X(x, y^*,\boldsymbol{z}^*, \theta_0),
\end{equation}
which implies $x^* = x^\diamond$. The proof for the case \( T_2(x^*, \theta^*) < 0 \) is analogous.


Consider the case where Condition 2 holds, in which  $\tilde{U}_X(x,\theta^*)$ is not differentiable at $x^*$. There exists a punctured neighborhood $\mathring{\delta}(x^*)$ where differentiability holds. Assume $T_1(x)>0$ for all $x\in\mathring{\delta}(x^*)$, which implies $T_2(x,\theta^*)>0$ for all $x\in\mathring{\delta}(x^*)$. If $x^*\neq x_{\max}$, there exists a point $x_0 > x^*$ within the neighborhood. By the Lagrange Mean Value Theorem, there exists $\xi \in (x^*, x_0)$ such that
\begin{equation}
    \tilde{U}_X(x_0,\theta^*)-\tilde{U}_X(x^*,\theta^*)=T_2(\xi,\theta^*)(x_0-x^*)>0.
\end{equation}
This implies $\tilde{U}_X(x_0,\theta^*) > \tilde{U}_X(x^*,\theta^*)$, contradicting the optimality of $x^*$. Thus, $x^*=x_{\max}$. Since $T_1(x) > 0$ near $x_{\max}$, $x^*$ maximizes $U_X$, $x^*=x^{\diamond}$. The proof for the case \( T_1(x) < 0 \) is analogous.

\textbf{Necessity:}
Let $(x^*,y^*,\boldsymbol{z}^*,\theta^*)$ be both a WDSE and an HNE (i.e., $x^*=x^{\diamond}$).

If $x^*$ is an interior point of $\Omega_x$, then $T_1(x^*)=0$ is required for $(x^*,y^*,\boldsymbol{z}^*,\theta^*)$ to be an HNE, satisfying Condition 1.
Next, consider the boundary case \( x^* = x_{\max} \) (the case \( x^* = x_{\min} \) is analogous).
Since $x^*=x_{\max}$ maximizes $U_X$, $T_1(x^*) \geq 0$. If $T_1(x^*)=0$, Condition 1 is met.
If $T_1(x^*) > 0$, by the continuity of the derivative, there exists a neighborhood $\delta_1(x^*)$ such that $T_1(x) > 0$ for all $x \in \delta_1(x^*) $.

 Moreover, by the definition of the WDSE, there exists a neighborhood $\mathring\delta_2(x^*)$ such that $\tilde{U}_X(x,\theta^*)<\tilde{U}_X(x^*,\theta^*)$. Because $\tilde{U}_X(x,\theta^*)$ is definable,  $T_2(x,\theta^*)$ is definable. Thus, there exists a punctured neighborhood $\mathring{\delta}_3(x^*)$ satisfying $T_2(x,\theta^*)>0$ and let $\delta_4(x^*)=\cap_{i=1}^3\mathring{\delta}_i(x^*)$. In this neighborhood, $T_1(x)T_2(x, \theta^*) > 0$, satisfying Condition 2 or 3 based on differentiability.  \hfill$\square$

\section{}\label{algorithmsection}
\noindent\textbf{Proof of Lemma~\ref{contraction} }

With $h(x,y,\boldsymbol{z})=\mathbb{P}_{\Omega_z}[\boldsymbol{z}-\gamma F(x,y,\boldsymbol{z})]$, 
\begin{equation}
    \begin{aligned}
        &\|h(x,y,\boldsymbol{z}_1)- h(x,y,\boldsymbol{z}_2)\|^2\\
        &\leq \|(\boldsymbol{z}_1-\gamma F(x,y,\boldsymbol{z}_1))-
        (\boldsymbol{z}_2-\gamma F(x,y,\boldsymbol{z}_2))\|^2 \\
        &=\|\boldsymbol{z}_1-\boldsymbol{z}_2\|^2+
        \gamma^2\|F(x,y,\boldsymbol{z}_1)-F(x,y,\boldsymbol{z}_2)\|^2 \\
        &-2\gamma(\boldsymbol{z}_1-\boldsymbol{z}_2)^\top(F(x,y,\boldsymbol{z}_1)-F(x,y,\boldsymbol{z}_2))
         \\
        &\leq (1-2\gamma \mu +\gamma^2 \kappa^2) \|\boldsymbol{z}_1-\boldsymbol{z}_2\|^2 \\
    \end{aligned}
\end{equation}

    Therefore, for any $\gamma<\frac{2\mu}{\kappa^2}$, $\eta=\sqrt{1-2\gamma \mu +\gamma^2 \kappa^2}<1$.
    Thus, $h(x,y,\boldsymbol{z})$ is a contraction mapping about $\boldsymbol{z}$, and the sequence generated by \eqref{phi_2} converges to the unique fixed point of $h(x,y,\boldsymbol{z})$ linearly with rate $\eta$. \hfill$\square$

\noindent \textbf{Proof of Lemma~\ref{offline_lemma}}
 
Since $h(x,\phi_1(x),\cdot)$ is differentiable at $\phi_2(x)$, and $h(x,\phi_1(x),\cdot)$ is a contraction mapping
 with contraction constant $\eta$, 
$\|\mathrm{J}_3h(x,\phi_1(x),\phi_2(x))\|\leq \eta<1.$
 Notice that $\phi_2(x)$ is the unique
 solution of the equation $\boldsymbol{z}=h(x,\phi_1(x),\boldsymbol{z})$. By the implicit function theorem, $\phi_2(x)$
is continuously differentiable at $x$ and
\begin{equation}
     \mathrm{J}\phi_2(x)=
\left(I-\mathrm{J}_3h(x,\phi_1(x),\phi_2(x))\right)^{-1}\mathrm{J}_1h(x,\phi_1(x),\phi_2(x)).
\end{equation}
Clearly, $\|\mathrm{J}_3h(x,\phi_1(x),\phi_2(x))\|<1$  implies that \eqref{fixed_point} is contractive and converges to the unique fixed point $\mathrm{J}\phi_2(x)$. \hfill$\square$

\noindent \textbf{Proof of Lemma~\ref{online_lemma}}
 
Since $|\mathcal{P}(x)|=1$,  $\omega(x,\phi_1(x),\phi_2(x))\in \text{int}\, A_i$.
Thus, there exists $\epsilon>0$ such that $B(\omega(x,\phi_1(x),\phi_2(x)),\epsilon)\subset \text{int}\, A_i$.
Hence, by continuity of $\omega(x,\phi_1(x),\cdot)$, there exists $\delta>0$ such that $\boldsymbol{z}\in B(\phi_2(x),\delta)$, which implies $\omega(x,\phi_1(x),\boldsymbol{z})\in B(\omega(x,\phi_1(x),\phi_2(x)),\epsilon)\subset \text{int}\, A_i$.
Since the sequence $\{\tilde{\boldsymbol{z}}^\ell\}_{\ell=0}^\infty$ converges to $\phi_2(x)$, there exists $\bar{\ell}\in \mathbb{N}$ such that for all $\ell\geq \bar{\ell}$,  $\tilde{\boldsymbol{z}}^\ell\in B(\phi_2(x),\delta)$, which implies $\omega(x,\phi_1(x),\tilde{\boldsymbol{z}}^\ell)\in \text{int} A_i$. Then for $\ell\geq \bar{\ell}$, 
\begin{equation}
\begin{aligned}
    &\|\mathrm{J}_1h(x, \phi_1(x), \tilde{\boldsymbol{z}}^\ell) - \mathrm{J}_1h(x, \phi_1(x), \phi_2(x))\|\\
    \leq &\gamma\|\mathrm{J}{\mathbb{P}_{\Omega_z}}[\omega(x,\phi_1(x),\phi_2(x))]\|\ \|\mathrm{J}_1 F(x,\phi_1(x),\tilde{\boldsymbol{z}}^\ell)\\
    -&\mathrm{J}_1 F(x,\phi_1(x),\phi_2(x))\|\\
    \leq &\gamma L_{F}\|\tilde{\boldsymbol{z}}^\ell - \phi_2(x)\|.
\end{aligned}
\end{equation}
    We also have
    \begin{equation}
\begin{aligned}
    &\|\mathrm{J}_3h(x, \phi_1(x), \tilde{\boldsymbol{z}}^\ell) - \mathrm{J}_3h(x, \phi_1(x), \phi_2(x))\|\\
    =&\|\mathrm{J}{\mathbb{P}_{\Omega_z}}[\omega(x,\phi_1(x),\tilde{\boldsymbol{z}}^\ell)](I-\gamma\mathrm{J}_3 F(x,\phi_1(x),\tilde{\boldsymbol{z}}^\ell))\\
    &-\mathrm{J}{\mathbb{P}_{\Omega_z}}[\omega(x,\phi_1(x),\phi_2(x))](I-\gamma\mathrm{J}_3 F(x,\phi_1(x),\phi_2(x)))\|\\
    \leq & \gamma L_{F}\|\tilde{\boldsymbol{z}}^\ell - \phi_2(x)\|   .
\end{aligned}
\end{equation}

By the triangle inequality,
\begin{equation}
    \|\tilde{\boldsymbol{s}}^{\ell+1}-\mathrm{J}\phi_2(x)\|\leq \|\tilde{\boldsymbol{s}}^{\ell+1}-\hat{s}^{\ell+1}\|+\|\hat{s}^{\ell+1}-\mathrm{J}\phi_2(x)\|
\end{equation}

 For the sake of brevity, take $L^{\ell}=\mathrm{J}_1 h(x,\phi_1(x),\tilde{\boldsymbol{z}}^{\ell})$,
$L^{*}=\mathrm{J}_1 h(x,\phi_1(x),\phi_2(x))$,
$M^{\ell}=\mathrm{J}_3h(x,\phi_1(x),\tilde{\boldsymbol{z}}^{\ell})$ and
$M^{*}=\mathrm{J}_3 h(x,\phi_1(x),\phi_2(x))$. 
For the first term, we have
\begin{equation}\label{off-on_1}
\begin{aligned}
&\Vert \tilde{s}^{\ell+1} - \hat{s}^{\ell+1} \Vert = \Vert M^{\ell}\tilde{s}^{\ell} + L^{\ell} - M^{\ast}\hat{s}^{\ell} - L^{\ast} \Vert \\
&{\leq} \Vert M^{\ell} \Vert \Vert \tilde{s}^{\ell} - \hat{s}^{\ell} \Vert + \Vert M^{\ell} - M^{\ast} \Vert \Vert \hat{s}^{\ell} \Vert + \Vert L^{\ell} - L^{\ast} \Vert \\
&{\leq} \eta \Vert \tilde{s}^{\ell} - \hat{s}^{\ell} \Vert + \gamma L_{F} \Vert \tilde{\boldsymbol{z}}^{\ell} - \phi_2(x) \Vert \Vert \hat{s}^{\ell} \Vert 
 + \gamma L_{F} \Vert \tilde{\boldsymbol{z}}^{\ell} - \phi_2(x) \Vert \\
&{\leq} \frac{\gamma L_{F} \Vert \hat{s}^{\ell} \Vert + \gamma L_{F}}{1 - \eta} \Vert \tilde{\boldsymbol{z}}^{\ell+1} - \tilde{\boldsymbol{z}}^{\ell} \Vert + \eta \Vert \tilde{s}^{\ell} - \hat{s}^{\ell} \Vert \quad 
\end{aligned}
\end{equation}
Since $\|\hat{s}^\ell\|$ is generated by a contraction mapping \eqref{fixed_point}, there exists a constant $B>0$ such that $\|\hat{s}^\ell\|\leq B$ for all $\ell\in \mathbb{N}$.
Take $B_{ps}=\frac{\gamma L_{F} B + \gamma L_{F}}{1-\eta}$.
Then
\begin{equation}
    \|\tilde{s}^{\ell+1}-\hat{s}^{\ell+1}\|\leq B_{ps}\sum_{j=0}^{\ell}\eta^{\ell-j}\|\tilde{\boldsymbol{z}}^{j+1}-\tilde{\boldsymbol{z}}^j\|
\end{equation}
For the second term, by contractiveness, we have
\begin{equation}\label{off-on_2}
    \|\hat{s}^{\ell}-\mathrm{J}\phi_2(x)\|\leq \eta^{\ell}\|\hat{s}^0-\mathrm{J}\phi_2(x)\|
\end{equation}
Recall that $\phi_2(x)$ is locally Lipschitz continuous on $\Omega_{x,i}$ and
$\|\mathrm{J}\phi_2(x)\|\leq L_S$, Thus, there exists a constant $B_{ns}$ such that $\|\hat{s}^0-\mathrm{J}\phi_2(x)\|\leq B_{ns}$.
Summarizing \eqref{off-on_1} and \eqref{off-on_2}, we obtain
\begin{equation}
    \Vert \tilde{\boldsymbol{s}}^{\ell} - \mathrm{J}\phi_2(x) \Vert \leq B_{\text{ps}} \sum_{j=0}^{\ell-1} \eta^{\ell-1-j} \Vert \tilde{\boldsymbol{z}}^{j+1} - \tilde{\boldsymbol{z}}^{j} \Vert + B_{\text{ns}} \eta^{\ell}.
\end{equation}
Note that 
    $\|\tilde{z}^{\ell+1}-\tilde{z}^\ell\|\leq \eta^\ell\|\tilde{z}^1-\tilde{z}^0\|$
 implies that
\begin{equation}
\begin{aligned}
\Vert \tilde{s}^{\ell} - \mathrm{J}\phi_2(x) \Vert \leq 
B_{\text{ps}} \Vert \tilde{\boldsymbol{z}}^{1} - \tilde{\boldsymbol{z}}^{0} \Vert 
\sum_{j=0}^{\ell-1} \eta^{\ell-1-j} \eta^{j} + B_{\text{ns}}\eta^{\ell} \\
= B_{\text{ps}} \Vert \tilde{\boldsymbol{z}}^{1} - \tilde{\boldsymbol{z}}^{0} \Vert 
\eta^{\ell-1}\ell + B_{\text{ns}}\eta^{\ell}.
\end{aligned}
\end{equation}
Since \(0<\eta<1\), as \(\ell\to\infty\), we have \(\|\tilde{s}^{\ell}-\mathrm{J}\phi_2(x)\|\to 0\). \hfill$\square$

\noindent\textbf{Proof of Lemma~\ref{definable}}

Since locally Lipschitz definable mappings are path differentiable \cite{bolte2021conservative}, the Clarke Jacobian of a Lipschitz definable mapping is a conservative Jacobian. 
Recalling that $\phi_2(x)=\text{BR}_{\boldsymbol{Z}}(x,\text{BR}_Y(x),\theta)$ is determined by
$\boldsymbol{z}-h(x,\phi_1(x),\boldsymbol{z})=0$,
we define
\[
G(x,\boldsymbol{z}) := \boldsymbol{z}-h(x,\phi_1(x),\boldsymbol{z}).
\]

Since both $F$ and $P_{\Omega_{\boldsymbol{z}}}$ are locally Lipschitz and definable, the mapping $G(x,\boldsymbol{z})$ is also locally Lipschitz and definable. Moreover, because
$I-\mathcal{J}_{3}h(x,\phi_1(x),\phi_2(x))$
is invertible, it follows from the Lipschitz definable implicit function theorem \cite{bolte2021nonsmooth} that $\phi_2(x)$ is definable on $\Omega_{x,i}$. 
Since   definability is preserved by function composition,  $\hat{U}_X(x)=U_X(x,\phi_1(x),\phi_2(x))$ is definable. \hfill$\square$

\noindent\textbf{Proof of Lemma~\ref{sumable}}


Since $\hat{U}_X$ is differentiable at $x^k$,
\begin{equation}
\begin{aligned}
&\|e^{k}\|
\le \|\nabla \hat{U}_X^{k}-\xi^{k}\|\\
&=\|\nabla_1 U_X(x^k,y^{k+1},\boldsymbol{z}^{k+1})-
\nabla_1 U_X(x^k,y^{k+1},\phi_2(x))\\
&+(s^{k+1})^\top \nabla_3 U_X(x^k,y^{k+1},\boldsymbol{z}^{k+1})\\&-(\mathrm{J}\phi_2(x^k))^\top \nabla_3 U_X(x^k,y^{k+1},\phi_2(x))\|\\
&\le B_{ey}\|\boldsymbol{z}^{k}-\phi_{2}(x^{k})\|
+ B_{es}\|s^{k}-\mathrm{J}\phi_{2}(x^{k})\|,
\end{aligned}
\end{equation}
where $B_{ey}, B_{es}$ follow from the Lipschitz continuity of $\nabla{U}_X$ and  $\phi_2(x)$.
    Based on Algorithm 2's termination condition, at step $k$, 
\begin{equation}
    \max \{(\eta)^\ell,\ \sum_{j=0}^\ell \eta^{\ell-j}\|\tilde{\boldsymbol{z}}^{j+1}(k) -\tilde{\boldsymbol{z}}^{j}(k) \| \}\leq \sigma^k
\end{equation}
Since $h$ is a contraction mapping with the constant $\eta$,
$    \|\tilde{\boldsymbol{z}}^\ell-\phi_2(x)\|\leq \frac{1}{1-\eta}\|\tilde{\boldsymbol{z}}^{\ell+1}-\tilde{\boldsymbol{z}}^{\ell}\|$.
Thus, 
\begin{align*}
    \|e^k\|&\leq B_{ey}\|\boldsymbol{z}^k-\phi_2(x^k)\|
    +B_{es}\|s^k-\mathrm{J}\phi_2(x^k)\|\\
    &\leq \frac{B_{ey}}{1-\eta}\sigma^k+B_{es}B_{ps}\sigma^k+B_{es}B_{ns}\sigma^k
\end{align*}
Because $\sum \alpha^k\sigma^k< \infty$, $\sum \alpha^k\|e^k\|<\infty $.  \hfill$\square$

\noindent
\textbf{Proof of Theorem~\ref{convergence}}

We let $G(x) := -\mathcal{J}\hat{U}_X(x)-N_{{X}}(x)$ and observe that $0 \in G(x)$ implies that $x$ is a composite critical point. To prove convergence, we will invoke \cite{davis2018stochasticsubgradientmethodconverges}.
\begin{enumerate}
\item All limit points of $\{x^k\}_{k \in \mathbb{N}}$ lie in $\mathcal{X}$.
\item The iterates are bounded, i.e., $\sup_{k \in \mathbb{N}}\|x^k\|<\infty$ and $\sup_{k \in \mathbb{N}}\|\xi^k\|<\infty$.
\item The sequence $\{\alpha^k\}_{k \in \mathbb{N}}$ is nonnegative, nonsummable, and square-summable.
\item The weighted noise sequence is convergent: $\sum_{k=0}^{\infty} \alpha^k e^k \to v$ for some $v \in \mathbb{R}^m$.
\item For any unbounded increasing sequence $\{k_j\} \subseteq \mathbb{N}$ such that $x^{k_j}$ converges to some point $\bar{x}$,
$$
\lim_{n \to \infty} \text{dist}\left(\frac{1}{n} \sum_{j=1}^{n} \xi_{k_j}; G(\bar{x})\right)=0.
$$
\item There exists a continuous function $\phi: \mathbb{R}^m \to \mathbb{R}$, which is bounded from below, and such that 

 , for a dense set of values $r \in \mathbb{R}$, the intersection $\phi^{-1}(r) \cap G^{-1}(0)$ is empty , and moreover, when $z: \mathbb{R}_{\geq 0} \to \mathbb{R}^m$ is a trajectory of the differential inclusion $\dot{z}(t) \in G(z(t))$ and $0 \notin G(z(0))$, there exists a $T > 0$ satisfying
$$
\phi(z(T)) < \sup_{t \in [0,T]} \phi(z(t)) \leq \phi(z(0)).
$$
\end{enumerate}
Condition 1 is obviously met. For condition 2, we have
\begin{equation}
    \|\xi^k\|=\frac{1}{\alpha^k}\|\mathbb{P}_{\Omega_x}(x^k)-\mathbb{P}_{\Omega_x}(x^k-\alpha^k\zeta^k)\|
    \leq \|\zeta^k\|
\end{equation}
where $\zeta^k\in\mathcal{J}\hat{U}_x$,   thus,
$\sup \|\xi^k\|< \infty$. Further,
Condition 3 holds by design, while Condition 4 is shown
in Lemma~\ref{sumable}.

To show that Condition 5 is satisﬁed, we ﬁrst note that $\mathcal{J}\phi_2(x)$ is convex  \cite{bolte2021conservative}, 
hence $G(\cdot)$ is convex-valued. Therefore,
\begin{equation}
    \text{dist}\left(\frac{1}{n} \sum_{j=1}^{n} \xi_{k_j}, G(\bar{x})\right) \leq \frac{1}{n} \sum_{j=1}^{n} \text{dist}\left(\xi_{k_j}, G(\bar{x})\right).
\end{equation}
Define $w^{k_j}=\mathbb{P}_{\Omega_{x}}[x^{k_j}+\alpha^k\zeta^{k_j}]$, and then 
\begin{equation}
w^{k_j}=\underset{w}{\mathrm{argmin}}\ 
    \mathcal{I}_{\Omega_x}(w)+\|w-\alpha^{k_j}\zeta^{k_j}\|^2
\end{equation}
According to Fermat's rule and  $\partial \mathcal{I}_{\Omega_x} = N_{X}$,
\begin{equation}
\begin{aligned}
0 \in N_X(w^{k_j}) + \frac{1}{\alpha^{k_j}}(w^{k_j} - x^{k_j} + \alpha^{k_j}\zeta^{k_j}) \Rightarrow \\
-\frac{1}{\alpha^{k_j}}(x^{k_j} - w^{k_j}) \in -N_X(w^{k_j}) - \zeta^{k_j}
\end{aligned}
\end{equation}
Since $\Omega_{x,i}$ is compact and $\mathcal{J}\hat{U}_X$, ${N}_{X}$ are outer
continuous,
\begin{equation}
    \xi^{k_j} \to N_{X}(\bar{x}) + \bar{\xi} \in G(\bar{x})
\end{equation}
Consequently,  Condition 5 follows. 

Then we utilize $\hat{U}_X(x)$ as the Lyapunov function $\phi$ and recall that $\hat{U}_X$ is definable by virtue of Lemma~\ref{definable}. Thus, $\hat{U}_X$ admits a Whitney $C^1$ stratification.
Thus, Condition 6 follows exactly from the arguments
in the proof of \cite{davis2018stochasticsubgradientmethodconverges}.

By Assumption~\ref{concave}, Algorithm 1 converges to the maximum of $\hat{U}_X$ on $\Omega_{x,i}$ for a fixed $\theta$. Consequently, by executing the algorithm across all intervals and comparing the resulting utilities with those at the roots of $f_2(x, \theta)$, we determine the leader's optimal strategy for a given $\theta$. Finally, iterating over $\theta$ yields the optimal deception parameter $\theta^*$ and the corresponding strategy $x^*$.

For the WDSE, we adopt the inf form of \eqref{eq:zero_utility}. The convergence analysis for the SDSE is similar to that for the WDSE. We only need to replace the inf form of \eqref{eq:zero_utility} with the sup form in the proof. \hfill$\square$


\section{} \label{epsilonsection}
\noindent\textbf{Proof of Theorem~\ref{theorem4.2}}

For a fixed parameter $\theta$,  let the zeros of $f_2(x,\theta)$ be  $\{x_1,x_2,\ldots,x_{q_\theta}\}$. In the absence of analytical solutions for these zeros, we can instead determine closed intervals $I_j$ such that $x_j\in I_j$.
Then \( \Omega_x = \left( \bigcup_i \tilde{\Omega}_{x,i} \right) \bigcup \left( \bigcup_j I_j \right) \), where \( \tilde{\Omega}_{x,i} \) denotes the closed interval formed by the right endpoint of \( I_i \) and the left endpoint of \( I_{i+1} \).
Define 

\[
{U}^{\dag} = \max \left( \left\{ \max_{x \in \tilde{\Omega}_{x,i}} \hat{U}_x(x) \right\}_{ i}, \left\{  g(x) | x\in I_j \right\}_{ j} \right),
\]
where \( \hat{U}_X(x) \) is the leader’s utility under a fixed sign of \( f_2(x,\theta) \), and \( g(x) = \inf\limits_y U_X(x, y, BR_{\boldsymbol{Z}}(x,y,\theta)) \) represents the leader’s conservative utility over the uncertain regions  \( I_j \).

Let $ U^*=\sup\limits_{x\in\Omega_x}\inf\limits_{y\in\Omega_y} U_X(x,y,\text{BR}_{\boldsymbol{Z}}(x,y,\theta)) $ denote the leader's optimal utility.  
Recall that  $\text{BR}_{\boldsymbol{Z}}(x, y)$ is a fixed point of the mapping $h(z) = \mathbb{P}_{\Omega_{\boldsymbol{z}}}(z - \gamma F(x, y, z))$. Let $z_1 = \text{BR}_{\boldsymbol{Z}}(x_1, y)$ and $z_2 = \text{BR}_{\boldsymbol{Z}}(x_2, y)$. Using the non-expansiveness of the projection operator $\mathbb{P}_{\Omega_{\boldsymbol{z}}}$, 
\begin{equation}
\begin{aligned}
    &\|z_1 - z_2\| \\
    &= \|\mathbb{P}_{\Omega_{\boldsymbol{z}}}(z_1 - \gamma F(x_1, y, z_1)) - \mathbb{P}_{\Omega_{\boldsymbol{z}}}(z_2 - \gamma F(x_2, y, z_2))\| \\
    &\leq \|(z_1 - \gamma F(x_1, y, z_1)) - (z_2 - \gamma F(x_2, y, z_2))\| \\
    &\leq \|z_1 - z_2 - \gamma(F(x_1, y, z_1) - F(x_1, y, z_2))\| \\
    &\quad + \gamma \|F(x_1, y, z_2) - F(x_2, y, z_2)\|.
\end{aligned}
\end{equation}
Due to the contraction  of the gradient descent step with respect to $z$  and the Lipschitz continuity of $F$ with respect to $x$, we obtain
\begin{equation}
    \|z_1 - z_2\| \leq \eta \|z_1 - z_2\| + \gamma L_x \|x_1 - x_2\|.
\end{equation}
Rearranging the terms yields
\begin{equation} \label{eq:BR_z_Lip}
    \|\text{BR}_{\boldsymbol{Z}}(x_1, y) - \text{BR}_{\boldsymbol{Z}}(x_2, y)\| \leq \frac{L_x \gamma}{1 - \eta} \|x_1 - x_2\|.
\end{equation}
Thus, $\text{BR}_{\boldsymbol{Z}}$ is Lipschitz continuous with respect to $x$.

Define the intermediate utility function $\beta(x, y) = U_X(x, y, \text{BR}_{\boldsymbol{Z}}(x, y, \theta))$. Clearly, 
\begin{equation}
\begin{aligned}
    &\|\beta(x_1, y) - \beta(x_2, y)\| \\
    &= \|U_X(x_1, y, \text{BR}_{\boldsymbol{Z}}(x_1, y)) - U_X(x_2, y, \text{BR}_{\boldsymbol{Z}}(x_2, y))\| \\
    &\leq \|U_X(x_1, y, \text{BR}_{\boldsymbol{Z}}(x_1, y)) - U_X(x_2, y, \text{BR}_{\boldsymbol{Z}}(x_1, y))\| \\
    &\quad + \|U_X(x_2, y, \text{BR}_{\boldsymbol{Z}}(x_1, y)) - U_X(x_2, y, \text{BR}_{\boldsymbol{Z}}(x_2, y))\|.
\end{aligned}
\end{equation}
Since $U_X$ is continuously differentiable on the compact set, let $L_x$ and $L_z$ be its Lipschitz constants with respect to $x$ and $z$, respectively. Substituting \eqref{eq:BR_z_Lip} yields
\begin{equation}
    \|\beta(x_1, y) - \beta(x_2, y)\| \leq \left(L_x + L_z \frac{L_x \gamma}{1 - \eta}\right) \|x_1 - x_2\|.
\end{equation}
Let $L_2 = L_x + L_z \frac{L_x \gamma}{1 - \eta}$. Then $\beta(x, y)$ is $L_2$-Lipschitz in $x$.

Now, consider the worst-case utility $g(x) = \inf\limits_{y \in \Omega_y} \beta(x, y)$.
\begin{equation}
\begin{aligned}
    g(x_1) - g(x_2) &= \inf_{y \in \Omega_y} \beta(x_1, y) - \inf_{y \in \Omega_y} \beta(x_2, y) \\
    &\leq \beta(x_1, y^*(x_2)) - \beta(x_2, y^*(x_2)) \\
    &\leq L_2 \|x_1 - x_2\|.
\end{aligned}
\end{equation}
By swapping $x_1$ and $x_2$, $|g(x_1) - g(x_2)| \leq L_2 \|x_1 - x_2\|$. Note $\hat{U}_X(x)$ on intervals $\Omega_{x,i}$ is Lipschitz continuous with a constant $L_3$. Let $L = \max(L_2, L_3)$ be the global Lipschitz constant.

The numerical procedure approximates the true optimum $U^*$ with $U^\dagger$ by sampling over the certainty interval $\tilde{\Omega}_{x,i}$ and uncertainty intervals $I_j$.

For certainty intervals $\Omega_{x,i}$, let $\mathcal{M}_i = \max_{x \in \Omega_{x,i}} \hat{U}_X(x)$ and $\tilde{\mathcal{M}}_i = \max_{x \in \tilde{\Omega}_{x,i}} \hat{U}_X(x)$. Since any point in $\Omega_{x,i}$ is at distance at most $\delta$ from $\tilde{\Omega}_{x,i}$, 
\begin{equation}
    |\mathcal{M}_i - \tilde{\mathcal{M}}_i| \leq L \delta.
\end{equation}

For uncertainty intervals $I_j$, we compare the true minimal utility $g_j = g(x_j)$  with the numerical surrogate $\tilde{g}_j = \inf_{x \in I_j} g(x)$. Since $x_j \in I_j$ and $\lambda(I_j)<\delta$, 
\begin{equation}
    g_j - L\delta \leq g(x) \quad \forall x \in I_j \implies g_j - L\delta \leq \tilde{g}_j.
\end{equation}
Since $\tilde{g}_j \leq g_j$ by definition, $|g_j - \tilde{g}_j| \leq L \delta$.

Let $U^* = \max \{ \mathcal{M}_i, g_j \}$ and $U^\dagger = \max \{ \tilde{\mathcal{M}}_i, \tilde{g}_j \}$. Since every element in the numerical set is within $L\delta$ of its true counterpart, 
\begin{equation}\label{lemma 6}
    |U^* - U^\dagger| \leq L \delta.
\end{equation}

Let $(x^*, \theta^*)$ be the true WDSE strategy, and  $(x_{\theta}, \theta)$ be the strategy computed by our algorithm for a fixed parameter $\theta$.
From \eqref{lemma 6}, the error between the computed utility and the true utility is bounded by $L\delta$. By choosing the partition mesh size such that $L\delta \leq \epsilon$, 
\begin{equation}
    \hat{U}_X(x_{\theta}, \theta) \geq \sup_{x \in \Omega_x} \inf_{y \in \text{BR}_Y(x, \theta)} U_X(x, y, \text{BR}_{\boldsymbol{Z}}(x, y, \theta)) - \epsilon.
\end{equation}
Since our algorithm selects $(\theta^*_{\text{alg}}, x^*_{\text{alg}})$ to maximize this computed utility over $\Theta$, and the true optimum $(x^*, \theta^*)$ is a feasible candidate in this search, it holds that
\begin{equation}
    \hat{U}_X(x^*_{\text{alg}}, \theta^*_{\text{alg}}) \geq \hat{U}_X(x^*, \theta^*) - \epsilon.
\end{equation}
Thus, the computed strategy is an $\epsilon$-WDSE. \hfill$\square$

\ifCLASSOPTIONcaptionsoff
  \newpage
\fi

\bibliographystyle{IEEEtran}
\bibliography{myrefs}

\end{document}